\DeclareMathOperator{\sign}{sign}
\newcommand{\Mathematica}{\textit{Mathematica\textsuperscript{\resizebox{!}{0.8ex}{\textregistered}}}}
\def\8{\infty}
\def\oh{\sfrac{1}{2}}
\def\oq{\sfrac{1}{4}}
\newcommand*{\I}{\imath}%
\def\eps{\epsilon}
\def\dal{\partial_{\alpha}}
\def\dbe{\partial_{\beta}}
\def\dga{\partial_{\gamma}}
\def\const{\textit{ const }}
\def\undertext#1{\vtop{\hbox{#1}\kern 1pt \hrule}}
\def\abs#1{\left| #1\right|}
\def\VEV#1{\left\langle #1\right\rangle}
\def\tr{\hbox{tr}\,}
\def\dbyd#1#2{\frac{d#1}{d#2}}
\def\pp#1{\frac{\partial}{\partial#1}}
\def\pbyp#1#2{\frac{\partial#1}{\partial#2}}
\def\ff#1{\frac{\delta}{\delta#1}}
\def\fbyf#1#2{\frac{\delta#1}{\delta#2}}
\def\bea{\begin{eqnarray} && &&}
\def\eea{\end{eqnarray}}
\let\oldexp\exp
\renewcommand{\exp}[1]{\oldexp\left(#1\right)}
\def\IINT#1{\int_{#1}}
\def \Schr{Schr\"odinger}
\def\KO{Kolmogorov}
\def\RE{\textbf{Rey}}
\def\NS{Navier-Stokes}
\def\BS{Biot-Savart}
\def\val{v_{\alpha}}
\def\vbe{v_{\beta}}
\def\vga{v_{\gamma}}
\def\oga{\omega_{\gamma}}
\newcommand{\Mod}[1]{\ (\mathrm{mod}\ #1)}
\def\XXint#1#2#3{{\setbox0=\hbox{$#1{#2#3}{\int}$}
     \vcenter{\hbox{$#2#3$}}\kern-.5\wd0}}
\renewcommand{\Re}{\textbf{Re }}
\renewcommand{\Im}{\textbf{Im }}
\def\Binom#1#2{\dbinom{#1}{#2}}
\newcommand{\tpmod}[1]{{\@displayfalse\Mod{#1}}}
\DeclarePairedDelimiter\floor{\lfloor}{\rfloor}
\newcommand{\pctPDFRot}[2]{
\begin{figure}
    \centering
    \includegraphics[angle=-90,width=0.45\textwidth]{figures/#1.pdf}
    \caption{#2}
    \label{fig::#1}
\end{figure}
}
\newcommand{\pctPDF}[2]{
\begin{figure}
    \centering
    \includegraphics[width=0.45\textwidth]{figures/#1.pdf}
    \caption{#2}
    \label{fig::#1}
\end{figure}
}
\newcommand{\pctWPDF}[3]{
\begin{figure}
    \centering
    \includegraphics[width=#1\textwidth]{figures/#2.pdf}
    \caption{#3}
    \label{fig::#2}
\end{figure}
}
\def\@email#1#2{%
 \endgroup
 \patchcmd{\titleblock@produce}
  {\frontmatter@RRAPformat}
  {\frontmatter@RRAPformat{\produce@RRAP{*#1\href{mailto:#2}{#2}}}\frontmatter@RRAPformat}
  {}{}
}%
\newenvironment{dedication}
  {
   \vspace*{\stretch{1}}
   \itshape             
   \raggedleft          
  }
  {\par 
   \vspace{\stretch{3}} 
  }
\newtheorem{lemma}{Lemma}
\newtheorem*{theorem}{Quantum Trace Theorem}
\begin{document}

\preprint{APS/123-QED}

\title{Quantum Solution of Classical Turbulence.\\
Decaying Energy Spectrum}

\author{Alexander Migdal}
\affiliation{
\quad Department of Physics, New York University Abu Dhabi,\\
Saadiyat Island, 
            Abu Dhabi,\\
            PO Box 129188, 
           Abu Dhabi,
            United Arab Emirates
}%
\altaffiliation[Also at ]{IAS, Princeton, NJ starting September 2024.
}%
\begin{dedication}
   To dear friend and brilliant physicist, Sreeni--\\my constant source of inspiration and support.
  \end{dedication}
\date{\today}
\begin{abstract}
This paper presents a recent advancement that transforms the problem of decaying turbulence in the Navier-Stokes equations in $3+1$ dimensions into a Number Theory challenge: finding the statistical limit of the Euler ensemble. We redefine this ensemble as a Markov chain, establishing its equivalence to the quantum statistical theory of $N$ fermions on a ring, interacting with an external field associated with random fractions of $\pi$. Analyzing this theory in the turbulent limit, where $N \to \infty$ and $\nu \to 0$, we discover the solution as a complex trajectory (instanton) that acts as a saddle point in the path integral over the density of these fermions.

By computing the contribution of this instanton to the vorticity correlation function, we obtain an analytic formula for the observable energy spectrum—a complete solution of decaying turbulence derived entirely from first principles without the need for approximations or fitted dimensionless parameters. Our analysis reveals the full spectrum of critical indices in the velocity correlation function in coordinate space, determined by the poles of the Mellin transform, which we prove to be a meromorphic function. Real and complex poles are identified, with the complex poles reflecting dissipation and uniquely determined by the famous complex zeros of the Riemann zeta function.

Universal functions of the scaling variables supersede the traditional turbulent scaling laws (K41, Heisenberg, and multifractal). These functions for the energy spectrum, energy decay rate, and velocity correlation significantly deviate from power laws but closely match the results from grid turbulence experiments \cite{GridTurbulence_1966, Comte_Bellot_Corrsin_1971} and recent DNS data \cite{SreeniDecaying} within experimental error margins.
\end{abstract}

\keywords{Turbulence, Fractal,  Fixed Point, Velocity Circulation, Loop Equations, Euler Phi, Prime Numbers, Path Integral, Instanton, Markov Chain, Energy decay}

\maketitle

\pctPDF{VortexClustering}{The three-dimensional vortices in quantum turbulent flow from "Polanco, J.I., Müller, N.P. \& Krstulovic, G. Vortex clustering, polarisation and circulation intermittency in classical and quantum turbulence". Nat Commun 12, 7090 (2021). Licensed under a Creative Commons Attribution (CC BY) license. }

\section{Prologue}

Turbulence appears as an overwhelmingly complex phenomenon. As depicted in Fig.\ref{fig::VortexClustering} from a recent simulation \cite{Polanco2021}, vortex lines of various shapes and sizes are entangled much like spaghetti. This visual complexity raises the question: How can such complexity be described analytically? Yet, it also sparks hope for a simplified statistical description.

With its myriad interacting particles, molecular dynamics similarly presents an intricate challenge. However, despite its complexity, a straightforward statistical description emerges that grows increasingly precise with the escalating complexity of the dynamical system. Maxwell, Boltzmann, and Gibbs demonstrated that Newton's mechanics uniformly cover the energy surface over time, laying the groundwork for statistical mechanics—a robust theory, albeit sometimes computationally challenging, as in critical phenomena.

Why, then, should \NS{} turbulence be any different?

Regrettably, to date, no known analog of the Gibbs distribution exists for turbulent flows. Therefore, a foundational element of turbulence theory must be to devise a substitute for the Gibbs distribution.

Hopf initiated this exploration in 1952 (see the recent review in \cite{Hopf19}), formulating a functional equation that the probability distribution of the turbulent velocity field must satisfy. Through iterative application of this equation to the nonlinear term in the \NS{} equation, one can generate an expansion in inverse powers of viscosity. The core challenge of turbulence theory is solving the Hopf equation in the opposite limit of low viscosity.

This beautiful equation is mathematically as intricate as the vortex spaghetti depicted earlier. Such complexity places turbulence high within the hierarchy of physics theories, nestled between critical phenomena and the quark confinement problem.
\pctPDF{Dual-Turbulence}{The path to the microscopic theory.}

Our theory, initiated in 1993 (see Fig.\ref{fig::Dual-Turbulence} for the historical outline), proposes a simpler variant of the Hopf equation—the loop equation—which suffices to define the statistics.

The loop equation corresponds to the \Schr{} equation in loop space. This profound analogy is not a poetic metaphor but a precise mathematical equivalence with significant implications, such as quantum interference effects affecting the scaling laws of classical turbulence.

Using the loop equation, we have identified a new instance of duality between the strong coupling phase of one theory (a fluctuating velocity field in three dimensions) and the weak coupling phase of another (a one-dimensional quantum ring of Fermi particles).

This weak coupling limit can be analytically solved, providing explicit formulas for observables in decaying turbulence, such as the energy dissipation decay index \(n(t) = t \frac{\mathcal{E}(t)}{E(t)}\).

Experimental data from decaying grid turbulence (see Fig. \ref{fig::GridTurbulenceFit}) corroborate our prediction that \(n(\infty) = \frac{5}{4}\), within a 2\% experimental error margin. We anticipate that more precise future measurements will validate this prediction more accurately.

Recent DNS and experiments \cite{SreeniDecaying, GregXi2} compute the energy spectrum and the velocity correlation function,  matching our theory and challenging the Kolmogorov scaling. The accuracy is lower here due to large experimental errors. We found the way to reduce experimental errors by computing effective index for the second moment of velocity difference using numerical Fourier transform. This method dramatically improved the quality of the fit for the effective index in our theory, compared to traditional numerical differentiation of the energy spectrum.  
New large-scale experiments (both real and numerical) are welcome to verify our theory.

\section{Definitions and notations}
We use the units where the constant fluid density $\rho=1$. The 3D vectors and the dot products are denoted like this: $\vec v(\vec{r},t),  \vec{k} \cdot \vec{r} $. We also use the Einstein tensor notation with summation over repeated Greek indexes $ v_\alpha, r_\beta, \omega_{\alpha\beta} = e_{\alpha\beta\gamma} \dbe \vga$, and so on.  In these cases, these indexes run from $1$ to $d$, where $d$ is the dimension of space. In the remaining cases, we work only in 3D space. 
We always consider the Euclidean metric and Cartesian coordinates, so we make no distinction between upper and lower tensor and vector indexes. 
The parameterization of the loops is chosen to run from $0$ to $1$ with periodicity implied beyond these limits. The space loops $\vec{C}(\xi)$ are assumed to be continuous and periodic but not differentiable. In addition, the loop can have some extra periods, in which case this loop represents several separate loops. In particular, the same geometric loop $\vec{C}$ can be covered several times with $\vec{C}_n(\xi) = C(n \xi)$. The momentum loops $\vec{P}(\xi,t)$ depend on time, whereas the spacial loops are static. The momentum loops can have discontinuities $\Delta \vec{P}(\theta,t) = \vec{P}(\theta + 0,t) - \vec{P}(\theta - 0,t)$. The momentum loops are independent of the space loops, being the vector functions of $\xi, t$. The circulation $\Gamma =\oint_C d \vec{r} \cdot \vec v(\vec{r}, t) = \int_0^1 d \xi \vec{C}'(\xi)\cdot \vec v(\vec{C}( \xi), t) $. The whole theory, starting with circulation, is parametric invariant $\xi \Rightarrow f(\xi), f'(\xi) >0$. The operators are denoted like $\hat{a}, \hat{a}^\dagger, \hat{X}$.
We use three types of symbols for differentials.
$d$  is an ordinary differential of variable like $d \xi$ or vector variable like $d \vec{C}(\theta) = d \theta \vec{C}'(\theta)$. $D$ symbolizes a measure for a path integral $D\alpha$; it can be strictly defined by the limit of the multidimensional integral over the discrete values or as a limit of a multidimensional integral over all Fourier harmonics (this is cleaner). We do not need to specify the Gaussian measure, as it is well-known in physics and math. Finally, $\delta$ is a functional variation, and $\ff{}$ is a functional derivative like in $\ff{\phi(\xi)}\Lambda_N[\phi] = \fbyf{\Lambda_N[\phi]}{\phi(\xi)} $ with the $L_2$ norm in functional space.
We also use notations $\partial_t = \pp{t}, \; \partial_\beta = \pp{r_\beta}$ for the time derivative and for components of the spatial derivatives.
\newpage
\section{Summary}
Here are the main results reported in this paper.
\begin{itemize}
\item We review the theory of  \NS{} loop equation, its relation to the Hopf functional equation, and the representation of the loop functional in terms of momentum loop.
\item We present the solution of the loop equation in the inviscid limit of the three-dimensional \NS{} theory in terms of the Euler ensemble. This ensemble consists of a one-dimensional ring of Ising spins in an external field related to random fractions of $\pi$.
   \item
   The continuum limit of this solution, $N \to \infty$, corresponds to the inviscid limit of the decaying turbulence in the \NS{} equation. Effective turbulent viscosity is $\tilde{\nu} = \nu N^2 \to \const{}$. 
   \item
   We derived an analytic formula for energy spectrum and dissipation in finite system \eqref{EbyH}, \eqref{SbyH}, \eqref{answer} and investigated it in Appendix  \ref{RiemanZeros}.
   \item The energy spectrum decays asymptotically as $(\tilde{\nu})^{\sfrac{3}{2}}t^{-\sfrac{1}{2}}\kappa^{-\sfrac{7}{2}} $
   where $ \kappa = k \sqrt{\tilde{\nu} t} $. 
   \item The turbulent kinetic energy decays as $E(t) \propto t^{-\sfrac{5}{4}}$
   \item 
   Both effective indexes $n(t) = -\dbyd{\log E}{\log t}, \mu(\kappa) = \pbyp{\log E(k,t)}{\log k}$ are nontrivial functions of the logarithm of scale and time, approaching $n(\infty) =\sfrac{5}{4}, \mu(\infty) = -\sfrac{7}{2}$, (see Fig.\ref{fig::NPlot}, Fig.\ref{fig::MuIndex}). 
   \item The 1966 experimental values \cite{GridTurbulence_1966, Comte_Bellot_Corrsin_1971}, Fig.\ref{fig::GridTurbulenceFit} of $n(\infty) \approx 1.25 \pm 0.02 $ agree with our prediction.
   \item 
   The results of DNS for the energy spectrum in decaying turbulence (Fig. \ref{fig::DecayingSpectrum} of the review paper \cite{SreeniDecaying}) shows the energy spectrum decaying faster than  $k^{-\sfrac{5}{3}}$. The shape of a log-parabolic curve in DNS matches our prediction Fig.\ref{fig::Hk53}. 
   \item 
   In the section \ref{DNS}, we verify our prediction for the second velocity moment $\VEV{\Delta v^2}(r)$ by numerical Fourier transform of the raw spectrum data from \cite{SreeniDecaying, GDSM24}. 
   The K41 $\frac{2}{3}$ scaling law is ruled out by this DNS, but there is a match within the DNS errors of the scale-dependent effective index with our theory in a wide range of distances ( Fig.\ref{fig::MoveIndex14} ).
   \item 
   We computed the spectrum of singularity indexes $p_n$ in the product $\vec v(0,t)\cdot \vec v(r,t)\sim \sum C_n(t)r^{p_n}$, similar but different from the OPE in CFT. Some of these singularity indexes come in complex conjugate pairs related to the zeros of the Riemann $\zeta$ function \eqref{SpecPoles},\eqref{Epoles},\eqref{VVSpectrum}).
\end{itemize}

\pctPDFRot{GridTurbulenceFit}{The experimental data \cite{GridTurbulence_1966, Comte_Bellot_Corrsin_1971} for decaying turbulence behind the oscillating grid. Reproduced with permission from "Comte-Bellot G, Corrsin S. The use of a contraction to improve the isotropy of grid-generated turbulence". Journal of Fluid Mechanics. 1966; 25(4):657-682, © 1966 Cambridge University Press. The two lines correspond to log-log plots of $1/v_\perp^2, 1/v_\parallel^2$ for the flow behind the grid. The total inverse kinetic energy i$ K = 2/(v_\perp^2+v_\parallel^2)$, would have the  the mean slope $(1.24 + 1.27)/2 \approx 1.255 \pm 0.02$ in agreement with our prediction $n(\infty)=\frac{5}{4}$. This plot was shown by K.R.Sreenivasan at the ICTS in Dec'23 in Bangalore, India \cite{DecayingTalk} as the most reliable experimental data. Other measurements and simulations, at different conditions reviewed in \cite{SreeniDecaying}, provide mismatching data influenced by initial and boundary conditions. The only large-scale DNS \cite{SreeniDecaying, GDSM24} covering three time decades also matches our predictions.}
\pctPDF{InvEnergyPlot}{Green curve is the inverse energy $1/E(t)$ as a function of time. It slowly approaches from above its asymptotic law $1/E(t) \propto t^{\sfrac{5}{4}}$ shown in red.}

\pctPDFRot{DecayingSpectrum}{The log-log plot of deviation from K41 energy spectrum in decaying turbulence, taken from \cite{SreeniDecaying, GDSM24}, Panickacheril John John, Donzis Diego A. and Sreenivasan Katepalli R. 2022, Laws of turbulence decay from direct numerical simulationsPhil. Trans. R. Soc. A.38020210089, licensed under a Creative Commons Attribution (CC BY) license.
The observed curve approximately matches our theoretical curve in Fig.\ref{fig::Hk53}. The K41 spectrum would correspond to a horizontal line, a total mismatch.}
\pctPDF{Hk53}{Log-log plot of the universal function $H(\kappa) \kappa^{\sfrac{5}{3}}$. It starts growing, reaches the maximum, then turns down, asymptotically decaying as $\kappa^{-\sfrac{11}{6}}$. This qualitatively matches the DNS plot at Fig.\ref{fig::DecayingSpectrum}. A larger DNS data at higher Reynolds numbers would be required to verify our theory with more precision.}
\pctPDF{MuIndex}{The theoretical plot of effective spectral  index $ \mu(T) = k\partial_k \log  E(k, t) $ as a function of $T = k^2 \tilde{\nu} t $. It starts at $0$ and goes down to the limit $-\sfrac{7}{2}$.
The DNS plot Fig. \ref{fig::MuIndexDNS} starts higher, at $\mu = 2$, then goes down and stops at approximately $-1$ for accessible decay time $t$. The part from $-0.5$ to $-1$ qualitatively matches our curve, but clearly, more data at higher Reynolds numbers is required to verify this prediction of our theory.}

\pctPDFRot{MuIndexDNS}{The DNS plot of spectral index $ \mu(t) = k\partial_k \log  E(k, t) $ as a function of $\log t$  at fixed values of $k$, taken from \cite{SreeniDecaying}, Panickacheril John John, Donzis Diego A. and Sreenivasan Katepalli R. 2022, Laws of turbulence decay from direct numerical simulations. Phil. Trans. R. Soc. A.38020210089, licensed under a Creative Commons Attribution (CC BY) license. The curve goes from $ \mu = 2$ to $\mu \approx -1$, which partly overlaps with our theoretical curve in Fig. \ref{fig::MuIndex}. Larger dataset with higher Reynolds numbers is needed for a quantitative match with our theory.}
\newpage

\section{Introduction}

The original version of this paper was overloaded with formulas, contradicting the preferred 21st-century style. Modern researchers prefer to follow the flow of ideas, with heavy computations hidden under the hood. Eventually, it will become a job for AI to verify computations using \Mathematica{} and digest the results for busy readers.

Considering this, we moved all computations into a series of appendices, leaving only the general ideas, concepts, and results in the main body of the paper. This transformation provides distinct pathways for understanding and applying our study's results based on the reader's background and interests.

Portions of the upcoming discussion are borrowed from our first paper \cite{migdal2023exact} in this series. We included them to clarify the big picture of the theory of turbulence but significantly modified these sections to reflect our deepened understanding of this theory.

\begin{enumerate}
    \item \textbf{Introduction for Physicists.} The physics introduction discusses the potential correspondence between our theoretical developments and decaying turbulence as observed in real-world or numerical experiments. For physicists, this theory offers a solution to the Hopf functional equation for the statistical distribution of the velocity field in the unforced \NS{} equation. This distribution represents a much-needed analog of the Gibbs distribution for decaying turbulence. There are strong indications that our theory is relevant to one of the two universality classes observed in these experiments. 
    \item \textbf{Introduction for Mathematicians.}
    This section summarizes the mathematical framework behind the loop equation \cite{M93} and its solution \cite{migdal2023exact} in terms of the Euler ensemble. Addressed to mathematicians, this introduction allows those focusing on rigorous mathematical theory to bypass the more physically oriented discussions and delve directly into the Euler ensemble as a novel Number Theory set with conjectured connections to decaying turbulence.
    Pure mathematicians may want to prove, refine, or disprove the open mathematical problems and unproven conjectures left in this paper.
    \item \textbf{Guidance for Applied Mathematicians and Engineers.}
    Applied mathematicians and engineers, primarily interested in practical applications rather than abstract theoretical constructs, are directed to this document's section \ref{finitesystem}. Here, they will find final formulas \eqref{SbyH}, \eqref{EbyH}, and accompanying \Mathematica{}code \cite{MB43, MB44} that facilitate the computation of both the energy spectrum and dissipation rates. These formulas are compared with real and numerical experiments in section \ref{DNS}.
    \item \textbf{Notes for the Curious and Skeptical.}
    The fourth category of readers—those curious yet skeptical about applying quantum mechanics to solve complex problems in classical physics—might still harbor doubts after reading the main text of this paper. For them, we have dedicated the last section \ref{discussion}, which addresses some of their lingering questions and perhaps reassures their skepticism. These readers may want to dive into the Appendixes to learn the details of our theory after this discussion, hopefully eliminating their doubts.
\end{enumerate}

In the following sections, we explore these themes in depth, aiming to provide clarity and actionable insights for all readers, regardless of their expertise or interest.

\subsection{Physical Introduction: The Energy Flow and Random Vorticity Structures}

Decaying turbulence is an old topic, traditionally examined within a weak turbulence framework—utilizing a truncated perturbative expansion in inverse powers of viscosity $\nu$ in the forced Navier-Stokes equation. Various phenomenological models have also been aligned with experimental observations, as discussed in the recent review \cite{SreeniDecaying}.

However, the comprehensive turbulence theory requires solving the \NS{} equations in the strong coupling limit $\nu \to 0$, the direct opposite of weak coupling. The universality of strong turbulence, with or without random forcing, poses the initial question in constructing such a theory.

Direct Numerical Simulation (DNS) data for energy decay in turbulent flows, detailed in \cite{SreeniDecaying}, suggest a decay of the dissipation rate $E \sim t^{-n}$ with $n \approx 1.25$ or $1.60$ depending on the initial conditions (finite total momentum or zero total momentum but finite total angular momentum, see \cite{SreeniDecaying}). Thus, two universality classes of decaying turbulence have been identified.

It remains unclear which, if any, of the data in \cite{SreeniDecaying, GDSM24} reached the homogeneous isotropic turbulence limit corresponding to our regime. Moreover, stochastic forces added to the \NS{} equation in simulations might contaminate the natural decay of turbulence. These forces are intended to initiate and enhance the spontaneous stochasticity of turbulent flow. However, in our theory \cite{migdal2023exact}, this inherent stochasticity is related to a dual quantum system and is \textbf{discrete}.

The Gaussian forcing can distort these quantum stochastic phenomena by stirring the flow ubiquitously and constantly. When the forcing is switched off, allowing the turbulence to decay towards a universal stage, energy dissipation should occur inside vorticity structures deeply embedded in the flow by the pure turbulent dynamics we study.

The forthcoming calculation supports the above relationship between singular vorticity distribution and energy flow. In the pure \NS{} scenario, the energy balance reduces to energy dissipation by enstrophy within the bulk, counterbalanced by energy input from boundary forces (e.g., a large sphere encompassing the flow).

The general identity derived from the \NS{} equations by multiplying both sides by $\vec{v}$ and averaging over an ensemble of stochastic solutions is:
    \begin{eqnarray}
      &&\partial_t \VEV{E} +\int_V d^3 r \VEV{ \nu \vec \omega^2} =\nonumber\\
      &&-\int_V d^3 r \dbe \VEV{\vbe \left(p + \oh \val^2\right) + \nu \val ( \dbe \val - \dal \vbe)}
    \end{eqnarray}

Applying the Stokes theorem, the right side reduces to the flow through the boundary $\partial V$ of the integration region $V$. The left side represents the dissipation within this volume:
\begin{eqnarray}
    \partial_t \VEV{E} +\mathcal E_V = -\int_{\partial V} d \vec \sigma \cdot \VEV{\vec v  \left(p + \oh \val^2\right) +\nu \vec \omega \times \vec v }
\end{eqnarray}

This identity is valid for any volume, with the left side indicating viscous dissipation inside $V$ and the right side representing the energy flow through the boundary $\partial V$.

Should a finite collection of vortex structures exist within the bulk, expanding this volume to infinite sphere results in the $\vec \omega \times \vec v$ term disappearing, as no vorticity persists at infinity.

Additionally, the velocity dictated by the Biot-Savart law diminishes as $|\vec{r}|^{-3}$ at infinity, so only the $\vec v p$ term remains significant:
\begin{eqnarray}
    \partial_t \VEV{E} +\VEV{\mathcal E_V} \to -\int_{\partial V} d \vec \sigma \cdot \VEV{\vec v  p}
\end{eqnarray}

This representation of energy flow will remain finite even as the sphere expands if the pressure scales as $p \to - \vec{F} \cdot \vec{r}$, where $\vec{F}$ is the local force at any given point on a large sphere:
\begin{eqnarray}\label{boundaryPump}
    \partial_t \VEV{E} +\VEV{\mathcal E} = f_\alpha\lim_{R \to \infty} R^3\int_{S_2} n_\alpha n_\beta \VEV{ \vbe(R\vec n)}
\end{eqnarray}

What about the Kolmogorov energy flow? It persists within any finite volume surrounding the set of vortexes:
\begin{eqnarray}
   &&\partial_t \VEV{E} +\VEV{\mathcal E} = -\int_V d^3 r \VEV{\vbe \dbe p + \val \vbe\dbe \val} = \nonumber\\
   && -\int_V d^3 r \VEV{\val \vbe\dbe \val}
  -\int_{\partial V} d \vec \sigma \cdot \VEV{\vec v p}
\end{eqnarray}
The triple velocity term in the last equation describes the \KO{} energy flow inside the volume $V$, and the second term represents the energy flow through the boundary.

Without a finite force $\vec{F}$ acting on the boundary, such as with periodic boundary conditions, the boundary integral would be absent, and the \KO{} relation would be fully applicable:
\begin{eqnarray}
    \partial_t \VEV{E} +\VEV{\val \vbe\dbe \val} = -\VEV{\mathcal E}/V ;
\end{eqnarray}
This relation, alongside spatial symmetry properties in $\mathbb{R}_d$, leads to the \KO{} three-point correlation in a steady state $\partial_t \VEV{E} =0$:
    \begin{eqnarray}
    \label{K41vvv}
    &&\VEV{\val(\vec{r}_0) \vbe(\vec{r}_0) \vga(\vec{r} + \vec{r}_0)} =\nonumber\\
    &&\frac{\mathcal E }{(d-1)(d+2) V}
	\left(
	 \delta_{\alpha \gamma} r_{\beta} +
	 \delta_{\beta \gamma} r_{\alpha} -
	 \frac{2}{d}\delta_{\alpha \beta} r_{\gamma}
	\right);
\end{eqnarray}

In the conventional approach to the turbulence problem, periodic Gaussian random forces $\vec{F}(\vec{r},t)$ are added to the \NS{} equations in the conventional approach, based on time averaging:
    \begin{eqnarray}
   &&\VEV{\mathcal E_V} = -\int_V d^3 r \VEV{\vbe \dbe p  - \vbe f_\beta(\vec{r},t) + \val \vbe\dbe \val} = \nonumber\\
   &&\int_V d^3 r \VEV{\vbe f_\beta(\vec{r},t)}
\end{eqnarray}

As the force becomes uniformly distributed across space, we derive another definition with $\mathcal E = \vec{F} \cdot \vec{P}$, where $\vec{P} = \int_V d^3 r \vec v$ is the total momentum.

The phenomenon of turbulence we study exhibits a universal spontaneous stochasticity that does not depend on boundary conditions.

As long as energy flows from the boundaries, confined turbulence in the middle will dissipate this energy through singular vortex tubes. This spontaneous stochasticity results from the random distribution of these singular tubes within the volume of the velocity flow \cite{M22}. The dual picture from our recent theory \cite{migdal2023exact} represents these by random gaps in the momentum curve $\vec{P}(\theta)$, as we shall discuss in the following sections.

The relation between the energy pumping at the boundary and the distribution of vortex blobs in the bulk follows from the Biot-Savart integral:
\begin{eqnarray}\label{BS}
    \vec v(r)= -\vec \nabla \times \int d^3 r' \frac{\vec \omega(r')}{4\pi | r - r'|}
\end{eqnarray}
Generally, a gradient of harmonic potential is added to the Biot-Savart integral, dependent on the boundary conditions. We consider the velocity decaying at infinity, thus not adding such a term.

The net linear momentum $\VEV{\int d^3 r \vec v(0) \cdot \vec v(\vec{r})}$, generally, is not zero in our theory, as we impose no such restriction. This nonvanishing linear moment places our theory in the most general $k^2$ (or Saffman) class.

On a large sphere $\partial V$ with radius $R \to \infty$,
\begin{eqnarray}
   &&\lim_{R \to \infty} R^3 \vec v(R \vec n) \propto  \nonumber\\
   &&\frac{1}{4\pi}
   \sum\limits_{\text{blobs}}\;\int\limits_{\text{blob}} d^3 r' \vec \omega(r') \times 
   \left(\vec{r}'-<\vec{r}>_{\text{blob}}\right);
\end{eqnarray}
Here $<\vec{r}>_{\text{blob}}$ is the geometric center of each blob.
Substituting this into the identity \eqref{boundaryPump}, we directly relate the energy pumped by the forces at the boundary with the blob's dipole moments of vorticity.

No forcing inside the flow is needed for this energy pumping; the energy flow starts at the boundary and propagates to numerous singular vorticity blobs, where it is finally dissipated. The distribution of these vorticity blobs is all we need for the turbulence theory. The forcing is required only as a boundary condition at infinity.

Another critical comment: with the velocity correlations \textbf{growing} with distance by the approximate K41 law, even the forcing at the remote boundary would influence the potential part of velocity in bulk. This boundary influence makes the energy cascade picture non-universal; it may depend upon the statistics of the random forcing.

Two asymptotic regimes manifesting this non-universality were observed for the energy spectrum $E(k, t)$: one for initial spectrum $E(k,0) \propto k^2$  and another for $E(k,0) \propto k^4 $. The potential velocity part differs for these regimes, as the second one adds a constant velocity to the Biot-Savart integral to cancel the total momentum $\int d^3 r \vec v$. 
In the general case, it will be a harmonic potential flow with certain boundary conditions at infinity, with explicit \textbf{continuous} dependence of the boundary forces. The most general case is the $k^2$ class, which does not require any restrictions.

Only the statistics of the rotational part of velocity, i.e., vorticity, could reach some universal regime independent of the boundary conditions at infinity. Certain discrete universality classes could exist as it is common in critical phenomena.

Unlike the potential part of velocity, vorticity is localized in singular regions—tubes and sheets, filling the space, as observed in numerical simulations.
The potential part of velocity drops in the loop equations, and the remaining stochastic motion of the velocity circulation is equivalent to the vorticity statistics.
Therefore, our solutions \cite{migdal2023exact} of the loop equations \cite{M93, M23PR} describe the internal stochastization of the decaying turbulence by a dual discrete system.

\subsection{Mathematical Introduction. The loop equation and its solution}\label{mathintro}

We derived a functional equation for the so-called loop average or Wilson loop in turbulence in the early nineties. All the references to our previous works can be found in a recent review paper \cite{M23PR}.

The path to an exact solution by a dimensional reduction in this equation was proposed in the 1993 paper but has just been explored (see Fig.\ref{fig::Dual-Turbulence}).
At the time, we could not compare a theory with anything but crude measurements in physical and numerical experiments at modest Reynolds numbers.
All these experiments agreed with the K41 scaling, so the exotic equation based on unjustified methods of quantum field theory was premature.
The specific prediction of the Loop equation, namely the Area law, could not be verified in DNS at the time with existing computer power.

The situation has changed over the last decades. No alternative microscopic theory based on the \NS{} equation emerged, but our understanding of the strong turbulence phenomena grew significantly. 
On the other hand, the loop equations technology in the gauge theory also advanced over the last decades. The correspondence between the loop space functionals and the original vector fields was better understood, and various solutions to the gauge loop equations were found. 
In particular, the momentum loop equation was developed, similar to our momentum loop used below \cite{MLDMig86, SeqQuanM95}.
Recently, some numerical methods were found to solve loop equations beyond perturbation theory \cite{LoopEqBootstrap, KazakovLooqBootstrap, kazakovZheng2024}.
The loop dynamics was extended to quantum gravity, where it was used to study nonperturbative phenomena \cite{AshtekarVar, RovelliSmolin}.

All these old and new developments made loop equations a major nonperturbative approach to gauge field theory.
So, it is time to revive the hibernating theory of the loop equations in turbulence, where these equations are much simpler.
The latest DNS \cite{S19, S21, VortexGasCirculation20, QuantumCirculation21} with Reynolds numbers of tens of thousands revealed and quantified violations of the K41 scaling laws. These numerical experiments are in agreement with so-called multifractal scaling laws \cite{FP85}. 

Theoretically, we studied the loop equation in the confinement region (large circulation over large loop $C$) and justified the Area law suggested in '93 on heuristic arguments.
This law says that the tails of velocity circulation PDF in the confinement region are functions of the minimal area inside this loop.
 It was verified in DNS a few years ago \cite{S19}, which triggered the further development of the geometric theory of turbulence\cite{ VortexGasCirculation20, QuantumCirculation21, M23PR}.
 In particular, the Area law was justified for flat and quadratic minimal surfaces, and an exact scaling law in confinement region $\Gamma \propto \sqrt{Area}$ was derived  \cite{M23PR}. The area law was verified with better precision in \cite{S21}.

In the previous paper, \cite{migdal2023exact}, we have found a family of exact solutions of the loop equation for decaying turbulence \cite{M93, M23PR}. This family describes a \textbf{fixed trajectory} of solutions with the universal time decay factor.
The solutions are formulated in terms of the Wilson loop or loop average
\begin{eqnarray}\label{WilsonLoop}
    &&\Psi(\gamma,C) = \VEV{\exp{ \frac{\I \gamma}{\nu}\oint d \vec{C}(\theta) \cdot \vec v\left(\vec{C}(\theta)\right) }}_{init};\\
    \label{LoopSol}
    && \Psi(\gamma,C) \Rightarrow  \VEV{\exp{
	   \frac{\I \gamma}{\nu}\oint d \vec{C}(\theta) \cdot \vec{P}(\theta)}}_{\mathbb E};
\end{eqnarray}
In the first equation (the definition), the averaging $\VEV{\dots}$ goes over initial data for the solutions of the \NS{} equation for velocity field $\vec v(\vec{r})$. In the second one (the solution), the averaging goes over the space of solutions $\vec{P}(\theta)$ of the loop equation \cite{migdal2023exact}.
We choose in this paper the parametrization of the loop with $\xi = \frac{\theta}{2 \pi}$ to match with the fermionic coordinates below (the parametrization is arbitrary, in virtue of parametric invariance of the loop dynamics).

The loop equation for the momentum loop $\vec{P}(\theta)$ follows from the \NS{} equation for $\vec v$
\begin{eqnarray}
    &&\partial_t \val = \nu \partial_{\beta} \omega_{\beta \alpha} - v_{\beta}\omega_{\beta \alpha} - \dal \left( p + \frac{\vbe^2}{2} \right);\\
    && \dal \val =0;\\
    &&\omega_{\beta \alpha} = \dbe \val - \dal \vbe
\end{eqnarray}
After  some transformations, replacing velocity and vorticity with the functional derivatives of the loop functional, we found the following momentum loop equation in \cite{M23PR,migdal2023exact}
    \begin{eqnarray}\label{PloopEq}
  && \nu\partial_t \vec{P} =  - \gamma^2(\Delta \vec{P})^2 \vec{P}  + \nonumber\\
  &&\Delta \vec{P} \left(\gamma^2 \vec{P} \cdot \Delta \vec{P} +\I \gamma \left( \frac{(\vec{P} \cdot \Delta \vec{P})^2}{\Delta \vec{P}^2}- \vec{P}^2\right)\right);\\
  &&  \vec{P}(\theta) \equiv \frac{ \vec{P}(\theta^+) + \vec{P}(\theta^-)}{2};\\
  && \Delta \vec{P}(\theta) \equiv  \vec{P}(\theta^+) - \vec{P}(\theta^-);
\end{eqnarray}

The momentum loop has a discontinuity $\Delta \vec{P}(\theta)$ at every parameter $0<\theta \le 1$, making it a fractal curve in complex space $\mathbb{C}_d$. The details can be found in \cite{M23PR, migdal2023exact}. We will skip the arguments $t, \theta$ in these loop equations, as there is no explicit dependence of these equations on either of these variables.
This Anzatz \eqref{LoopSol} represents a plane wave in loop space, solving the loop equation for the Wilson loop due to the lack of direct dependence of the loop operator on the shape of the loop.

The superposition of these plane wave solutions would solve the \textbf{Cauchy problem in loop space}: find the stochastic function $\vec{P}(\theta)$ at $t = 0$, providing the initial velocity field distribution.
Formally, the initial distribution $W_0[P]$ of the momentum field $\vec{P}(\theta)$ is given by inverse functional Fourier transform.
\begin{eqnarray}
   && W_0[P] =\int D C \delta^3(\vec{C}[0]) \Psi(\gamma,C)_{t=0} \nonumber\\
   &&\exp{-\frac{\imath \gamma}{\nu}\int d \vec{C}(\theta) \cdot \vec{P}(\theta)}
\end{eqnarray}

In Appendix\ref{GlobalRot}, we solve the Cauchy problem for an exact stationary solution of the \NS{} equation corresponding to the global rotation with the Gaussian random rotation matrix. The stochastic function $\vec{P}(\theta)$ in \eqref{Pexp}, \eqref{Pcorr} has a nontrivial Gaussian distribution with discontinuity $\Delta P(\theta)$ related to slow $1/n$ decay of its Fourier expansion on the parametric circle. This is the simplest example of the fractal curve we study below in a decaying solution of the loop equation.

Rather than solving the Cauchy problem, we are looking for an attractor: the fixed trajectory for $\vec{P}(\theta, t)$ with some universal probability distribution related to the decaying turbulence statistics.

The following transformation reveals the hidden scaling invariance of decaying turbulence
\begin{eqnarray}\label{PFT}
    \vec{P} = \sqrt{\frac{\nu}{2(t+ t_0)}} \frac{ \vec{F}}{\gamma}
\end{eqnarray}
The new vector function $\vec{F}$ satisfies an equation
    \begin{eqnarray}\label{Fequation}
   &&2\partial_\tau \vec{F} = \left(1- (\Delta \vec{F})^2\right) \vec{F}  +\nonumber\\
   &&\Delta \vec{F} \left(\gamma^2 \vec{F} \cdot \Delta \vec{F} +\I \gamma \left( \frac{(\vec{F} \cdot \Delta \vec{F})^2}{\Delta \vec{F}^2}- \vec{F}^2\right)\right);\\
   && \tau = \log (t + t_0)
\end{eqnarray}

This equation is invariant under translations of the new variable $\tau = \log (t + t_0)$, corresponding to the rescaling/translation of the original time.
\begin{eqnarray}
    t \Rightarrow \lambda t + (\lambda-1) t_0
\end{eqnarray}
There are two consequences of this invariance.
\begin{itemize}
    \item There is a fixed point for $\vec{F}$.
    \item The approach to this fixed point is exponential in $\tau$, which is power-like in original time.
\end{itemize}
Both of these properties were used in \cite{migdal2023exact}: the first one was used to find a fixed point, and the second one was used to derive the spectral equation for the anomalous dimensions $\lambda_i$ of decay $t^{-\lambda_i}$ of the small deviations from these fixed points.
In this paper, we only consider the fixed point, leaving the exciting problem of the spectrum of anomalous dimensions for future research.

\subsection{The big and small Euler ensembles}
Let us remember the basic properties of the fixed point for $\vec{F}$ in \cite{migdal2023exact}.
It is defined as a limit $N \to \infty$ of the polygon $\vec{F}_0\dots \vec{F}_N= \vec{F}_0$ with the following vertices
\begin{eqnarray}\label{Fsol}
    && \vec{F}_k = \frac{\left\{\cos (\alpha_k), \sin (\alpha_k), i \cos \left(\frac{\beta }{2}\right)\right\}}{2 \sin \left(\frac{\beta }{2}\right)} ;\\
    && \theta_k = \frac{k}{N}; \; \beta = \frac{2 \pi p}{q};\; N \to \infty;\\
    &&\alpha_{k+1} = \alpha_k + \sigma_k \beta;\; \sigma_k =\pm 1,\;\beta \sum\sigma_k = 2 \pi p r;
\end{eqnarray}
The parameters $ \hat{\Omega},p,q,r,\sigma_0\dots\sigma_{N}= \sigma_0$ are random, making this solution for $\vec{F}(\theta)$ a fixed manifold rather than a fixed point.
We suggested calling this manifold the big Euler ensemble of just the Euler ensemble.

It is a fixed point of \eqref{Fequation} with the discrete version of discontinuity and principal value:
\begin{eqnarray}
    &&\Delta \vec{F} \equiv \vec{F}_{k+1} - \vec{F}_k;\\
    && \vec{F} \equiv \frac{ \vec{F}_{k+1} + \vec{F}_{k}}{2}
\end{eqnarray}
Both terms of the right side \eqref{Fequation} vanish; the term proportional to $\Delta \vec{F}$ and the term proportional to $\vec{F}$. Otherwise, we would have $\vec{F} \parallel \Delta \vec{F}$, leading to zero vorticity \cite{migdal2023exact}.
The ensemble of all the different solutions is called the big Euler ensemble. The integer numbers $\sigma_k= \pm1$ came as the solution of the loop equation, and the requirement of the rational $\frac{p}{q}$ came from the periodicity requirement.

We can use integration (summation) by parts to write the circulation as follows (in virtue of periodicity):
\begin{eqnarray}
\label{CircByParts}
    && \oint d \vec{C}(\theta) \cdot \vec{P}(\theta) = -\oint d \vec{P}(\theta) \cdot \vec{C}(\theta) ;\\
    && \sum_k \Delta C_k \vec{P}_k = -\oint \Delta \vec{P}_k \cdot \vec{C}_k;
\end{eqnarray}

A remarkable property of this solution $\vec{P}(\theta,t)$ of the loop equation is that even though it satisfies the complex equation and has an imaginary part, the resulting circulation \eqref{CircByParts} is real!
The imaginary part of the $\vec{P}(\theta,t)$ does not depend on $\theta$ and thus drops from the integral $\oint d\vec{P}(\theta,t)\cdot \vec{C}(\theta)$.

There is, in general, a larger manifold of periodic solutions to the discrete loop equation, which has all three components of $\vec{F}_k$ complex and varying along the polygon.

We could not find a global parametrization of such a solution\footnote{Nikita Nekrasov (private communication) suggested to me an algorithm of generating this solution as a set of adjacent triangles in complex 3-space and pointed out an invariant measure in phase space, made of lengths of shared sides and angles between them. Unfortunately, this beautiful construction does not guarantee real circulation, requiring further study.}. Instead, we generated it numerically by taking a planar closed polygon and evolving its vertices $\vec{F}_k$ by a stochastic process in the local tangent plane to the surface of the equations in multi-dimensional complex space.

We could not submit such a solution to an extra restriction $\Im \vec{F}_k = \const{}$ needed to make circulation real. 
We cannot prove that such a general solution does not exist but rather take the Euler ensemble as a working hypothesis and investigate its properties.

This ensemble can be solved analytically in the statistical limit and has nice physical properties, matching the expected behavior of the decaying turbulence solution.

We assign equal weights to all elements of this set; we call this conjecture the ergodic hypothesis.
This prescription is similar to assigning equal weights to each triangulation of curved space with the same topology in dynamically triangulated quantum gravity \cite{AgMig92}. Mathematically, this is the most symmetric weight assignment, and there are general expectations that various discrete theories converge into the same symmetry classes of continuum theories in the statistical limit.
This method works remarkably well in two dimensions \cite{gross1990nonperturbative,brezin1990exactly,douglas1990strings}, providing the same correlation functions as continuum gravity (Liouville theory \cite{KPZ}).

The fractions $\frac{p}{q}$ with fixed denominator are counted by Euler totient function $\varphi(q)$ \cite{HardyWright} 
\begin{eqnarray}\label{discretePQ}
   && \varphi(q) = \sum_{\substack{p=1 \\ (p,q)}}^{q-1} 1 = q \prod_{p|q}\left(1 - \frac{1}{p}\right) ;
\end{eqnarray}
For example $\varphi(16) = 8$ and $\varphi(17) = 16$.

In some cases, one can analytically average over spins $\sigma$ in the big Euler ensemble, reducing the problem to computations of averages over the small Euler ensemble $\mathcal{E}(N): N,p,q,r$ with the measure induced by averaging over the spins in the big Euler ensemble.

\section{The Fermi ring and its continuum limit}
In this paper, we perform this averaging over $\sigma$ analytically, without any approximations, reducing it to a partition function of a certain quantum mechanical system with Fermi particles. 
The Quantum Trace Theorem, establishing this connection \eqref{trace formula} is proven in Appendix \ref{MarkovFermi}.
This partition function is calculable using a WKB approximation in the statistical limit $N \to \infty$.
As we shall shortly see, in the continuum limit $N \to \infty$, the accumulated numbers of Fermi particles $\nu_k =1$ and Dirac holes $ \nu_k =0$ tend to some classical function $ \sum_{l < k}(2 \nu_l -1) \to \alpha(\xi)$ of "position" $\xi=\frac{k}{N}$, leading to the exact solution.

Specifically, as we prove in Appendix \ref{PathIntegral}, the loop functional in continuum limit $N \to \infty$ reduces to a quantum mechanical path integral \eqref{pathIntegralSol} over the Fermion density $\alpha(\xi)$. The effective Action for this path integral is given by circulation $\imath\Gamma$ expressed in terms of this density plus a quadratic functional corresponding to Brownian measure \eqref{gaussMeasure} for $\alpha(\xi)$.

Thus, there are three sources of fluctuations: There is a phase factor related to the circulation, there is a Brownian positive distribution of the trajectory $\alpha(\xi)$(Gaussian measure), and finally, the circulation depends on the random fraction $\frac{p}{q}$ distributed according to the small Euler ensemble.
The continuum limit of the latter distribution is derived in Appendix \ref{smallEuler}, using new cotangent sums derived in the previous paper \cite{migdal2023exact}.

This is a new kind of quantum mechanical system with a complex Action, reflecting the irreversibility of turbulence.
The square root of viscosity enters the denominator of the effective Action, like a coupling constant. As we argued in the previous paper, the turbulent limit of our theory corresponds to
\begin{eqnarray}
    \nu \to \frac{\tilde{\nu}}{N^2} \to 0
\end{eqnarray}
The parameter $\tilde{\nu}$ remains a free parameter of our theory, playing the role of turbulent viscosity.  In particular, there is an anomalous energy dissipation in this limit
\begin{eqnarray}
   && \mathcal E(t) = \frac{\tilde{\nu}}{t^2} F(k_0^2 \tilde{\nu} t);\\
   && F(0) = \frac{\pi ^2 }{864 \zeta (3)};
\end{eqnarray}
Here $k_0$ is the lower cutoff of the energy spectrum (to be discussed below). The decaying spectrum at small wave vectors $k < k_0$ is related to the energy pumping at the initial moment $t = t_0$ and is time-independent.

The dimensionless parameter $N \to \infty$ plays the role of the Reynolds number. We derive the turbulent limit $N = 2 m \to \infty$ without any $1/N$ corrections. This solution will then apply to the inertial range of the physical turbulence in a system with a finite but large Reynolds number.
\section{Instanton in the path integral}
This classical equation for our path integral reads (with $\Omega \in O(3)$ being a random rotation matrix):
\begin{eqnarray}
\label{classEq}
   && \alpha'' =  - \imath \kappa \left(\mathcal C'_\Omega \exp{\imath \alpha} + (\mathcal C'_\Omega )^\star \exp{-\imath \alpha}\right) ;\\
   && \kappa = \frac{1}{2 \pi y \sqrt{X} \sqrt{2 \nu t }};\\
   && \mathcal C_\Omega(\theta) =  \vec{C}(\theta) \cdot \hat{\Omega} \cdot\{\imath,1,0\};
\end{eqnarray}
The parameter $\kappa$ is distributed according to the distributions \eqref{WprimeX}, \eqref{WprimeY} of the variables $X,y$ in a small Euler ensemble in the statistical limit.

This complex equation leads to a complex classical solution (instanton).
It simplifies for $ z = \exp{\imath \alpha}$:
\begin{eqnarray}
    &&z'' = \frac{(z')^2}{z} + \kappa \left(\mathcal C'_\Omega z^2 + (\mathcal C'_\Omega )^\star \right);\\
    && z(0) = z(1) =1
\end{eqnarray}
This equation cannot be analytically solved for arbitrary periodic function $C'_\Omega(\xi)$.

The weak and strong coupling expansions by $\kappa$ are straightforward.
At small $\kappa$
    \begin{eqnarray}
    &&z(\xi) \to 1 + 2 \kappa \left(- A \xi +\int_0^\xi \Re \mathcal C_\Omega(\xi') d \xi' \right) \nonumber\\
    &&+ O(\kappa^2);\\
    && A = \int_0^1 \Re \mathcal C_\Omega(\xi') d \xi'
\end{eqnarray}
At large $\kappa$
\begin{eqnarray}
    &&z(\xi) \to \imath \exp{ -\imath \arg \mathcal C'_\Omega(\xi)} = \imath \frac{\abs{\mathcal C'_\Omega(\xi)}}{\mathcal C'_\Omega(\xi)}
\end{eqnarray}
This solution is valid at intermediate $\xi$, not too close to the boundaries $\xi = (0,1)$.
In the region near the boundaries $\xi(1-\xi) \ll \frac{1}{\sqrt{\kappa}}$, the following asymptotic agrees with the classical  equation 
\begin{eqnarray}
\label{endptZ}
    &&z \to  1 \pm \imath\xi \sqrt{2 \kappa\Re C'_\Omega(0)} + O(\xi^2 );\\
     &&z \to  1 \pm \imath (1-\xi) \sqrt{ 2 \kappa\Re C'_\Omega(1)} + O((1-\xi)^2);
\end{eqnarray}
One can expand in small or large values of $\kappa$ and use the above distributions for $X, y$ term by term.

As it was noticed above, the viscosity $\nu = \frac{\tilde{\nu}}{N^2} \to 0$ in our theory. This limit makes $ \kappa \sim N \to \infty$, justifying the strong coupling limit for the Wilson loop solution.

The classical limit of the circulation in exponential of \eqref{pathIntegralSol}
\begin{eqnarray}
 \int_0^1 d \xi\Im\left( \mathcal C'_\Omega(\xi) \exp{\imath\alpha(\xi)} \right)\to \int_0^1 d \xi \abs{\mathcal C'_\Omega(\xi) }
\end{eqnarray}
becomes a positive definite function of the rotation matrix $\Omega$. At large $\kappa$ the leading contribution will come from the rotation matrix minimizing this functonal.

Let us think about the physical meaning of this finding. We have just found the density of our Fermi particles on a parametric circle
\begin{eqnarray}
     \alpha(\xi) = \frac{\pi}{2} - \arg \mathcal C'_\Omega(\xi)
\end{eqnarray}
This density does not fluctuate in a turbulent limit, except near the endpoints $\xi \to 0, \xi \to 1$. In the vicinity of the endpoints, there is a different asymptotic solution \eqref{endptZ} for $ \alpha \to (z-1)/\imath$.

Compute the Wilson loop for a specific loop, say, the circle, is an interesting problem, but there is a simpler quantity.
In the next section we are considering an important calculable case of the vorticity correlation function, where the full solution in quadratures is available. 
This function has been directly observed in grid turbulence experiments \cite{GridTurbulence_1966, AgMig92} more than half a century ago and is being studied in modern large-scale real and numerical experiments\cite{SreeniDecaying, GDSM24, GregXi2}.

This is the vorticity correlation function \cite{M23PR}, corresponding to the loop $C$ backtracking between two points in space $\vec{r}_1=0, \vec{r}_2=\vec{r}$, with the vorticity operators are inserted at these two points (see \cite{migdal2023exact} for details and the justification). The Fourier transform of this function describes the decaying energy spectrum, also measured experimentally.

In Appendixes \ref{VelCorr}, \ref{TurbVisc}, \ref{ClasTraj}, we express this correlation function as a particular case of the second variation of the loop functional. Then, in Appendixes \ref{FuncDet}, \ref{FlucTerm}, we compute the path integral in the leading WKB approximation. This is a one-dimensional version of the instanton computations, familiar to the gauge theory experts.
In the turbulent limit $N\to \infty, \tilde{\nu} = \const{}$, there are no higher order corrections to this WKB approximation of the path integral.

\section{The decaying energy in finite system}\label{finitesystem}

The vorticity correlation in Fourier space doubles as an energy spectrum
\begin{eqnarray}
\label{spectrum}
    E(k,t) = 4\pi\vec{k}^2 \VEV{ \vec v \cdot \vec v_k} = 4\pi\VEV{ \vec \omega \cdot \vec \omega_k}
\end{eqnarray}

The energy spectrum in a finite system with size $L$ is bounded from below. At low $|\vec{k}| \le \pi/L$, the spectrum is no longer related to the turbulence but is given by the energy pumping by external forces at the boundaries.

This energy pumping \cite{SreeniDecaying, GDSM24} takes place at $ t < t_0$, after which the pumping stops. At this moment, the energy spectrum is growing with wavevector by one of two possible laws (with $P$ being the Birkhoff-Saffman invariant of the fluid and $M$ being the Loitzansky invariant)
\begin{eqnarray}
    \begin{cases}
        &E(k,t)   \propto  P k^2\\
        &E(k,t)   \propto  M k^4\\
    \end{cases}
\end{eqnarray}
The small $k$ limit of the spectrum is time-independent as both $P$ and $M$ do not depend of time.

At $t> t_0$, without the forcing, the pumped energy dissipates at large $k$ corresponding to smaller spatial scales of the hierarchy of vortex structures of all scales, ending with dissipative scales, or wavevectors $k \gg \pi/L$. After sufficient time, the universal regime kicks in, corresponding to the decaying turbulence. It is implied that a large amount of energy was pumped in, so it takes a long time to reach this decaying regime, corresponding to some fixed trajectory.

Our solution does not impose any restrictions on the SB invariant $P$ and thus applies to the most general, first regime with $k^2$ spectrum at small $k$ and some universal decay at large $k$, reflecting these distributed vortex structures. 

The decaying energy $E_d(t)$, given by the part of the spectrum $k > k_0 \sim 1/L$, has the following form
\begin{eqnarray}
\label{Edecay}
    &&E_d(t)=  \int_{k_0}^\infty d k E(k,t) \propto\frac{4\pi \tilde{\nu}}{t} \int_{k_0 \sqrt{\tilde{\nu} t}}^\infty H(\kappa) d \kappa
\end{eqnarray}

On top of the trivial decrease $\frac{\tilde{\nu}}{t}$, as prescribed by dimensional counting in an infinite system, there is some extra decrease related to the increase of the lower limit.
\subsection{Computation of the energy dissipation}
The energy in our theory does not have a finite statistical limit because of the contribution from the unknown potential part of velocity. This contribution could be infinite in the infinite system.
Thus, we compute the energy by integrating its time derivative (i.e. the dissipation rate $-\mathcal E(t)$ ) given the zero energy left at infinite time.
\begin{eqnarray}
\label{Ediss}
    E(t) = \int_t^\infty \mathcal E(t') d t'
\end{eqnarray}

This energy dissipation rate $\mathcal E(t')$ is calculable
    \begin{eqnarray}
\label{dissipation}
    &&\mathcal E(t) \propto 4 \pi \nu \int d k \,k^2 \VEV{\vec\omega(\vec 0) \cdot \vec \omega(\vec{k})} = \nonumber\\
    &&4 \pi\frac{\tilde{\nu} }{t^2}\int_{k_0 \sqrt{\tilde{\nu} t}}^\infty \kappa^2 H(\kappa) d \kappa
\end{eqnarray}

In our theory, this integral has a finite limit in an infinite system ($k_0=0$). 

This limit was computed in \cite{migdal2023exact} in a slightly different grand canonical ensemble, where $N$ was fluctuating with the weight $\exp{-\mu N}, \mu \to 0$.

With our current ensemble of fixed even $N\to \infty$ the results of \cite{migdal2023exact} read:
\begin{eqnarray}
\label{exactEnstrophy}
   && \mathcal E_{\infty}(t) = \frac{\tilde{\nu}}{t^2} \frac{\pi ^2 }{864 \zeta (3)};
\end{eqnarray}
In our present theory, the same quantity is given by the above integral at $k_0 =0$
\begin{eqnarray}
    \mathcal E_{\infty}(t) = 4 \pi \frac{\tilde{\nu} }{t^2}\int_{0}^\infty \kappa^2 H(\kappa) d \kappa
\end{eqnarray}
Comparing these two expressions, we get the normalization  of $H(\kappa)$
\begin{eqnarray}
    4 \pi \int_{0}^\infty \kappa^2 H(\kappa) d \kappa = \frac{\pi ^2 }{864 \zeta (3)} 
\end{eqnarray}
The integral on the left can be further reduced \cite{MB41} to the following normalization condition:
    \begin{eqnarray}
  && \mathcal Z =\frac{276480 \zeta (3) \zeta \left(\frac{15}{2}\right)  }{119 \zeta \left(\frac{17}{2}\right)}\int_{\Delta_1}^{\Delta_2} d \Delta  \frac{P(\Delta)}{Q(\Delta)};\\
  && P(\Delta) = (1- \Delta)\sqrt{r_0(\Delta )} S(\Delta ) Q_{\alpha }(\Delta ,1) \nonumber\\
  &&\left(\pi  J(\Delta ) \left(r_0(\Delta )+12\right)+L(\Delta ) \left(r_0(\Delta )-6\right)\right);\\
  && Q(\Delta) = L(\Delta )^2 \left(r_0(\Delta )+12\right);
\end{eqnarray}
This normalization constant $\mathcal Z$ can be used in equation \eqref{Edecay} for the energy decay in a \textbf{finite} system. All the functions of $\Delta$ were defined above.

As for the energy spectrum, this is not an independent function in our theory. Comparing the two expressions \eqref{spectrum} and \eqref{dissipation}, we arrive at the following relation
\begin{subequations}
\label{EbyF}
\begin{eqnarray}
\label{EbyH}
   && t^2 \mathcal E(t) = 4\pi \tilde{\nu} F\left(k_0 \sqrt{\tilde{\nu} t}\right);\\
   && F(\kappa) = \int\displaylimits_{\kappa}^\infty H(x) x^2 \, d x,\\
   \label{SbyH}
   && \sqrt{t} E(k,t) \propto 4 \pi \tilde{\nu} \sqrt{\tilde{\nu}}H\left(k \sqrt{\tilde{\nu} t}\right);
\end{eqnarray}
\end{subequations}
Both the energy dissipation and the energy spectrum are related to the same function $H(\kappa)$, but the energy spectrum related to this function at large argument $ \kappa = k \sqrt{\tilde{\nu} t}$, whereas the energy dissipation is related to integral of this function from small argument $ \kappa = k_0 \sqrt{\tilde{\nu} t}$ to infinity.

Our theory does not have the Saffmann part of the spectra; it only applies to an infinite system described by the region $k > k_0$. At the boundary of this region, we have the value of the spectrum (assuming $k_0 \sqrt{\tilde{\nu} t} \ll 1$)
\begin{eqnarray}
    E(k_0, t) \propto \frac{4 \pi}{\sqrt{t}} \tilde{\nu} \sqrt{\tilde{\nu}}H\left(k_0 \sqrt{\tilde{\nu} t}\right) \sim \frac{4 \pi H(0)\tilde{\nu} \sqrt{\tilde{\nu}}}{\sqrt{t}}
\end{eqnarray}
This will match the Saffman spectrum $ P k^2$ at small $k$
\begin{eqnarray}
    k\sim \tilde{\nu}^{\sfrac{3}{4}}t^{-\oq} P^{-\oh}
\end{eqnarray}
This boundary value decays as $t^{-\oq}$; it is below the value $k_0 \sim 1/L$, which is time-independent.  We are not considering extremely large times such that $k_0 \sqrt{\tilde{\nu} t} \gg 1$. At these times, the decay is over: there is not enough energy left for turbulence, and the whole pumped energy has dissipated.

We computed the universal function $H(\kappa)$ in the Appendix \ref{RiemanZeros} and numerically integrated it to obtain $F(\kappa), \kappa = k \sqrt{\tilde{\nu} t}$. Asymptotically, at large $k \sqrt{\tilde{\nu} t}$
\begin{eqnarray}
   && E(k, t) \propto \frac{\tilde{\nu} \sqrt{\tilde{\nu}}}{\sqrt{t}} \left(k \sqrt{\tilde{\nu} t}\right)^{- \sfrac{7}{2}};\\
   && \mathcal E(t) \propto t^{-\sfrac{9}{4}};\\
   && E(t) = \int_t^\infty \mathcal E(t') \, d t' \propto t^{-\sfrac{5}{4}}
\end{eqnarray}

We computed the effective critical indexes
\begin{eqnarray}
    &&n( t) = -t\partial_t \log E  = -\frac{ t \mathcal E(t)}{E(t)};\\
    &&s(\kappa) = -\kappa\partial_\kappa \log H(\kappa)
\end{eqnarray}
numerically in the Appendix \ref{RiemanZeros}, using \Mathematica{}.
The accuracy is just 4-5 digits, but it can be easily improved by taking more CPU time once experimental data gets more precise. 
These curves are universal, and they change the regime before approaching their limits from below $n(\infty) = \frac{5}{4}, s(\infty) = \frac{7}{2}$.
These regime changes are due to the quantum effects (complex zeros of the zeta function contributing to the energy spectrum's Mellin transform, as shown in Appendix \ref{RiemanZeros}).

The experimental data \cite{GridTurbulence_1966, Comte_Bellot_Corrsin_1971} yields $n(\infty) \approx 1.25\pm 0.02$, which agrees with our theoretical prediction in Fig.\ref{fig::NPlot}.
\pctPDF{NPlot}{The effective index $n(t) = -\frac{t \mathcal E(t)}{E(t)}$ compared with asymptotic value $n(\infty) = \sfrac{5}{4}$.}
Our universal curves for $n(t), s(\kappa)$ were computed directly from the analytic solution of the loop equation in the turbulent limit without any fitting parameters. 
It will be very interesting to compare these curves with more precise experiments (real or numerical).
\subsection{The energy normalization problem}\label{Enorm}
The above formulas do not specify the energy spectrum's normalization, just the energy dissipation's normalization. When one tries to recover the normalization of the energy spectrum, the following problem arises.

The normalization of the decaying energy \eqref{Edecay}  seems incompatible with its time derivative \eqref{dissipation}. In conventional turbulence models, the integral for dissipation diverges at large wavelengths, reflecting the singular vortex structures such as the Burgers vortex filaments \cite{BurgersVortex} with viscous thickness $w \sim \sqrt{\nu}$. Integrating the square of vorticity in the Burgers vortex in the transverse plane, we get a large factor $ 1/\nu$; this compensation leads to anomalous dissipation $\mathcal E = \nu \VEV{\vec \omega^2}$. independent of $\nu$.

In our dual theory, this factor of $\nu$ is compensated by a different mechanism: the vorticity is represented as a discontinuity at the curve $\vec{P}(\theta)$ in our solution: $\vec \omega \propto \vec{P} \times \Delta \vec{P}$. Summing over a large number $N \to \infty$ of these discontinuities leads to the factor $N^2$ compensating small viscosity $\nu$ factor in front of the enstrophy. The enstrophy integral $\int d k k^2 \VEV{\vec \omega \cdot \vec \omega_k}$ converges at large $k$.

As a consequence, the vorticity correlation $\VEV{\vec \omega \cdot \vec \omega_k} \sim 1/\nu$ is \textbf{large} at all $k$, not just at large $k$. Our limit $\nu \to 0, N \to \infty$ applies to the computation of integrated quantities such as total decaying energy but not to the energy spectrum as a function of wavelength.

The problem boils down to the following. The turbulent limit differs from the inviscid limit of the \NS{} equation. In the turbulent limit, the average circulation $\Gamma$ is much larger than viscosity, but the dimensional scales, determined by viscosity, stay finite.

To be more specific, the enstrophy in our system has the structure
\begin{eqnarray}
    \VEV{\vec \omega^2} = \frac{\mathcal E(t, |\Gamma|/\nu)}{\nu}
\end{eqnarray}
The dissipation $\mathcal E(t, |\Gamma|/\nu)$ depends on time and the effective Reynolds number $\RE =|\Gamma|/\nu $ where $\Gamma$ is a typical velocity circulation in our problem.

The turbulent limit corresponds to $|\Gamma| \gg \nu $, rather than $\nu \to 0$.
Mathematically, we are looking for a residue of the enstrophy at zero viscosity
\begin{eqnarray}
    \VEV{\vec \omega^2} \to \frac{\mathcal E(t, \infty)}{\nu}
\end{eqnarray}
However, in the real physical world (or in the DNS), we shall use this formula for finite $\nu$ corresponding to the actual viscosity of water. We only use our dual theory to compute this residue $ \mathop{\mathrm{Res}}_{\nu=0} \VEV{\vec \omega^2} = \mathcal E(t, \infty)$.
\begin{eqnarray}\label{Residue}
    \VEV{\vec \omega^2} \to \frac{\mathcal E(t, \infty)_{dual}}{\nu}
\end{eqnarray}
In particular, in the infinite system ($k_0=0$), we have the value \eqref{exactEnstrophy} computed in the previous paper.

This comparison restores the normalization of the vorticity correlation and the energy spectrum
\begin{eqnarray}\label{EnergyResidue}
   && \VEV{\vec\omega(\vec 0) \cdot \vec \omega(\vec{k})} = \frac{\tilde{\nu}^{\sfrac{5}{2}}H\left(k \sqrt{\tilde{\nu} t}\right)}{\nu \sqrt{t}};\\
   && \sqrt{t} E(k,t) = \frac{4 \pi \tilde{\nu}^{\sfrac{5}{2}}H\left(k \sqrt{\tilde{\nu} t}\right)}{\nu}
\end{eqnarray}
 where $\nu$ is the physical viscosity of the fluid under consideration (say, water) and $\tilde{\nu}$ is the auxiliary scale parameter that comes from the dual theory in the turbulent limit.  

 Our theory has two unknown parameters: $\tilde{\nu} $ and $t_0$.
 The energy spectrum decreased faster than $k^{-3}$ so that the enstrophy integral $\int H(\kappa)\kappa^2 d \kappa$ converges at large $\kappa$ and so does the energy integral $\int H(\kappa) d \kappa$.

As we shall see in the next section, this fast decay of the energy spectrum also happens in DNS, in qualitative agreement with our decay $\kappa^{-\sfrac{7}{2}}$.

\section{Comparing our theory with the DNS}\label{DNS}
As the first such test of our theory, we took the raw DNS data from \cite{SreeniDecaying}, provided to us by the authors. This data is now available online \cite{GDSM24} and can be downloaded without permission. We only compared the data corresponding to our $k^2$ case and restricted ourselves to four samples with the largest grid $1024^3$, labeled as sample $13,14,15,16$.

First, we verify the decay of the Reynolds number for each sample, as our theory only applies to the large Reynolds numbers. These plots are shown in Fig. \ref{fig::ReynoldsPlot}. As we see from these plots, all four Reynolds numbers are modest, but the sample $14$ stands out as the closest to the strong turbulence we seek.

The next test is the effective length scale, which we define as
\begin{eqnarray}
\label{Ldef}
    &&L(t) = \frac{\int E(k,t) k d k}{\int E(k,t) k^2 d k};
\end{eqnarray}
The effective length $L(t)$ as a function of time is shown in Fig.\ref{fig::EffectiveLength}. 
\pctPDF{ReynoldsPlot}{The time decay of Reynolds numbers for each of the four samples from \cite{SreeniDecaying, GDSM24}}
\pctPDF{EffectiveLength}{The log-log plot of the effective length $L(t)$ for the sample 14 from \cite{SreeniDecaying, GDSM24}. The turbulent part, $1000< t < 8000$, closely fits our theory $ L(t) \propto \sqrt{t}$.}
The statistical equilibrium was not yet reached at $ t < 1000$, so we discarded this period.
The late stages of decay where $L(t) \propto t$ correspond to low Reynolds number and do not agree with our theory: we interpret it as the non-turbulent stage of decay when the remaining energy is insufficient for the strong turbulence phase.

The next test is the energy decay curve $E(t)$ for the turbulent region of sample 14. 
\begin{eqnarray}
\label{Edef}
    && E(t) = \int E(k,t)  d k;\\
    &&\log E(t) \approx a + f(b +\log L(t))
\end{eqnarray}
Our theory has two free parameters $t_0, \tilde{\nu}$. The fitting of $t_0$ is not trivial as we do not know which time range corresponds to the universal regime of the decay but still contains enough energy left for the strong turbulence.
Following the suggestion of \cite{GregDecayTurb}, we avoid fitting $t_0$ by investigating the decaying energy as a function of the decay length $L(t)$, which in our case scales as $\sqrt{\tilde{\nu} (t+ t_0)}$. The precise definition was given in \eqref{Ldef}.

The parameters $a,b$ were fitted by nonlinear regression using "NonlinearModelFit" in \Mathematica{}.
The resulting log-log plot is shown in(Fig. \ref{fig::EVsLPlot}). 
The fit is perfect, with less than one percent of the standard deviation. This relation is approximately linear with the slope $ - \frac{5}{2}$.
Note also that the asymptotic index $-\frac{5}{2}$ comes as a ratio of the energy decay index $-\frac{5}{4}$ to the index $m = \oh$ for $L(t)$ which we already tested.
 \pctPDF{EVsLPlot}{The log-log plot of the decaying energy as a function of decaying length scale $L(t)$. Red dots are the DNS data, the green curve is an exact theoretical curve, and the dashed blue line is its asymptotic limit $E(t) \propto L(t)^{-\sfrac{5}{2}}$ corresponding to $E \propto t^{-\sfrac{5}{4}}$.}
Let us now turn to the energy spectrum. The data is not as good here as the energy decay data for two reasons. 
First, each wave vector component has only $L=512$ independent values on a $1024$ grid.

The energy spectrum is a function of the length of the wavevector, which is taking $O(L(L+1)(L+2)/6)$ different values between $0$ and $L\frac{\sqrt{3}}{2}$.
Unfortunately, the available data \cite{SreeniDecaying, GDSM24} aggregates the statistics at $512$ equidistant bins in $|\vec{k}|$, reducing statistics. The large number of rank bins (by an equal number of data points in each bin) would give us much more information about the spectrum.

We have a scaling law $E(k,t) = \frac{H(k\sqrt{t+ t_0})}{\sqrt{t+ t_0}}$, which means that the two-dimensional array of the data for $E(k,t)$ must collapse at one-dimensional subset.

We already saw the consequence of that collapse in the scaling law for $L(t)$. However, the low $k$ part of the spectrum is discrete and corresponds to lattice artifacts. 

We found the following method to avoid choosing the range of discrete wavelengths or fitting any scale parameters. 

We consider the second moment of velocity, related to the energy spectrum by Fourier transform
\begin{eqnarray}
    &&\VEV{\Delta \vec v^2}(r) =\int \frac{d ^3 k}{4 \pi^3} \frac{\left(1-e^{\I \vec{k} \cdot \vec{r}} \right)E(k,t)}{4 \pi \vec{k}^2};\\
    && \VEV{\Delta \vec v^2}(r)  = \VEV{\left(\vec v(\vec{r}, t) - \vec v(0,t)\right)^2}
\end{eqnarray}

There is a sharper test, namely the effective index $\xi_2(r,t)$ defined as a log-log derivative of this second moment
\begin{subequations}\label{xi2}
    \begin{eqnarray}
    &&\xi_2(r,t) = r \partial_r \log \VEV{\Delta v^2}(r)= \nonumber\\
    &&\frac{\int_0^\infty d k \left(-\cos(k r) + \frac{\sin(k r)}{k r}\right) E(k,t)}{\int_0^\infty d k \left(1 - \frac{\sin(k r)}{k r}\right) E(k,t)};\\
    && \VEV{\xi_2(x L(t),t)}_t  = f(x);
\end{eqnarray}
\end{subequations}

This universal function $ f(x)$ is numerically well defined as the integration over the spectrum suppresses the noise related to lattice discretization unless the dimensionless coordinate $x$ is too large.

In the K41 theory, this index is $\frac{2}{3}$, so one would expect at least a plateau of nearly K41 values in the inertial range of space scales. This index would have a higher constant value in multifractal models, around $0.7$.
Fig.\ref{fig::PlotIndex14} shows what we found instead for the DNS data \cite{SreeniDecaying, GDSM24}.
\pctPDF{PlotIndex14}{The plot  of $f(x)$ in \eqref{xi2} as a function of $\log x$ (red dots with error bars). The green curve is the theoretical index without adjustment of the coordinate scale, and the dashed blue line is the K41 constant value $\frac{2}{3}$.}

This time, the theoretical curve (green line) goes outside the error bars, which means a lack of a global fit. Both curves are far from any constant value, so Kolmogorov and multifractal are out of the competition. However, the left parts of the theoretical and DNS curves, with $f > 0.5$, are almost parallel in $\log x$ scale. The larger values of x do not match so closely, and the error bars are bigger there, likely because of lattice artifacts in DNS. These left parts, however, can be moved on top of each other by properly choosing the length scale in our theory.

We tested this hypothesis by selecting the left side of this plot and adjusting the length scale in the theoretical line to get on top of the DNS data (shifting the green curve horizontally by a mean distance to the red curve).
Here is the result of this selection/shifting (see Fig.\ref{fig::MoveIndex14}).
\pctWPDF{0.45}{MoveIndex14}{The plot of the effective index $f(x)$ in \eqref{xi2} as a function of $\log x $ (red dots with error bars) in the turbulent range. 
The K41 $\frac{2}{3}$ law (blue dashed line) is totally off the charts.}

The theoretical curve (green line) is shifted left by some amount $\delta \log x =s$ to minimize the mean square of the horizontal distance in the turbulent range. This shift corresponds to the adjustment of the length scale in our theory. 
The curves now match up to the standard deviation of the DNS. The time limit $ T_0 < t < T_1$  of decaying turbulence was chosen to ensure the diffusion law $L(t) \propto \sqrt{t}$. 
These DNS completely rule out the K41 scaling law $\frac{2}{3}$. 

We recently encountered real experiments for compressible decaying turbulence in wind tunnels and atmosphere \cite{GregXi2}. Our theory assumes incompressibility, so it doesn't apply to these data in the air turbulence. Also, the magnitude of the errors in \cite{GregXi2} is unclear.

Nevertheless, we still compared our prediction for the second moment with the air data from \cite{GregXi2}. The shape of the experimental curve for $\VEV{\Delta v^2}(r)$ (Fig. 1 in \cite{GregXi2} ) is similar to ours. It significantly deviates from the K41 prediction: instead of a straight line in the log-log scale, there is a curved line with the slope varying from $2$ to $0$ as $r$ varies from zero to infinity, the same as our curves. There is no plateau at $\frac{2}{3}$ slope (Fig.1 in \cite{GregXi2}; the slope linearly decreases with $\log r$, similar to our decreasing slope in Fig. \ref{fig::PlotIndex14}.

However, the numerical values for the slope and curvature in \cite{GregXi2} are quite different from those we derived above from the incompressible DNS \cite{SreeniDecaying, GDSM24}, which fits our theory within the error bars. The origin of such a big discrepancy is unclear to us. 
Compressibility alone seems unlikely to explain such a large deviation from incompressible DNS. 

It is difficult to estimate the slope of the measured data numerically \cite{GregXi2} without increasing errors, which are already large enough in the data in \cite{SreeniDecaying, GDSM24}. Unlike numerical differentiation of the wind tunnel data, the Fourier integration of the DNS data suppresses the noise, which makes the DNS data for effective index $\xi_2(r)$ much more accurate.

\textbf{We conclude that our theory passed its first test with flying colors, but a more detailed comparison with new large DNS or experiments is desirable}.
\section{The spectrum of scaling dimensions}
\label{Analytic}
In addition to numerically computing the energy spectrum and plotting effective critical indexes, we can go one step further in the mathematical analysis of the concept of the scaling laws in turbulence.

In a scale-invariant theory, the Mellin transform of the correlation function in coordinate or momentum space is a meromorphic function.
The \textbf{spectrum of critical indexes} is given by the positions of poles of this function in a complex plane.
The whole spectrum of critical indexes is real in the theory of critical phenomena, described by a Conformal Field Theory (CFT).

As we shall prove below, our theory is scale-invariant by this definition, but it is not a CFT. In particular, some critical indexes come in complex conjugate pairs, reflecting the dissipative nature of our turbulence theory.

\subsection{The energy spectrum}
The spectral density $H(\kappa)$ is in \eqref{answer}. The Mellin transform is the following integral, assuming our function decreases at infinity.
\begin{eqnarray}
    &&H(\kappa) = \int_{-\eps - \I \infty}^{-\eps + \I \infty} \frac{d p}{2 \pi \I} h(p) \kappa^{p} ;\\
    && h(p) = \int_0^\infty \frac{d \kappa}{\kappa} \kappa^{-p} H(\kappa)
\end{eqnarray}
We change the sign of $p$ in conventional definition to better describe our functions.
\pctWPDF{0.5}{MellinPoles}{Complex poles of the Mellin transform $h(p)$ of $H(\kappa)$ in \eqref{Mellin} defining the critical indexes of the energy spectrum as a function of $\kappa = |\vec{k}| \sqrt{ \tilde{\nu} t }$.}
\pctWPDF{0.5}{EnergyPoles}{Complex poles of the Mellin transform $e(q)$ of $E(t)$ in \eqref{EnergyMellin} defining the spectrum of critical indexes of remaining energy as a function of time.}
\pctWPDF{0.5}{MellinPolesVV}{The spectrum of (complex) indexes of the power expansion of the velocity correlation function. The poles in the right semiplane determine the small-distance expansion of the correlation function $ \vec v(0) \cdot \vec v(r) \sim \sum_n O_n |\vec{r}|^{p_n}$.}

The pure power law of decay would correspond to the Mellin transform $h(p)$ having a single pole in the left semi-plane. The position $p = -a$ of this pole becomes an index of the power law.
\begin{eqnarray}
    && H(\kappa) \propto \kappa^{-a}
\end{eqnarray}
The next level of complexity would be a function, depending on an extra parameter $n$, such that the Mellin transform has a simple pole, moving with this parameter $ p = a(n)$. In application to the moments of velocity difference, this parameter $n$ is the degree of the moment $M_n =\VEV{\Delta v^n}$. This pole will produce multifractal scaling laws
\begin{eqnarray}
   M_n \sim r^{a(n)}
\end{eqnarray}
Our energy spectrum has a more complex singularity structure in its Mellin transform
\begin{eqnarray}
\label{Mellin}
   && h(p) = \frac{f(p) \zeta \left(p+\frac{15}{2}\right) \Gamma (-p)}{(2 p+7) (2 p+17) \zeta \left(p+\frac{17}{2}\right)};\\
   && f(p) = 20 \int_{\Delta_1}^{\Delta_2} d \Delta(1-\Delta)  C^{p-1} (A C-B p)
\end{eqnarray}
where $A,B,C$ are some smooth positive functions of $\Delta$ varying in finite limits (see Appendix \ref{RiemanZeros}).
Given these properties, it is simple to prove that the Taylor series of $f(p)$ at the origin converges as the expansion coefficients decrease as a factorial of the expansion order.

This convergence makes this function an entire function without any finite singularities. The values of $C(\Delta)$ are bounded by two positive limits
\begin{eqnarray}
    0.0541984 < C < 0.0630755,  \forall{ \Delta_1 < \Delta < \Delta_2}
\end{eqnarray}
Therefore, the entire function $f(p)$ decreases as $e^{-2.76342 p}$ in the right semiplane and grows as $e^{-2.9151 p}$ in the left semiplane. It oscillates along an imaginary axis, which is our integration path. 

Theoretically, $f(p)$ could vanish at one or more positions of the poles of the remaining meromorphic function, eliminating these poles.
We computed $f(p)$ at the lowest poles and ensured it was far from zero. However, canceling some higher poles by the root of $f(p)$ remains an open problem.

The singularities of the Mellin transform $h(p)$ for the energy spectrum $H(\kappa)$ are  given by the following table of \textbf{simple poles}
    \begin{eqnarray}\label{SpecPoles}
    \left|
\begin{array}{c}
\text{energy spectrum indexes}  \\
\hline
  -\sfrac{7}{2}  \\
 -\sfrac{13}{2} \\
  \fbox{$-8  \pm \I t_{n}\text{ if }n\in \mathbb{Z}$} \\
 \fbox{$-\sfrac{17}{2}-2 n\text{ if }n\in \mathbb{Z}\land n\geq 0$}  \\
 \hline
 \fbox{$n\text{ if }n\in \mathbb{Z}\land n\geq 0$}  \\
\end{array}
\right|
\end{eqnarray}
Here $\pm t_n$ are imaginary parts of zeros of the $\zeta$ function on the critical line $z = \oh + \I t$. About $10^{13}$ zeros are already known, though the Riemann hypothesis (no other complex zeros) still needs to be proven.
These poles are shown in Fig. \ref{fig::MellinPoles}.

Only poles with negative real parts contribute to the power expansion at $\kappa \to \infty$.
The real negative poles yield decaying power terms, but the infinite series of complex poles at Riemann zeros adds the oscillations in log scale
\begin{eqnarray}
   |W_n|\kappa^{-8} \cos\left( t_n \log \kappa + \arg W_n\right)
\end{eqnarray}
These slow oscillations are visible as regime change in the log-log plots of effective indexes in Fig.\ref{fig::NPlot}, Fig. \ref{fig::MuIndex}.

As we already discussed above, the energy spectrum decays as $ t^{-\sfrac{9}{4}} k ^{-\sfrac{7}{2}}$. The wavelength decay at a fixed time is faster than K41 $k^{-\sfrac{5}{3}}$.  There is no theoretical reason (even at the level of heuristic) for K41 in decaying turbulence, as the dissipation $\mathcal E(t)$ is not a constant, so it cannot be used as a single scaling parameter.
\subsection{The energy decay}

The energy $E(t)$ is related to the same function
    \begin{eqnarray}
    &&E(t) = \int_t^\infty \, d t  \mathcal E(t) = \nonumber\\
    &&4 \pi \tilde{\nu} \int_t^\infty \, \frac{d T}{T^2} \int_{k_0 \sqrt{\tilde{\nu} T}}^\infty H(x) x^2 d x
\end{eqnarray}

Substituting the Mellin transform for $H(x)$ and integrating twice, we get the Mellin transform for the energy
\begin{subequations}
\label{EnergyMellin}
    \begin{eqnarray}
   && E(t) = \int_{-1- \eps - \I \infty}^{-1-\eps + \I \infty} \frac{d q}{2 \pi \I} e(q) \left(k_0^2 \tilde{\nu} t\right)^q ;\\
   && e(q) = 2 \pi  k_0^2 \tilde{\nu} ^2 \frac{h(2q-1)}{q(q+1)}
\end{eqnarray}
\end{subequations}

The table of complex poles of this function is
\begin{eqnarray}\label{Epoles}
    \left|
\begin{array}{c}
\text{energy decay indexes} \\
\hline
 -\sfrac{5}{4}\\
 -\sfrac{11}{4} \\
 \fbox{$-\sfrac{7}{2}  \pm \sfrac{\I}{2} t_{n} \text{ if }n\in \mathbb{Z}$} \\
 \fbox{$-\sfrac{15}{4} -n\text{ if }n\in \mathbb{Z}\land n\geq 0$} \\
 \hline
  \fbox{$\sfrac{n}{2}  \text{ if }n\in \mathbb{Z}\land n\geq 0$} \\
\end{array}
\right|
\end{eqnarray}
These poles are shown in Fig.\ref{fig::EnergyPoles}. The leading pole is at $p= -\sfrac{5}{4}$, corresponding to the asymptotic decay we compared to the grid turbulence decay data. Only poles with negative real parts contribute to the large $t$ expansion.

\subsection{The velocity correlation function in coordinate space}\label{VVCF}

Let us transform the velocity correlation back to coordinate space from Fourier space
    \begin{eqnarray}
    &&\VEV{\Delta \vec v^2}(r) =\nonumber\\
    &&2\int \frac{d ^3 k}{ (2 \pi)^3} \left(1-\exp{\I \vec{k} \cdot (\vec{r})} \right)\frac{E(k,t)}{4 \pi \vec{k}^2}= \nonumber\\
    &&\frac{\tilde{\nu}^2}{\pi ^2 \nu t} \int_{\eps - \I \infty}^{\eps + \I \infty} \frac{d p}{2 \pi \I} \left(\frac{|\vec{r} - \vec{r}'|  }{\sqrt{ \tilde{\nu}  t}}\right)^p \nonumber\\
    &&\cos \left(\frac{\pi  p}{2}\right) \Gamma (-p-1)h(-1-p) ;\\
    && \VEV{\Delta \vec v^2}(r)  = \VEV{\left(\vec v(\vec{r}, t) - \vec v(0,t)\right)^2}
\end{eqnarray}

The Mellin transform simplifies to
\begin{eqnarray}
\label{VVcorr}
    &&\VEV{\Delta \vec v^2}(r)  =\nonumber\\
    &&\frac{\tilde{\nu}^2}{ \nu t} \int_{\eps - \I \infty}^{\eps + \I \infty} \frac{d p}{2 \pi \I} V(p) \left(\frac{|\vec{r} - \vec{r} '|  }{\sqrt{ \tilde{\nu}  t}}\right)^p;\\
    && V(p) = \nonumber\\
    &&-\frac{f(-1-p) \zeta \left(\frac{13}{2}-p\right) \csc \left(\frac{\pi  p}{2}\right)}{16 \pi ^2 (p+1) (2 p-15) (2 p-5) \zeta \left(\frac{15}{2}-p\right)}
\end{eqnarray}

The poles of this function in the right semiplane represent the indexes of the power singularities of the velocity correlation function at coinciding points $(0, \vec{r})$. Should our theory be a CFT (which it is not), this spectrum would be related to the spectrum of anomalous dimensions $ \Delta_n$ in the OPE:
$$ 
\text{CFT: }\vec v(0) \cdot \vec v(r) \sim \sum_n O_n |r|^{p_n};\; p_n = 2 \Delta_v - \Delta_n;
$$
We have such an expansion with $\frac{|\vec{r}|  }{\sqrt{ \tilde{\nu}  t}}$ in place of $|r|$ and a factor $\frac{\tilde{\nu}^2}{\nu t} $ in front.
The spectrum of these scaling indexes $p_n$ (unrelated to a dilatation operator as far as we know)  is given in the following table:
\begin{eqnarray}
\label{VVSpectrum}
    \left|
\begin{array}{c}
\text{indexes of velocity correlation} \\
\hline
 -1 \\
 0\\
 \hline
  \fbox{$2 n\text{ if }n\in \mathbb{Z}\land n\geq 1$}\\
 \sfrac{5}{2}\\
 \sfrac{11}{2}\\
 \sfrac{15}{2}\\
 \fbox{$\sfrac{1}{2} (15+4 n)\text{ if }n\in \mathbb{Z}\land n\geq 1$} \\
 \fbox{$7 \pm\imath t_{n}\text{ if }n\in \mathbb{Z}$} \\
\end{array}
\right|
\end{eqnarray}
\pctPDF{VelocityCorrelation}{The universal function \eqref{VVcorr} as a function of $\rho =|\vec{r}|/\sqrt{\tilde{\nu} t}$. The turnover is caused by subleading terms in the power expansion starting with $\rho^{2}$. The next terms involve quantum oscillations manifesting as a turnover from power growth to power decay. Asymptotic at large $\rho$ is $\const{}$ as it follows from the table \eqref{VVSpectrum} of the decay indexes. }
\pctWPDF{0.9}{Xi2Oscillations}{Oscillations of the effective index $\xi_2(r)$ at large $\log_{10}r$. This is a theoretical curve corresponding to the zoom into a region of large separations, currently inaccessible by DNS with required accuracy.}

We have no CFT but a calculable spectrum of scaling dimensions. Unlike the CFT in three dimensions, this spectrum is complex.

Only poles with positive/(non-positive) real parts contribute to the power expansion at $|\vec{r}| \to 0/\infty$.
The leading term at $r \to 0$ is $r^2$, which is calculable in general form from its definition after expanding the exponential and averaging over directions of $\vec{k}$
\begin{eqnarray}
   && \VEV{\Delta \vec v^2}(r) \to \frac{  r^2 \mathcal E_\infty(t)}{ 24 \pi^3 \nu} =\frac{ r^2 \tilde{\nu}}{20736  \pi\zeta (3)\nu t^2} 
\end{eqnarray}
The next term is $r^{\sfrac{5}{2}}$ as it follows from the table in \eqref{VVSpectrum}.
As mentioned about the energy spectrum, there is no K41 scaling index $ p = \sfrac{2}{3}$. This omission is not a contradiction, as K41 does not apply to decaying turbulence.
Instead of pure scaling laws with single decay indexes, we found an infinite spectrum of scaling indexes, some of which come as complex conjugate pairs, which leads to quantum oscillations: see Fig.\ref{fig::Xi2Oscillations} for the oscillation of the index $\xi_2$ at the latest stage. This region is inaccessible with modern scale of the DNS.

The theoretical curve in Fig.\ref{fig::VelocityCorrelation} agrees with the Fourier-transformed data of \cite{SreeniDecaying, GDSM24} but deviates from \cite{GregXi2}. The probable reason is the compressibility of the air in real experiments in \cite{GregXi2}.

\textbf{The imaginary parts of these complex scaling dimensions coincide with those of the famous Riemann zeta  zeros, establishing an intriguing relation between Turbulence and Number Theory.}

\section{Discussion}\label{discussion}

This section will try to reconcile traditional perspectives on turbulence phenomena, including enduring beliefs and myths, with our new theory.

\subsection{Myth and Reality of Turbulent Scaling Laws}

More than eighty years ago, Kolmogorov and Obukhov made a breakthrough in turbulence theory by establishing the relation \eqref{K41vvv} for the three-point correlation function of the velocity field in turbulent flow. This formula only selects the \textbf{potential part} of the triple velocity correlation function by taking two coincident points. When taking the curl, we get zero: 
$$ \VEV{\val(\vec{r}_0) \vbe(\vec{r}_0) \oga(\vec{r} + \vec{r}_0)} = 0 $$
This relation indicates no constraints on the rotational part concerning triple vorticity correlations and sheds no light on the scale invariance of turbulence theory.

Moreover, the linear term in coordinates does not have any support in the Fourier spectrum: this is an example of the harmonic terms added to the \BS{} integral for the velocity field.

The K41 scaling law was introduced as a phenomenological model, not intended to replace the missing microscopic theory. It was based on the assumption that the local dissipation density does not fluctuate—a limitation its creators were aware of, prompting them to propose a log-normal distribution for this variable later on. However, even this modified model lacked a microscopic justification and failed to fully correspond with empirical observations.

Subsequent experiments and DNS \cite{YZ93, YS04, SY21} have invalidated the K41 scaling laws (including the log-normal model) over the past thirty years. Regarding decaying turbulence, the experimental data \cite{SreeniDecaying} have diverged even further from Kolmogorov scaling laws despite all attempts to stretch this data or discard the non-fitting region as "erosion." We highlight significant deviations—six orders of magnitude—from the $k^{-\sfrac{5}{3}}$ scaling in Fig. \ref{fig::DecayingSpectrum}. There are also recent measurements \cite{GregXi2} with significant deviations of the log-log derivative of the second velocity moment $\VEV{\Delta v^2}(r)$ from the $\frac{2}{3}$ predicted by K41.

A broader assumption posited that \textbf{power laws with anomalous dimensions} might exist in the inertial range. The assumed analogy to critical phenomena led to the proposal of multifractal scaling laws \cite{FP85}, which, as a phenomenological model, successfully described observed deviations from the K41 laws in forced turbulence\cite{YS04, SY21}.

However, there are no theoretical grounds for conformal symmetry in turbulence; the 'current conservation' conditions $\dal \val = 0, \dga \oga = 0$ in the CFT would prescribe both velocity and vorticity dimensions of $d-1 = 2$, contradicting the fact that vorticity is a curl of velocity.

Moreover, the anomalous dimension would not explain decaying turbulence, as the log-log plots would remain straight lines, though the slopes would become irrational numbers. We must allow nonlinear correlation functions on a log-log scale, as indicated by the data in Fig. 2 (top) in \cite{GregXi2}. Here, energy spectra for various parameters converge into a universal curve on the decreasing part of the spectrum, which is curved on the log-log scale, indicating that a simple power law cannot describe it. Instead, it is a nontrivial universal function of $\log k$, spanning several decades.

Both the DNS and experimental papers \cite{SreeniDecaying, GregXi2} noted significant deviations from scaling laws (whether K41 or multifractal). The conclusion of \cite{SreeniDecaying} was cautiously negative: "it is somewhat disappointing that the results are not more closely aligned with theoretical arguments." The most recent paper \cite{GregXi2} made a stronger negative claim: "Our results point to a Reynolds number-independent logarithmic correction to the classical power law for decaying turbulence that calls for theoretical understanding."

Our recent paper \cite{M22} presented the theoretical argument for the breaking of scaling laws due to logarithmic divergences in a dilute gas of vortex filaments. In this approximation, there were logarithmic terms in the effective energy for the filament, leading to violations of scaling laws akin to asymptotic freedom in QCD. This approximation does not apply to decaying turbulence with a large density of vortex structures, but at least it identifies a dynamical mechanism for the deviations from the scaling laws.

In the present paper, we used raw data from the DNS \cite{SreeniDecaying, GDSM24} to compute the effective index of the velocity correlation by numerical Fourier transform of their energy spectrum (see Section \ref{DNS}). Our effective index is plotted in Fig. \ref{fig::MoveIndex14}. The K41 scaling law $\Delta v^2 \sim r^{\sfrac{2}{3}}$ is very far from reality, as it is clear from these plots. Our theory is much closer, and by fitting our arbitrary length scale, we obtained a very good fit in the turbulent range within experimental errors.

The microscopic theory developed here is not conformally invariant but retains a critical aspect of CFT. The Mellin transform of the vorticity field's correlation function in coordinate space is a meromorphic function of the Mellin parameter $p$. This characteristic implies some underlying scale invariance with an infinite discrete spectrum of complex anomalous dimensions \eqref{VVSpectrum}.

In this way, \textbf{our theory extends the multifractal scaling laws by accommodating an infinite discrete spectrum of scaling dimensions}. According to recent DNS results \cite{DecayingTalk}, our predictions align with observed slopes, unlike conventional scaling models such as those proposed by Kolmogorov and Saffman.

This result suggests that experiments and DNS in decaying turbulence should be conducted at larger scales and higher Reynolds numbers, fitting the data for logarithms of the spectrum and energy dissipation decay as nonlinear functions of the logarithm of the product of wavevector and the square root of time, as we (successfully) did in Section \ref{DNS}.

As part of this reevaluation, one should magnify and study the decaying part of the spectrum way beyond its middle part, roughly described by $-\frac{5}{3}$ law with logarithmic corrections. This decaying part, the "dissipative subrange," was discarded as an unfitting puzzle piece in conventional data fitting, but it fits well in our theory.

Heisenberg \cite{heisenberg1971bedeutung} and Chandrasekhar \cite{chandrasekhar1949heisenberg} proposed in the middle of the last century for the "dissipative subrange," the spectrum decay $k^{-7}$, based on a model equation by Heisenberg. At that time, there were no mathematical tools to solve the turbulence problem exactly, so the model equations like that one passed as theories. The fame of two Nobel laureates involved added weight to this model assumption, so it stays alive to this day.

K.R. Sreenivasan dispelled this die-hard myth in his paper \cite{SreeniChandra}.
\begin{quote}
    Chandra's initial enthusiasm for Heisenberg's work was moderated when he learned from J. von Neumann, in a colloquium that Chandra gave at Princeton in the spring of 1949, that the $k^{-7}$ power law in the far-dissipation range did not have experimental support.
\end{quote}
Our theory also contradicts the $k^{-7}$ law: no pole exists between $-13/2$ and $-8 \pm \imath t_n$ in our Mellin transform spectrum \eqref{SpecPoles}. Instead, we have nontrivial dynamics at this "dissipative subrange," not just in the "inertial range" between energy pumping and dissipation. The full plot of the effective index for the spectrum is shown in Fig. \ref{fig::MuIndex}, asymptotically approaching $-\frac{7}{2}$. As long as enough energy is left for a turbulent flow, our theory has a universal decaying spectrum spanning several decades and a strongly curved second moment $\VEV{\Delta v^2}(r)$ with effective index (log derivative) shown in Fig. \ref{fig::PlotIndex14}. We call the corresponding range of scales "turbulent range", combining old inertial and dissipation ranges. Our theory perfectly matches DNS/experiment in the whole turbulent range without any dimensionless fitting parameters.

In conclusion, single-power scaling laws cannot describe the observed critical phenomena in decaying turbulence. Instead, compare these phenomena with the microscopic theory, which goes beyond empirical laws, replacing them with universal nonlinear functions for the energy spectrum, energy decay, and velocity correlation.

\subsection{Stochastic solution of the \NS{} equation and ergodic hypothesis}

Richard Feynman wrote about turbulence in his Lectures in Physics \cite{Feynman}: 
\begin{quote}
    Nobody in physics has really been able to analyze it mathematically satisfactorily in spite of its importance to the sister sciences. It is the analysis of circulating or turbulent fluids.... What we really cannot do is deal with actual, wet water running through a pipe. That is the central problem which we ought to solve some day, and we have not.
\end{quote}
These words were written over sixty years ago, but the problem remains unsolved.

We address this problem by seeking a stochastic solution to the unforced Navier-Stokes equation, covering a universal manifold over an infinite time. Our solution reveals power-like singularities in correlation functions, which emerge after averaging across this manifold in the statistical limit as its dimension approaches infinity.

We already have a partial answer to the question posed by Feynman about water flowing through a pipe: its local kinetic energy density decays with time (and also with distance from the grid at the entrance) as $t^{-\sfrac{5}{4}}$. A more detailed answer for the pressure as a function of the total amount of water pushed through the pipe would require some future investigation of our solution.

These singularities originate from the infinite time required to cover this manifold uniformly.

We identify this manifold (the Euler ensemble) by solving the loop equation—a subset of the Hopf functional equation for the generating functional of velocity field probabilities. Notably, none of the solutions within this manifold experiences finite-time blow-ups. Instead, we encounter singularities from the fixed trajectory of the loop equation, not from its finite-time solutions.

We adopt the most natural invariant measure from the perspective of number theory: each element of the Euler ensemble is weighted equally, an assumption we term the quantum ergodic hypothesis.

With this invariant measure, the Euler ensemble stands out because the loop functional is equal to the trace of the evolution operator in a quantum system—the Fermi particles on a ring interacting with a quantum field made of fractions of $\pi$. Every distinct state, including every distinct fraction, contributes equally to the quantum trace in this discrete system. For our purposes, this means treating every element in the Euler ensemble with equal weight.

Our quantum ergodic hypothesis thus stipulates an exact equivalence between the loop functional and the quantum trace of an evolution operator for the one-dimensional ring of Fermi particles. This quantum analogy has paved the way for an analytical solution in the turbulent limit. This limit corresponds to the quasiclassical limit of this Fermi system, where viscosity acts like Planck's constant.

The quantum ergodic hypothesis results from a more general relation between fluid dynamics and quantum mechanics. The loop equation, in the general case, with finite viscosity and external stochastic forces in the \NS{} equation, represents the \Schr{} equation in loop space \cite{M93, M23PR}.

The time evolution of the wave function, which is the loop functional, is given by the sum over "classical" histories, corresponding to this loop space Hamiltonian. Dirac and Feynman established that the weight of each history $\exp{\imath S/\hbar}$ for any quantum system with the Action $S$.

Comparing the Dirac-Feynman rule with the definition of the loop functional, we conclude that the velocity circulation $\Gamma_C[v] = \oint_C \vec v \cdot d \vec{r}$ plays the role of the Action in the loop quantum mechanics, and viscosity plays the role of Planck's constant. The sum goes over the classical solutions of the \NS{} equation with various initial data, with equal weight for each solution.

We cannot describe all these solutions for the velocity field, but surprisingly, we can compute the weighted sum of all these solutions, i.e., the loop functional. Let us stress that this is not an asymptotic solution of the loop equation, with some terms neglected at large times. Our solution \eqref{LoopSol}, \eqref{PFT}, \eqref{Fsol} exactly satisfies the \NS{} equation at a finite time for the loop functional in the turbulent limit $N \to \infty, \nu \to 0$.

The wave functional is not localized in the weak turbulence phase (small circulations compared to viscosity), so states are not quantized. This quantization occurs only in the strong turbulent phase (large circulations); the Euler ensemble or the Fermi ring describes it.

In the same way, as one-dimensional quantum mechanical motion in external potential becomes finite and quantized when potential well becomes deep enough, our loop functional at large time transforms from continuous distribution in loop space to the quantized finite motion characterized by the momentum loop $\vec{P}(\theta)$. The continuous quantum mechanical integral over phase space with equal weight $1/(2 \pi)$ per DOF becomes the discrete sum over all distinct quantum levels with unit weight.

We hope our quantum ergodic hypothesis can be proven from the \NS{} equation, starting with the quantum representation of the loop equation as a \Schr{} equation in loop space. If confirmed, this hidden quantum mechanics of classical turbulence may become a law of Nature rather than a computational method.

Like the classical ergodic hypothesis, this may take another hundred years. Theoretical physics does not wait for rigorous proof but rather explores the consequences of the conjectured theory and compares them with physical and numerical experiments.

\subsection{The physical meaning of the loop equation and dimensional reduction}

The long-term evolution of Newton's dynamical system with many particles eventually covers the energy surface (microcanonical ensemble). The ergodic hypothesis, accepted in Physics but still not proven mathematically, states that this energy surface is covered uniformly. Turbulence theory aims to find a replacement for the microcanonical ensemble for the \NS{} equation. This surface would also participate in the decay in the pure \NS{} equation without artificial forcing. 

In both cases, Newton and \NS{}, the probability distribution must satisfy the Hopf equation, which follows from the dynamics without specifying the mechanism of the stochastization. Indeed, the Gibbs and microcanonical distributions in Newton's dynamics satisfy the Hopf equation in a rather trivial way: it reduces to the conservation of the probability measure (Liouville theorem), which suggests the energy surface as the only additive integral of motion to use in the exponent of the fixed point distribution.

The loop technology has been thoroughly discussed in the last few decades in gauge theories, including QCD \cite{MLDMig86, Mig98Hidden, LoopEqBootstrap, KazakovLooqBootstrap, kazakovZheng2024}, where the loop equations were first derived \cite{MMEq79, Mig83}.

In the case of decaying turbulence, the loop equations represent a closed subset of the Hopf equations, which is still sufficient to generate the statistics of vorticity. In this case, the exact solution we have found for the loop functional also follows from the integrals of motion, this time, the conservation laws in the loop space.

The loop space Hamiltonian we derived from the unforced \NS{} equation does not have any potential terms (those with explicit dependence upon the shape of the loop). The Schrödinger equation with only kinetic energy in the Hamiltonian conserves the momentum. The corresponding wave function is a superposition of plane waves $\exp{\I \vec{p} \cdot \vec{x}}$. This superposition is the solution we have found, except the dot product $\vec{p} \cdot \vec{x}$ becomes a symplectic form $\oint \vec{P}(\theta) \cdot d \vec{C}(\theta)$ in the loop space.

Our momentum $\vec{P}(\theta, t)$ is not an integral of motion, but simple scaling properties of the pure \NS{} equation lead to the solution with $\vec{P}(\theta, t) \propto \vec{F}(\theta)/\sqrt{t}$, with $\vec{F}(\theta)$ being the integral of motion at large time (i.e., a fixed point). The rest is a purely technical task: substituting this scaling solution into the \NS{} equation and solving the resulting universal equation for a fixed point $\vec{F}(\theta)$.

This equation led us to the Fermi ring in the quasiclassical limit. The solution of the Fermi ring in this limit resulted in the energy spectrum and dissipation in a finite system found in Section \ref{finitesystem}.

\subsection{Classical Flow and Quantum Mechanics}

Our computations rely significantly on number theory, particularly Jordan's multitotients, $\varphi_l(q)$, which extend the Euler totient function \cite{multitotients}. What could number theory share with turbulent flow? The quantization of parameters in the fixed manifold of decaying turbulence originates from the quantum correspondence identified in the nineties \cite{M93}. The statistical distribution of a nonlinear classical Navier-Stokes (NS) PDE is related to the wave functional of a linear Schrödinger equation in loop space, as detailed in the previous section.

Is quantum mechanics at work in a water faucet with a grid filter? Yes and no.

This relationship is indirect: the loop functional, a Fourier transform of the classical probability distribution for circulation, equals the complex quantum amplitude of the loop space theory represented by a Fermi ring. Probability is real and positive, while its Fourier transform is complex, reflecting the irreversibility of the \NS{} dynamics.

A probability distribution for circulation satisfies another loop equation \cite{M93, M23PR}, with all coefficients being real. The time evolution of this probability spans alternative histories, as it typically does in statistical mechanics, but the weights of each history remain real and positive.

On the other hand, the complex loop functional adheres to the quantum mechanical evolution equation, resulting in quantum interference of alternative histories. The quantum interference is quite significant here, with the dominant complex trajectory—the instanton—describing notable quantum effects such as exponential cancellation of contributions from alternative histories and penetration into classically forbidden regions within loop space.

The quantization mechanism of the parameters in the plane wave solution mirrors that of conventional quantum mechanics: the solution's periodicity $\vec{P}(\xi + 1) = \vec{P}(\xi)$.

From the conventional perspective, the fractal curve in complex momentum space, $\vec{P}(\theta) \in \mathbb{C}_3$, or Fermi particles on a circle, may seem unrelated to turbulent flow. One could ask how turbulence can be addressed without directly studying the velocity field.

The well-established duality phenomenon, known as ADS/CFT duality, equates the strong coupling phase of a conformal field theory to the weak coupling phase of quantum geometry in another dimension. This relationship is more than just a method for calculating correlation functions of a strongly fluctuating vector field; it reveals a \textbf{second identity} of the original theory.

The quantum Fermi ring's particle density fluctuations disappear in the turbulent limit. In contrast, the original theory's fluctuations are so intense that the vorticity field ceases to exist. This Fermi ring arguably reveals the \textbf{true identity} of decaying turbulence, as a classical function describes a smooth Fermion density—the instanton solution we identified.

Coming back to decaying turbulence in a water faucet, the probability distribution of velocity circulation in the water stream you wash your hands with is classical, of course; however, this classical distribution decays by a complex law based on quantum interference for its Fourier transform. This Fourier transform (loop functional) adds up from alternative histories with quantum mechanical complex weights.

Mathematical physics sometimes has dual representations for the same phenomena, such as the duality between particles and waves in quantum mechanics, or between matrix models \cite{brezin1990exactly, douglas1990strings, GMQG} and Liouville theory \cite{Pol81, KPZ} in 2D quantum gravity.

Additional complexities arise in gauge theory due to short-distance singularities involving the infinite fluctuating degrees of freedom in quantum field theory. Wilson loop functionals in coordinate space are singular in gauge field theory and cannot be multiplicatively renormalized.

There are no short-distance divergences in the Navier-Stokes equations and NS loop dynamics. The Euler equations represent a singular limit that, as argued, should be resolved through singular topological solitons regularized by the Burgers vortex.

In the dual theory of this paper, the singularities exist in the dual space $\mathbb{C}_3$. Anomalous dissipation is achieved through numerous finite discontinuities of the fractal curve $\vec{P}(\xi) \in \mathbb{C}_3$.

However, these singularities only occur in the inviscid limit, $\nu \propto 1/N^2 \to 0$, representing Euler singularities like line vortices that are regularized by finite viscosity, just like our singularities.

\subsection{Renormalizability of the Inviscid Limit of the Loop Equation}

Let us examine the relationship $\nu N^2 = \const{}$ between diminishing viscosity and the increasing number $N$ of discontinuities on the momentum loop $\vec{P}(\theta)$.

The Navier-Stokes (\NS{}) equation is essentially an idealization of molecular dynamics, approximating nonlocal theory by a truncated expansion in powers of gradients.

In the case of laminar flow, this truncated expansion poses no issues. However, in our solution for the loop equation, the velocity field becomes singular in the local limit.

Mathematically, velocity and vorticity are not ordinary functions in $\mathbb{R}_3$ but stochastic variables, with $\Delta \vec v \sim  (\Delta \vec{r})^\alpha$.

Fractal Calculus \cite{FractalCalculus} was introduced to generally describe such fields. Yet, this alone does not account for turbulence, particularly since the more general power laws with multifractal dimensions $\VEV{(\Delta \vec v)^n} \sim  (\Delta \vec{r})^{\zeta_n}$ cannot be explained in this manner.

In our theory, there is a universal function \eqref{VVcorr} for the velocity correlation function as a function of the separation $\vec{r}$. However, the $\vec{r}$ dependence is not defined by a single power: the Mellin transform of this function reveals infinitely many poles \eqref{VVSpectrum} in the complex plane. The absence of branch cuts supports the idea of scale invariance, but the analogy with CFT ends here.

Our solution characterizes a fractal vorticity field, yet this field does not conform to known fractal types. The fractal curve $\vec{P}(\theta)$ is not merely a random walk on a circle: dynamic restrictions, such as periodicity, lead to nontrivial critical behavior not describable by any finite set of fractal power laws.

We define the turbulent limit of the velocity field distribution by discretizing the loop equation (replacing the continuum loop with a polygon with an increasing number of vertices).

The relation between vanishing viscosity and an increasing number of degrees of freedom parallels the renormalization group (RG) relation between the bare coupling constant $g_0$ and the lattice spacing $a$ in QCD: $a \propto g_0^\alpha \exp{- \beta/g_0^2}$.

The naive local limit $a \to 0, g_0 = \const{}$ does not exist in QCD, but the RG limit effectively describes the strong interaction of hadrons.

Similarly, in our approach, renormalizability is present. The dissipation rate remains finite as $\tilde{\nu} = \nu N^2 \to \const$. Furthermore, the energy spectrum, expressed as a function of $k \sqrt{\tilde{\nu} t}$, remains finite in the local limit.

Thus, akin to renormalizable quantum field theory (QFT), a "dimensional transmutation" occurs where infinities are absorbed into the dimensional parameter $\tilde{\nu}$, defining the time scale.

The energy scale is related to the fluid's physical viscosity $\nu$ in the denominator. Formally setting $\nu = 0$ would tend the physical energy scale to infinity. The fictitious limit $\nu \to 0$ was only taken to compute the turbulent limit of the energy spectrum as a function of Reynolds number $\Gamma/\nu$.

In the real world, this number is large because of the large circulation $\Gamma$ compared to finite viscosity $\nu$. Still, we can compute these dimensionless functions by taking the inviscid limit, as described by equation \eqref{EnergyResidue} in Section \ref{Enorm}.

In particular, the energy spectrum is proportional to $1/\nu$ times the function of the effective Reynolds number, time, and wave vector. We tend the Reynolds number to infinity in the turbulent limit but keep a finite physical viscosity factor in the energy spectrum.

\subsection{Relation of Our Solution to Weak Turbulence}

The solution of the loop equation with finite area derivative, satisfying the Bianchi constraint, belongs to the category of Stokes-type functionals \cite{MMEq79}, similar to the Wilson loop for gauge theory and fluid dynamics.

The Navier-Stokes (NS) Wilson loop represents a case of the Abelian loop functional, characterized by commuting components of the vector field $\vec v$. As extensively discussed in \cite{MMEq79, Mig83, M23PR}, any Stokes-type functional $\Psi(\gamma, C)$ that satisfies the boundary condition at a shrunk loop $\Psi[\gamma, 0]=1$ and solves the loop equation can be iterated in the nonlinear term in the NS equations. This iteration is applicable in conditions of high viscosity.

The resulting expansion in inverse powers of viscosity (representing weak turbulence) coincides with the standard perturbation expansion of the NS equations for the velocity field, averaged over the initial data distribution.

We have shown in \cite{M93, M23PR} (and also here, in Section \ref{mathintro}) how the velocity distribution for random uniform vorticity in the fluid was reproduced by a singular momentum loop $\vec{P}(\theta)$. 

The solution for $\vec{P}(\theta)$ at this specific fixed point of the loop equation was complex and had slowly decreasing Fourier coefficients, leading to a discontinuity $\sign(\theta-\theta')$ in a pair correlation function \eqref{Pcorr}. The corresponding Wilson loop is equated to the Stokes-type functional \eqref{InitPsi}.

Using this example, we demonstrated how a discontinuous random momentum loop describes the vorticity distribution in the stochastic NS flow. Here, the vorticity acts as a global random variable corresponding to a random uniform fluid rotation—a well-known exact solution of the NS equation.

This example corresponds to a special fixed point for the loop equation. Although not general enough to describe turbulent flow, it is an ideal mathematical model for loop technology. It illustrates how the momentum loop solution aggregates all terms of the $1/\nu$ expansion in the NS equation.

In our general solution, with the Euler ensemble, the summation of a divergent perturbation expansion occurs at an extreme level, leading to a universal fixed point in the limit of vanishing viscosity.

Given initial conditions, after a finite time, the solution will still depend on viscosity and the initial conditions. We expect no singularities for a smooth initial velocity field.

\section{Remaining problems}
\begin{itemize}
    \item 
    The loop functional for the circular loop is the simplest object in this theory. It can be computed using the methods developed in this paper, yielding even simpler results. In this case, the classical equation is trivial, so the computations reduce to the functional determinant and the resolvent. On the other hand, this is an observable quantity and could be measured in DNS. It would be an interesting problem to solve and compare with the DNS.
    \item 
    The higher moments of circulation or velocity differences are calculable from this general WKB approximation for the path integral at $\nu \to 0, N \to \infty$. The $n$th moment of $\Delta \vec v = \int d^3 k \frac{\I \vec{k} \times \vec\omega_k}{\vec{k}^2} \left(\exp{\I \vec{k} \cdot \vec{r}} - \exp{\I \vec{k} \cdot \vec{r}'}\right)$ reduces to the loop functional for the same backtracking "hairpin" traversed $n$ times, with vorticity inserted $n$ times at the ends. This computation will produce analytical results for the multifractal scaling laws for velocity moments.
    \item 
    The spectrum of indexes for deviations from our fixed trajectory \cite{migdal2023exact} can be evaluated to compute vorticity correlation functions in the \NS{} equation with an infinitesimal random force.
\end{itemize}

\section{Conclusion}

We have established an exact duality between decaying classical turbulence in 3+1 dimensions and a solvable one-dimensional quantum theory of Fermi particles on a ring. In this framework, Fermi particles are confined within the vorticity field, where strong vorticity fluctuations correspond to weak fluctuations in the Fermion density. Elaborating on this theory, we present an analytical solution for decaying turbulence in quadrature.

Our theory replaces the old scaling laws (including multifractal versions) with universal functions, nonlinear in a log-log scale. This microscopic theory challenges the prevailing paradigms of the past eighty years, but it perfectly fits the available experimental data \cite{GridTurbulence_1966, Comte_Bellot_Corrsin_1971} and recent DNS data \cite{SreeniDecaying, GDSM24, GregDecayTurb}. It offers a far more intriguing perspective, revealing unexpected connections between nonlinear classical physics and quantum mechanics.

\section*{Acknowledgments}
Maxim Bulatov helped me with mathematical suggestions and discussions. He also participated in an unpublished paper \cite{BulMig} where we simulated the Euler ensemble as a Markov chain. The strong cancellations in the vorticity correlation function prevented this simulation from producing reliable numerical results.

I filled the gaps in my knowledge of Number Theory in discussions of the Euler ensemble with Peter Sarnak, Alexandru Zaharesku, Konstantin Khanin, and Debmalya Basak.

Yang-Hui and I discussed Euler's totients at the Cambridge University workshop, where this theory was first reported in November 2023. His comments helped me derive the asymptotic distribution for scaling variables.

I am also grateful to the organizers and participants of the "Field Theory and Turbulence " workshop at ICTS in Bengaluru, India, where this work was advanced in December 2023. Discussions with Katepalli Sreenivasan, Rahul Pandit, and Gregory Falkovich were especially useful. They helped me understand the physics of decaying turbulence in a finite system and match my solution with the DNS data.

This work was discussed at the "Conformal Field Theory, Integrability, and Geometry" conference in Stony Brook in
March 11-15, 2024.
I am very grateful to Nikita Nekrasov, Sasha Polyakov, Sasha Zamolodchikov, Dennis Sullivan and other participants for deep and inspiring discussions.

This work was also thoroughly discussed for a week at the Perimeter Institute for Theoretical Physics, where I spoke at a colloquium on April 3. I received many interesting questions and comments, reflected in the last section, "Discussion," of this paper.
I thank Davide Gaiotto, Luis Lehner, Sabrina Pastersky, Sergey Sibiryakov, Lee Smolin, and Andrey Shkerin for inspiring discussions.
The help from Zechuan Zheng in Mathematica computations of the energy spectrum was also very useful.

I recently presented my theory at the 126th Statistical Mechanics Conference in Rutgers on May 19. There were very useful discussions, and I also got some unpublished data from John Panickacheril John and Diego A. Donzis.

After the Oxford Physics Department colloquium, Alex Schekochihin and I discussed the data and physics of decaying turbulence. I am grateful to Alex for these deep discussions and sharp comments.

The most important help and support came from Sreeni, who, for the last year, discussed the physics of decaying turbulence with me and guided me through the maze of inconsistent DNS and physical experiment data.

This research was supported by a Simons Foundation award ID $686282$ at NYU Abu Dhabi.
\section*{Data Availability}
The \Mathematica{}notebooks used to verify equations and compute some functions are available for download in \cite{MB40, MB41, MB42, MB43, MB44, MB45, MB46}.
The raw DNS data from \cite{SreeniDecaying} can be downloaded from the shared Google Drive \cite{GDSM24}.

\bibliographystyle{plain}
\bibliography{bibliography}
\setcounter{section}{0} 
\appendix
\section{Global Random rotation and Momentum Loop Space}\label{GlobalRot}
This path integral was computed in \cite{M23PR, migdal2023exact} for a special stochastic solution of the Navier-Stokes equation: the global rotation with Gaussian random rotation matrix.
The initial velocity distribution is Gaussian, with a slowly varying correlation function.
The corresponding loop field reads (we set $\gamma =1$ for simplicity in this section)

    \begin{eqnarray}
  &&\Psi_0[C] \equiv \Psi(1,C)_{t=0}  =\nonumber\\
    &&\exp{
	 -\frac{1}{2 \nu^2}\IINT{C} d \vec{C}(\theta) \cdot d \vec{C}(\theta') f\left(\vec{C}(\theta)-\vec{C}(\theta')\right)
	}
\end{eqnarray}

where $ f(\vec{r}) $ is the velocity correlation function
\begin{equation}
  \left \langle v_{\alpha}(r) v_{\beta}(r') \right \rangle =
\left(\delta_{\alpha \beta}- \partial_{\alpha} \partial_{\beta}
\partial_{\mu}^{-2} \right) f(r-r')
\end{equation}

The potential part drops out in the closed loop integral.
The correlation function varies at the macroscopic scale, which means that one could expand it in the Taylor series
\begin{equation}
  f(r-r') \rightarrow f_0 - f_1 (r-r')^2 + \dots \label{Taylor}
\end{equation}

The first term $ f_0 $ is proportional to initial energy density,
\begin{equation}
  \frac{1}{2} \left \langle v_{\alpha}^2 \right \rangle =\frac{d-1}{2}
f_0
\end{equation}
and the second one is proportional to initial energy dissipation
rate ${\mathcal E}_{0}$
\begin{equation}
f_1 = \frac{\mathcal E_{0}}{2 d(d-1) \nu}
\end{equation}
where $ d=3 $ is the dimension of space.
The constant term in (\ref{Taylor}) as well as $ r^2 + r'^2 $ terms
drop from the closed
loop integral, so we are left with the cross-term $ r r' $, which reduces to a full square
\begin{eqnarray}\label{InitPsi}
  &&\Psi_0[C] \to \exp{- \frac{f_1}{\nu^2}\left(\oint dC_{\alpha}(\theta) C_{\beta}(\theta)\right)^2}
\end{eqnarray}

This distribution is almost Gaussian: it reduces to Gaussian one by
extra integration

    \begin{eqnarray}
  &&\Psi_0[C] \rightarrow \const{}\int (d \phi) \exp{ -\phi_{\alpha \beta}^2}\nonumber\\
    &&\exp{
	 2\imath \frac{\sqrt{f_1}}{\nu}
	 \phi_{\mu\nu} \oint dC_{\mu}(\theta) C_{\nu}(\theta)}
\end{eqnarray}

The integration here involves all  $ \frac{d(d-1)}{2} =3 $ independent $ \alpha < \beta $
components of the antisymmetric tensor $ \phi_{\alpha \beta} $.
Note that this is ordinary integration, not the
functional one. 

This distribution can be translated into the momentum loop space.
Here is the resulting stochastic function $\vec{P}(\theta)$, defined by the Fourier expansion on the circle
\begin{eqnarray}
\label{Pexp}
  &&  P_\alpha(\theta)= \sum_{\text{odd }n=1}^\infty P_{\alpha,n} e^{\I n \theta} + \bar{P}_{\alpha,n} e^{-\I n \theta};\\
  && P_{\alpha,n} = \mathcal N(0,1) ;\\
  && \bar{P}_{\alpha,n} =\frac{4 \sqrt{f_1}}{n \nu} \phi_{\alpha \beta}P_{\beta,n} ;\\
  && \phi_{ \alpha\beta} = - \phi_{\beta\alpha};\\
  && \phi_{\alpha\beta} = \mathcal N(0,1) \forall \alpha < \beta;
\end{eqnarray}
At fixed tensor $\phi$ the correlations are
\begin{eqnarray}
  &&\VEV{P_{\alpha,n} P_{\beta,m}}_{t=0}
= \frac{4 \sqrt{f_1}}{m \nu} \delta_{n m} \phi_{\alpha \beta};\\
\label{Pcorr}
&&\VEV{ P_{\alpha}(\theta) P_{\beta}(\theta')}_{t=0} =
2\imath \frac{\sqrt{f_1}}{\nu}\phi_{\alpha \beta} \sign(\theta'-\theta);\\
&&\Psi_0[C]=  \VEV{\exp{
	   \frac{\I }{\nu}\oint d \vec{C}(\theta) \vec{P}(\theta)}}_{P, \phi}
\end{eqnarray}
Though this special solution does not describe isotropic turbulence, it helps understand the mathematical properties of the loop technology. In particular, it shows the significance of the discontinuities of the momentum loop $\vec{P}(\theta)$.
\section{The Markov chain and its Fermionic representation}\label{MarkovFermi}
Here is a new representation of the Euler ensemble, leading us to the exact analytic solution below.

We start by replacing independent random variables $\sigma$ with fixed sum by a Markov process, as suggested in \cite{migdal2023exact}. 
We start with $n$ random values of $\sigma_i = 1$ and remaining $N-n$ values of $\sigma_i = -1$.
Instead of averaging over all of these values simultaneously, we follow a Markov process of picking $\sigma_N,\dots \sigma_1$ one after another. At each step, there will be $M = N,\dots 0$ remaining $\sigma$. We get a transition $ n \Rightarrow n-1$ with probability $\frac{n}{M}$ and $ n \Rightarrow n$ with complementary probability.

Multiplying these probabilities and summing all histories of the Markov process is equivalent to the computation of the product of the Markov matrices
\begin{eqnarray}
    &&\prod_{M=1}^{N} Q(M);\\
    &&  Q(M) \lvert n\rangle = \frac{M-n}{M} \lvert n\rangle + \frac{n}{M}\lvert n-1\rangle;
\end{eqnarray}
This Markov process will be random until $n=0$. After that, all remaining $\sigma_k$ will have negative signs and be taken with probability 1, keeping $n=0$.

The expectation value of some function $\hat{X}(\left\{\sigma\right\})$ reduces to the matrix product

   \begin{eqnarray}
   &&\mathbb{P}[\hat{X}] = \sum_{n=0}^{N_+} \nonumber\\
    &&\left\langle n \right\rvert \left(\prod_{M=1}^{N} \hat{Q}(M)\right)\cdot \hat{X} \cdot \left \lvert N_+\right\rangle;\\
   && \hat{Q}(M)\cdot \hat{X}\lvert n\rangle = \frac{n}{M} \hat{X}\left(\sigma_{M}\to 1\right)\lvert n-1\rangle + \nonumber\\
    &&\frac{M-n}{M} \hat{X}\left(\sigma_{M}\to -1\right)\lvert n\rangle
\end{eqnarray} 

Here $N_+ = (N + \sum \sigma_l)/2 = (N + q r)/2$ is the number of positive sigmas.
The operator $\hat{Q}(M)$ sets in $\hat{X}\lvert n\rangle$ the variable $\sigma_{M}$ to $1$ with probability $\frac{n}{M}$ and to $-1$ with complementary probability.  The generalization of the Markov matrix $Q(M)$ to the operator $\hat{Q}(M)$ will be presented shortly.

Once the whole product is applied to $ \hat{X}$, all the sigma variables in all terms will be specified so that the result will be a number.

This Markov process is implemented as a computer code in \cite{BulMig}, leading to a fast simulation with $O(N^0)$  memory requirement.

Now, we observe that quantum Fermi statistics can represent the Markov chain of Ising variables.
Let us construct the operator $\hat{Q}(M)$ with Fermionic creation and annihilation operators, with occupation numbers $\nu_k = (1 + \sigma_k)/2 = (0,1)$. These operators obey (anti)commutation relations, and they create/annihilate $\sigma=1$ state as follows (with Kronecker delta $\delta[n] \equiv \delta_{n,0}$):

    \begin{eqnarray}
    &&\left[a_i, a^\dagger_j\right]_+ = \delta_{i j};\\
    &&\left[a_i, a_j\right]_+ =\left[a^\dagger_i, a^\dagger_j\right]_+ =0;\\
    &&a^\dagger_n\lvert\sigma_1,\dots,\sigma_N\rangle = \nonumber\\
    &&\delta[\sigma_n+1]\lvert\sigma_1,\dots, \sigma_n \to 1,\dots\sigma_N\rangle;\\
    &&a_n\lvert\sigma_1,\dots,\sigma_N\rangle =\nonumber\\
    && \delta[\sigma_n-1]\lvert\sigma_1,\dots, \sigma_n \to -1,\dots\sigma_N\rangle;\\
    && \hat{\nu}_n  = a^\dagger_n a_n;\\
    && \hat{\nu}_n\lvert\sigma_1,\dots,\sigma_N\rangle =\delta[\sigma_n-1]\lvert\sigma_1,\dots,\sigma_N\rangle 
\end{eqnarray}

The number $n(M)$ of positive sigmas $\sum_{l=1}^M\delta[\sigma_l -1]$ coincides with the occupation number of these Fermi particles. 
\begin{eqnarray}
    && \hat n(M) = \sum_{l=1}^M \hat{\nu}_l;
\end{eqnarray}
This relation leads to the representation
\begin{eqnarray}
    && \hat{Q}(M) = \hat{\nu}_M \frac{\hat n(M)}{M}  + (1 - \hat{\nu}_M)\frac{M-\hat n(M)}{M} ;
\end{eqnarray}
The variables $\sigma_l$ can also be expressed in terms of this operator algebra by using 
\begin{eqnarray}
   \hat\sigma_l = 2 \hat{\nu}_l -1.
\end{eqnarray}
The Wilson loop in \eqref{WilsonLoop} can now be represented as an average over the small Euler ensemble $\mathcal{E}(N)$ of a quantum trace expression

    \begin{eqnarray}
\label{trace formula}
&&\Psi(\gamma, C)= \frac{\VEV{\hat{W}[C]}_{\hat{\Omega} ,\mathcal{E}(N)}}{\VEV{\hat{W}[0]}_{\hat{\Omega} ,\mathcal{E}(N)}};\\
&&\hat{W}[C] =\tr{\left(\hat Z(q r)\exp{\frac{i \gamma \hat \Gamma}{\nu} } \prod_{M=1}^{N} \hat{Q}(M)\right)},\\
&& \hat \Gamma = \sum_l\Delta\vec{C}_l \cdot \hat{\Omega} \cdot  \vec{\mathcal P}_l(t);\\
&&  \hat Z(s) =\oint \frac{d \omega}{2 \pi} \exp{\imath \omega\left(\sum_l \hat\sigma_l - s\right)};\\
 && \Delta\vec{C}_l = \vec{C}\left(\frac{l+1}{N}\right) -\vec{C}\left(\frac{l}{N}\right),\\
&& \vec{\mathcal P}_l(t)=\sqrt{\frac{\nu}{2\left(t+t_0\right)}} \frac{\vec{\mathcal F}_l}{\gamma} , \quad \hat{\Omega} \in O(3) ,\\
&& \vec{\mathcal F}_l=\frac{\left\{\cos \left(\hat \alpha_l\right), \sin \left(\hat\alpha_l\right), 0\right\}}{2 \sin \left(\frac{\beta}{2}\right)} , \\
&& \mathcal{E}(N):\quad p,q,r \in \mathbb{Z} \quad -N\leq qr\leq N ,\nonumber\\
&&\textrm{with} \quad 0<p<q<N, \textbf{ gcd}(p,q) =1,  \\
&& \hat\alpha_{l}= \beta\sum_{k=1}^{l-1} (2 \hat{\nu}_k -1);
\end{eqnarray}

The last component of the vector $\vec{\mathcal F}_l$ is set to $0$ as this component does not depend on $l$ and yields zero in the sum over the loop $\sum_l \Delta \vec{C}_l =0$.

The proof of equivalence to the combinatorial formula with an average over $\sigma_l = \pm 1$ can be given using the following Lemma (obvious for a physicist).

\begin{lemma}  
The operators $\hat{\nu}_l $ all commute with each other.
\end{lemma}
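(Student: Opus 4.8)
The statement to establish is $[\hat\nu_i,\hat\nu_j]=0$ for all $i,j$, where $\hat\nu_i=a^\dagger_i a_i$. The plan is to reduce everything to the canonical anticommutation relations $\left[a_i,a^\dagger_j\right]_+=\delta_{ij}$ and $\left[a_i,a_j\right]_+=\left[a^\dagger_i,a^\dagger_j\right]_+=0$ already posited above, exploiting the fact that each number operator $\hat\nu_i$ is \emph{even} (quadratic in the fermion operators), so that its commutator with any product obeys the ordinary Leibniz rule $[\hat\nu_i,XY]=[\hat\nu_i,X]Y+X[\hat\nu_i,Y]$. No extra signs appear when the even operator $\hat\nu_i$ is carried across a product, which is precisely what makes the argument clean.

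First I would record the two elementary commutators of a number operator with a single creation or annihilation operator. Using the mixed identity $[AB,C]=A\left[B,C\right]_+-\left[A,C\right]_+B$, valid whenever $C$ is odd, I obtain
\begin{align}
[\hat\nu_i,a_j] &= a^\dagger_i\left[a_i,a_j\right]_+ - \left[a^\dagger_i,a_j\right]_+a_i = -\,\delta_{ij}\,a_i;\\
[\hat\nu_i,a^\dagger_j] &= a^\dagger_i\left[a_i,a^\dagger_j\right]_+ - \left[a^\dagger_i,a^\dagger_j\right]_+a_i = \delta_{ij}\,a^\dagger_i.
\end{align}

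Then I would apply the Leibniz rule to $\hat\nu_j=a^\dagger_j a_j$ and substitute these two results:
\begin{align}
[\hat\nu_i,\hat\nu_j] &= [\hat\nu_i,a^\dagger_j]\,a_j + a^\dagger_j\,[\hat\nu_i,a_j] \nonumber\\
&= \delta_{ij}\,a^\dagger_i a_j - \delta_{ij}\,a^\dagger_j a_i .
\end{align}
For $i\neq j$ the Kronecker delta annihilates both terms, and for $i=j$ the two terms are identical and cancel; hence $[\hat\nu_i,\hat\nu_j]=0$ in every case, which is the claim.

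There is no genuine obstacle here — this is just the standard fact that fermion occupation numbers are simultaneously diagonalizable. The only point demanding care is the bookkeeping of fermionic signs: one must invoke the mixed commutator/anticommutator identity correctly, or equivalently, in a direct computation, track the four sign flips incurred when pushing $a^\dagger_j$ and $a_j$ through $a^\dagger_i a_i$, whose product is $+1$. Once proven, this lemma licenses treating the $\hat\nu_l$ as ordinary commuting occupation numbers $\nu_l\in\{0,1\}$ and thereby justifies evaluating the operator product and the quantum trace in \eqref{trace formula} as a sum over simultaneous eigenstates, which is what the Fermionic reformulation of the Markov chain relies upon.
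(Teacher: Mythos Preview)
Your proof is correct. Both you and the paper reduce the statement to the canonical anticommutation relations, but the organization differs. The paper computes the product $\hat\nu_l\hat\nu_n = a^\dagger_l a_l a^\dagger_n a_n$ directly by pushing $a_l$ past $a^\dagger_n$ via $a_l a^\dagger_n = \delta_{ln} - a^\dagger_n a_l$, obtaining $a^\dagger_l a_n\,\delta_{ln} - a^\dagger_l a^\dagger_n a_l a_n$, and then observes that this expression is symmetric under $l\leftrightarrow n$ (the first term by the Kronecker delta, the second because swapping both the $a^\dagger$'s and the $a$'s yields two sign flips). You instead compute the commutator itself, first establishing $[\hat\nu_i,a_j]=-\delta_{ij}a_i$ and $[\hat\nu_i,a^\dagger_j]=\delta_{ij}a^\dagger_i$ via the mixed identity $[AB,C]=A[B,C]_+-[A,C]_+B$, and then applying the Leibniz rule to $\hat\nu_j=a^\dagger_j a_j$. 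Your route has the minor advantage of yielding the intermediate relations $[\hat\nu_i,a_j]$ and $[\hat\nu_i,a^\dagger_j]$ as reusable byproducts, while the paper's direct product computation is marginally shorter. Both are standard textbook arguments for this elementary fact, and your closing remark about its role in justifying the trace formula is exactly the point.
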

\begin{proof}
    Using commutation relations, we can write 
    \begin{eqnarray}
        &&\hat{\nu}_l \hat{\nu}_n = a^\dagger_l (\delta_{l n} - a^\dagger_n a_l) a_n =\nonumber\\
        &&a^\dagger_l a_n \delta_{l n} - a^\dagger_l  a^\dagger_n  a_l  a_n 
    \end{eqnarray}
    Interchanging indexes $l, n$ in this relation, we see that the first term does not change due to Kronecker delta, and the second term does not change because $a^\dagger_l, a^\dagger_n$ anti-commute, as well as $a_l, a_n $, so the second term is symmetric as well.
    Therefore, $\hat{\nu}_l \hat{\nu}_n = \hat{\nu}_n \hat{\nu}_l $
\end{proof}
\begin{theorem}
    The trace formula \eqref{trace formula} equals the expectation value of the momentum loop ansatz \eqref{LoopSol}, \eqref{PFT}, \eqref{Fsol} in the big Euler ensemble.
\end{theorem}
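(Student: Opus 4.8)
The plan is to evaluate the quantum trace directly in the occupation-number (Fock) basis and show that it collapses to the equal-weight average over spin configurations that defines the big Euler ensemble. The entry point is the Lemma just proved: since all the number operators $\hat{\nu}_l = a^\dagger_l a_l$ commute, they share the occupation-number eigenbasis $\lvert \sigma_1,\dots,\sigma_N\rangle$ with eigenvalues $\nu_l = (1+\sigma_l)/2$. Every operator inside the trace is built solely from these commuting operators: $\hat{Z}(qr)$ through $\hat\sigma_l = 2\hat{\nu}_l - 1$, the phase $\exp{\I \gamma \hat\Gamma/\nu}$ through $\hat\alpha_l = \beta\sum_{k<l}(2\hat{\nu}_k - 1)$, and each Markov factor $\hat{Q}(M)$ through $\hat{\nu}_M$ and $\hat{n}(M)=\sum_{l\le M}\hat{\nu}_l$. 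Hence all factors are simultaneously diagonal, the matrix element of the product equals the product of matrix elements, and the trace reduces to a single sum over the $2^N$ Fock states of products of diagonal entries.

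First I would factor that diagonal element. For a configuration with $N_+$ up-spins, $\hat{Z}(qr)$ acts through its $\omega$-integral as the Kronecker projector $\delta[\sum_l\sigma_l - qr]$, the phase contributes $\exp{\I\gamma\Gamma[\{\sigma\}]/\nu}$ with $\Gamma[\{\sigma\}]=\sum_l \Delta\vec{C}_l\cdot\hat\Omega\cdot\vec{\mathcal P}_l$, and the Markov factors contribute $\prod_{M=1}^N\bigl(\nu_M\frac{n(M)}{M}+(1-\nu_M)\frac{M-n(M)}{M}\bigr)$. The crux is a telescoping identity for this product: reading the up-spin positions in increasing order the numerators run through $1,2,\dots,N_+$, the down-spin numerators run through $1,2,\dots,N-N_+$, and the denominators give $N!$, so the product equals $N_+!\,(N-N_+)!/N! = 1/\Binom{N}{N_+}$, \emph{independent of the arrangement}. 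This is exactly the uniform weight, so with the constraint enforced by $\hat{Z}(qr)$,
\[
\hat{W}[C] = \frac{1}{\Binom{N}{N_+}}\sum_{\{\sigma\}:\,\sum_l\sigma_l = qr}\exp{\frac{\I\gamma}{\nu}\Gamma[\{\sigma\}]} = \VEV{\exp{\frac{\I\gamma}{\nu}\Gamma}}_\sigma ,
\]
the equal-weight (ergodic) average over spins at fixed $p,q,r,\hat\Omega$.

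Next I would match the classical circulation with the ansatz. The vertex vector $\vec{\mathcal P}_l = \sqrt{\nu/(2(t+t_0))}\,\vec{\mathcal F}_l/\gamma$ reproduces $\vec{P}(\theta)$ of \eqref{PFT}, \eqref{Fsol} up to the global rotation $\hat\Omega$ and the omission of the third component of $\vec{F}_k$; since that component is $\theta$-independent and $\sum_l \Delta\vec{C}_l = 0$, it drops from $\Gamma$, exactly as the imaginary part of $\vec{P}$ was shown to drop in \eqref{CircByParts}. Thus $\Gamma[\{\sigma\}]$ is the loop circulation of the ansatz for the corresponding ensemble element. Averaging $\hat{W}[C]$ over $\hat\Omega\in O(3)$ and over the small Euler ensemble $\mathcal{E}(N)$ of admissible $p,q,r$, and dividing by $\VEV{\hat{W}[0]}$ to enforce $\Psi(\gamma,0)=1$, then assembles precisely the big Euler ensemble average on the right of \eqref{LoopSol}.

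The main obstacle is the middle step: establishing that the product of Markov factors is genuinely configuration-independent and equal to $1/\Binom{N}{N_+}$. Everything hinges on this, because it is what converts the sequential Markov weighting into the equal-weight measure demanded by the ergodic hypothesis; any residual dependence on the ordering of spins would make the trace compute a biased average and break the identification with the ensemble. The commutation Lemma is what licenses replacing the diagonal element of the product by the product of diagonal elements, so the telescoping can be carried out factor by factor. Verifying the bookkeeping of numerators and denominators with full care, including the boundary cases $n(M)=0$ and $n(M)=M$ so that no spurious vanishing denominators arise, is the one place where the argument must be done rigorously rather than by physical intuition.
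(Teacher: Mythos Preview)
Your proof is correct and follows the same strategy as the paper: use the commutativity of all $\hat{\nu}_l$ to diagonalize every factor simultaneously in the Fock basis, reduce the trace to a sum over spin configurations weighted by the diagonal Markov factors and the constraint from $\hat Z(qr)$, and identify the result with the equal-weight big Euler ensemble average. Your explicit telescoping computation showing $\prod_{M=1}^N\hat Q(M)\big|_{\{\sigma\}}=N_+!\,(N-N_+)!/N!=1/\binom{N}{N_+}$ is slightly more self-contained than the paper's version, which reaches the same conclusion by recasting the product as the sequence of Markov transition probabilities and invoking the standard fact (cited from \cite{Norris_2007}) that these reproduce the uniform combinatorial weights.
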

\begin{proof}
As all the operators $\hat{\nu}_l$ commute with each other, the operators $\hat{Q}(M)$ can be applied in arbitrary order to the states $\Sigma =\lvert\sigma_1,\dots \sigma_N\rangle$ involved in the trace. The same is true about individual terms in the circulation in the exponential of the Wilson loop. These terms $\vec{\mathcal F}_l$ involve the operators $\hat \alpha_l$, which commute with each other and with each $\hat{Q}(M)$. Thus, we can use the ordered product of the operators $ \hat G_l =\hat{Q}(l) \exp{\imath \omega \hat\sigma_l + \frac{i \gamma}{\nu} \Delta \vec{C}_l \cdot \vec{\mathcal P}_l(t)}$. 
    Each of the operators $\hat G_l $ acting in turn on arbitrary state $\Sigma$ will create two terms with $\delta[\sigma_l\pm 1]$. 
    The exponential in $\hat G_l$ will involve $\hat \sigma_k , k \le l$.
    As a result of the application of the operator $\hat Z_l =\prod_{k=1}^l \hat G_k $ to the state vector $\Sigma$ we get $2^l$ terms with $\Sigma\prod_{k=1}^l \delta[\sigma_k- \eta_k], \eta_k = \pm 1$. The factors $\hat Z_l$ will involve only $\hat \sigma_k, k \le l$, which are all reduced to $\eta_k, k \le l$ in virtue of the product of the Kronecker deltas.
    Multiplying all operators $\hat G_M$ will lead to superposition $\hat \Pi_N$ of $2^N$ terms, each with product $ \prod_{M=1}^N \delta[\sigma_M-\eta_M]$ with various choices of the signs $\eta_i=\pm 1$ for each $i$.
    Furthermore, the product of Kronecker deltas will project the total sum of $2^N$ combinations of the states $\Sigma$ in the trace $\tr{\dots}$ to a single term corresponding to a particular history $\eta_1,\dots\eta_N$ of the Markov process. 
    The product of Kronecker deltas in each history will be multiplied by the same state vector $\Sigma$, by the product of Markov transition probabilities, and by the exponential $\exp{ \frac{i \gamma}{\nu} \sum_l\Delta \vec{C}_l \cdot \hat{\Omega} \cdot \vec{P}_l(t)}$ with the operators $\hat \sigma$ in $\vec{\mathcal P}_k(t)$ replaced by numbers $\eta$ leading to the usual numeric $\vec{P}_l(t)$.
    The transition probabilities of the Markov process are designed to reproduce combinatorial probabilities of random sigmas, adding up to one after summation over histories \cite{Norris_2007}.
    The integration over $\omega$ will produce $\delta\left[\sum_l \hat\eta_l - s\right]$.
    This delta function will reduce the trace to the required sum over all histories of the Markov process with a fixed $\sum_l \eta_l$.
\end{proof}

We have found a third vertex of the triangle of equivalent theories: the decaying turbulence in three-dimensional space, the fractal curve in complex space, and Fermi particles on a ring. 
By degrees of freedom, this is a one-dimensional Fermi-gas in the statistical limit $N \to \infty$. However, there is no local Hamiltonian in this quantum partition function, just a trace of certain products of operators in Fock space. So, an algebraic (or quantum statistical) problem remains to find the continuum limit of this theory of the fermion ring.
\section{Path integral over Markov histories}\label{PathIntegral}
Let us represent the product $ \Pi_N$ of the transitional probabilities of the particular history of the Markov processes as follows (with $n_\pm \equiv n_\pm(l), \Delta n_\pm = -1$)
    \begin{eqnarray}
   &&  \Pi_N = \exp{ N \Lambda_N};\\
   &&  \Lambda_N = \frac{1}{N}\sum_l  G_l; \\
   && G_l = \Delta n_+ \log\left(\frac{n_+}{n_+ + n_-}\right) +\nonumber\\
   &&\Delta n_- \log\left(\frac{n_-}{n_+ + n_-}\right);\\
   && n_+ = \sum_{k \le l} \nu_k;\\
   && n_- = \sum_{k \le l} (1-\nu_k);
\end{eqnarray}
These $n_\pm$ are net numbers of $\eta = \pm1$ in terms of Ising spins or occupation numbers $\nu_k = (1,0)$ in the Fermi representation.
There is an extra constraint on the Markov process
\begin{eqnarray}\label{constraint1}
    n_+ + n_- = l ;\;\forall l
\end{eqnarray}
which follows from the above definition in terms of the occupation numbers.
We can redefine $n_\pm$ as $N$ times the piecewise constant functions. 

    \begin{eqnarray}
    &&n_\pm = N f_\pm(\xi);\\
    && f_\pm(\xi) = \sum_{k =1}^{\floor{N \xi}} \frac{\nu_k}{N};\\
    && f'_\pm(\xi) = \sum_{l=1}^N\delta\left(\xi -\frac{l}{N}\right) \sum_{k =1}^{l} \frac{\nu_k}{N} ;\\
    && 0 < \xi < 1;
\end{eqnarray}

The sums can be rewritten as Lebesgue integrals
    \begin{eqnarray}
    &&\Lambda_N = \int_1^0 \nonumber\\
    &&\left(d f_+(\xi) \log\left(\frac{f_+(\xi)}{\xi}\right) +d f_-(\xi) \log\left(\frac{f_-(\xi)}{\xi}\right)\right) 
\end{eqnarray}

The sum over histories of the Markov process will become a \textbf{path integral} over the difference $\phi =f_+(\xi)-f_-(\xi) $ 

    \begin{eqnarray}\label{func int phi}
   && \sum_{\eta_.=\pm1}\exp{ N (\Lambda_N + \imath \Lambda_N^{(1)}) } \nonumber\\
   &&\to\int D\phi \exp{ N (\Lambda_N + \imath \Lambda_N^{(1)}) }
\end{eqnarray}

This path integral will be dominated by the "classical history," maximizing the product of transitional probabilities if such a classical trajectory exists.
\pctPDF{ZNS}{The plot of the function $\Lambda(s)$. As required, it is positive, takes a maximal value $\log (2)$ at $s=0$, and vanishes at both ends $s = \pm 1$ of the physical region.}
The first term (without the circulation term) brings the variational problem

   \begin{eqnarray}
&&\max_{\phi}\Lambda_N[\phi];\\
 && \Lambda_N[\phi] =\int_1^0 d \xi \nonumber\\
 &&\left(\dbyd{f_+}{\xi}
\log\left(\frac{f_+}{\xi}\right) + \dbyd{f_-}{\xi}  \log\left(\frac{f_-}{\xi}\right) \right);\\
   && f_\pm(\xi) = \frac{1}{2} \left(\xi \pm \phi(\xi)\right);\\
   && f_\pm(\xi) \ge 0;
\end{eqnarray} 

This problem is, however, a degenerate one, as the functional reduces to the integral of the total derivative:
   \begin{eqnarray}
   \label{degenerate}
  && \fbyf{\Lambda_N[\phi]}{\phi(\xi)} =0;\\
  &&\Lambda_N[\phi]  =\int_1^0 d \left(f_+\log f_+ + f_-\log f_- \right)  +1 + \int_0^1 d \xi \log\xi=\nonumber\\
  &&-\frac{1}{2} (1-s) \log (1-s)-\frac{1}{2} (1+s) \log (1+s)+\log (2);
\end{eqnarray} 

It depends on the boundary condition $\phi(0) = 0, \phi(1) = s$ but not on the shape of $\phi(\xi)$.

This expression matches the Stirling formula for the logarithm of the binomial coefficient in the combinatorial solution  \cite{migdal2023exact} for the Euler ensemble 
    \begin{eqnarray}
   &&\lim_{N \to \infty}\frac{\log \Binom{N}{N(1+s)/2}}{N} =\log (2) +\nonumber\\
   &&-\frac{1}{2} (1-s) \log (1-s)-\frac{1}{2} (1+s) \log (1+s)
\end{eqnarray}

This $\Lambda(s) = \Lambda_\infty(s)$ is a smooth even function of $s$ taking positive values from $\Lambda(\pm1) =0$ to the maximal value $\Lambda(0)= \log (2)$ (see Fig.\ref{fig::ZNS}).

Now, let us add the circulation term to the exponential of the partition function \eqref{trace formula}.
This term can be directly expressed in terms of the difference between our two densities $N \phi(\xi) = N f_+(\xi) - N f_-(\xi)$:
\begin{eqnarray}\label{Lambda1}
&& \imath N\Lambda^{(1)}_N[\phi, C_\Omega] = \frac{\imath }{\sqrt{2\nu t}} \int_0^1  d\vec{C}_\Omega(\xi) \cdot \vec{F}(\xi);\\
&& \vec{F}(\xi) = \frac{\left\{\sin(\beta N\phi(\xi)),\cos(\beta N\phi(\xi)),0 \right\}}{2 \sin(\beta/2)};\\
&& \vec{C}_\Omega(\theta) = \hat{\Omega}\cdot \vec{C}(\theta);
\end{eqnarray}
We remember that the last component of the vector $\vec{F}(\xi)$ does not contribute to the circulation integral in \eqref{Lambda1} with the closed loop $\vec{C}_\Omega(\xi)$.  This is why we replaced it with zero, not because it is small but because it drops.
The key assumption is, of course, the existence of the smooth limit of the charge density $\phi(\xi)$ of these fermions when they are densely covering this loop.

We are working with $\alpha(\xi) = \beta N \phi(\xi)$ in the following.

The measure for paths $[D \alpha] $ is undetermined. The derivatives of these alphas were quantized in the original Fermi theory: each step $ \alpha'(\xi) \approx N \Delta \alpha = N\beta \sigma = \pm N\beta$. 

As we demonstrate below, in continuum theory, this discrete distribution can be replaced by a Gaussian distribution with the same mean square
\begin{eqnarray}
    \sum\displaylimits_{\alpha' = \pm N\beta} \leftrightarrow \int d  \alpha' \exp{ -  \frac{(\alpha')^2}{2 N^2 \beta^2}}
\end{eqnarray}
To demonstrate that, we consider in the critical region $\beta^2 \sim N^{-1} \to 0$ the most general term that arises in the moments of the circulation
in \eqref{Lambda1} (see \cite{BZ23} for some exact computations of these moments)

    \begin{eqnarray}
    &&2^{-N}\sum\displaylimits_{\sigma_i = \pm 1} \exp{ \imath \beta \sum_i  k_i  \sigma_i} \nonumber\\
    &&=\prod_i \cos \beta  k_i \to  \exp{ -  \beta^2/2\sum_i k_i^2}
\end{eqnarray} 

where $k_i$ are some integers. With a large number $N$ of these integers, the sum in the exponential becomes an integral, which is equivalent to a Gaussian integral
 \begin{eqnarray}
    &&\exp{ -  \beta^2/2\sum_i k_i^2} =\nonumber\\
    &&\prod_i \int_{-\infty}^\infty \frac{d \sigma_i}{\sqrt{2 \pi}} \exp{ -\sigma_i^2/2} \exp{ \imath \beta k_i  \sigma_i} \nonumber\\
    &&\to \exp{  - N \beta^2/2\int_0^1 d \xi k(\xi)^2}
\end{eqnarray}   

We arrive at the standard path integral measure
\begin{eqnarray}  
    \label{gaussMeasure}
  &&\int [D \alpha]  =  \int D \alpha(\xi) \exp{- \int_0^1 d \xi \frac{(\alpha')^2}{2 N \beta^2}};\\
  && \frac{\int [D \alpha] \exp{ \imath N \int_0^1 d \xi\alpha(\xi) K(\xi)}}{\int [D \alpha]} = \nonumber\\
  &&\exp{-\frac{N^2 \iint d\xi_1 d\xi_2 K(\xi_1)K(\xi_2)G_{1,2}}{2}};\\
  && G_{1,2} = \VEV{\alpha(\xi_1) \alpha(\xi_2)};
\end{eqnarray}

The next Appendix will compute this Green's function $G_{1,2}=G(\xi_1, \xi_2)$.

 Thus, we arrive at the following path integral in the  continuum limit
 \begin{subequations}\label{pathIntegralSol}
          \begin{eqnarray}
\label{PsiSol}
    &&\Psi[C] = \frac{\displaystyle\sum\displaylimits_{p<q; (p,q)} \int\displaylimits_{\Omega \in O(3)} d \Omega\int [D \alpha]}{\displaystyle\sum\displaylimits_{p<q; (p,q)}|O(3)|\int [D\alpha]}\nonumber\\
    &&\exp{ \imath\frac{\int_0^1 d \xi \Im\left( \mathcal C'_\Omega(\xi)
    \exp{\imath\alpha(\xi)}\right)}{2 \sin(\pi p/q) \sqrt{2\nu (t+ t_0)}}};\\
    && \mathcal C_\Omega(\theta) =  \vec{C}(\theta) \cdot \hat{\Omega} \cdot\{\imath,1,0\};
\end{eqnarray}
 \end{subequations}

We get the $U(1)$  statistical model with the boundary condition $\alpha(1) = \alpha(0) + \beta N s$. The period $ \beta N s = 2 \pi p r$ is a multiple of $2 \pi$, which is irrelevant at $N \to \infty$. The effective potential for this theory is a linear function of the loop slope $\vec{C}'(\xi)$.

This model is yet another representation of the Euler ensemble, suitable for the continuum limit.

\section{Matching path integral with combinatorial sums}
The results of the path integration over $\alpha$ must match the combinatorial calculations with $\sigma_l = \pm 1$ in the limit of large $N$. Without the interaction provided by the circulation term in \eqref{PsiSol}, this path integral is dominated by a linear trajectory
\begin{eqnarray}
    \alpha_{cl}(\xi) = \beta N \xi s;
\end{eqnarray}

We already saw the match between the classical Action $\Lambda_N[\phi(\xi) = \xi s]$ and the asymptotic value of the logarithm of the Binomial coefficient of the combinatorial solution for the sum over $\sigma$ variables.

Let us verify some examples of the expectation values over $\sigma$. The simplest is (with $ n \neq m$)
\begin{eqnarray}
    \VEV{\sigma_n \sigma_m}_{\sum \sigma = N s}
\end{eqnarray}
The direct calculation using methods of \cite{migdal2023exact, BZ23} leads to
   \begin{eqnarray}
   && \VEV{\sigma_n \sigma_m}_{\sum \sigma = N s} = -\oint \frac{d \omega}{2 \pi Z} e^{\imath \omega N s} \cos((N-2) \omega) \sin^2(\omega) = \nonumber\\
   &&\frac{ (1-N s^2)}{2^{N-3} Z N(1-s^2)} \Binom{N-2}{\frac{1}{2} (N+N s-2)};\\
   && Z = \oint \frac{d \omega}{2 \pi} e^{\imath \omega N s} \cos(N \omega) =
   2^{-N} \binom{N}{\frac{1}{2} (s N+N)}
\end{eqnarray} 
Using Gamma function properties, this ratio simplifies to
\begin{eqnarray}
    \frac{N s^2-1}{N-1} 
\end{eqnarray}
This result can be derived from symmetry without any integration. 
    \begin{eqnarray}
    &&\VEV{\sigma_n \sigma_m}_{\sum \sigma = N s} =\nonumber\\
    &&A(N,s) (1- \delta_{n m}) + \delta_{n m};\\
    && \sum_{n, m} \VEV{\sigma_n \sigma_m}_{\sum \sigma = N s} = s^2 N^2 = A(N,s) N(N-1) + N\\
    &&A(N,s) = \frac{N s^2-1}{N-1} 
\end{eqnarray}
The same limit $A(\infty,s) = s^2$ follows from the classical trajectory
\begin{eqnarray}
    \VEV{\sigma_n \sigma_m}_{\sum \sigma = N s} \to \frac{\alpha_{cl}'(\xi)}{\beta N}\frac{\alpha_{cl}'(\xi')}{\beta N} = s^2
\end{eqnarray}

Let us consider less trivial example \cite{migdal2023exact, BZ23}
\label{Uvar}
\begin{eqnarray}
    &&U_{n,m} \to  \sum_{k=n}^{m-1} \exp{\imath  \alpha_{k,n}} ;\\
    && \alpha_{k,n} = \beta\sum_{\substack{l=0\\ l \neq n}}^{k} \sigma_l;
\end{eqnarray}
We shall set $s=0$, as this is the leading contribution to the partition function.
The expectation value of $U_{n,m}$ in our continuum limit becomes
   \begin{eqnarray}
    &&\VEV{U_{n,m}} = N  \int_{\xi_1}^{\xi_2} d \xi \VEV{\exp{\I \alpha(\xi)}} =\nonumber\\
    &&N \int_{\xi_1}^{\xi_2} d \xi \exp{- \oh G(\xi,\xi)}
\end{eqnarray} 

Here $G(\xi_1, \xi_2)$ is the Green's function corresponding to a 1D particle on a line interval $ \xi \in(0,1)$, introduced in the previous section.
It satisfies the equation, which follows from our Gaussian Action
\begin{eqnarray}
    &&\partial_\xi^2 G(\xi, \xi') = -\beta^2 N \delta(\xi-\xi');\\
    &&G(0, \xi') = G(\xi, 0) =0;
\end{eqnarray}
The solution is
\begin{eqnarray}
    G(\xi, \xi') = \oh \beta^2 N \left(\xi + \xi' - |\xi-\xi'|\right)
\end{eqnarray}
Thus, we find
   \begin{eqnarray}
    &&\VEV{U_{n,m}} = N \int_{\xi_1}^{\xi_2} d \xi \exp{-\oh \beta^2 N\xi} 
   =  \nonumber\\
    &&\frac{2} {\beta^2} \left(\exp{- y/2} - \exp{- x/2}\right);\\
   && x = \beta^2 N\xi_1;\\
   && y = \beta^2 N\xi_2;
\end{eqnarray} 

in agreement with \cite{migdal2023exact, BZ23} in the critical region $N \to \infty ,\beta^2 \sim 1/N$.
Finally, the expectation value
\begin{eqnarray}
    \VEV{U_{n,m}\bar U_{n,m}} = \sum_{l=n}^{m-1} \sum_{k=n}^{m-1} \VEV{\exp{\I \alpha_{k n} - \I \alpha_{l n}}}
\end{eqnarray}
Here, the Gaussian path integration yields

    \begin{eqnarray}
    \label{Uaverage}
    &&\VEV{U_{n,m}\bar U_{n,m}} \to N^2 \int_{\xi_1}^{\xi_2} d \xi \int_{\xi_1}^{\xi_2}d \xi' \nonumber\\
    &&\exp{-\oh \left(G(\xi, \xi) + G(\xi', \xi') - 2 G(\xi, \xi')\right)} =\nonumber\\
    && N^2 \int_{\xi_1}^{\xi_2} d \xi \int_{\xi_1}^{\xi_2}d \xi'\exp{\oh \beta^2 N | \xi - \xi'|}=\nonumber\\
    && \frac{4}{\beta^4} \left(2\exp{ (y-x)/2} + x -y -2\right)
\end{eqnarray}

This result also agrees with combinatorial computations in \cite{migdal2023exact, BZ23}.
\section{Small Euler ensemble in statistical limit}\label{smallEuler}
The remaining problem is averaging over the variables $N,p,q,r$ of the small Euler ensemble.

The variable $s = \frac{q r}{N}$ is  distributed between $-1,1$ with the binomial weight \cite{migdal2023exact, BZ23} $\Binom{N}{N(1+s)/2}$ peaked at $s =0$. There is a finite term coming from $ r =0$ plus a continuum spectrum coming from large $r$
\begin{eqnarray}
    W(r) = \begin{cases}
        1   &\text{ if } r=0;\\
        \frac{\sqrt{ 2 \pi N}}{q}\exp{- \frac{(q r)^2}{2 N}} &\text{otherwise};
    \end{cases}
\end{eqnarray}
As it was conjectured in \cite{migdal2023exact} and supported by rigorous estimates in \cite{BZ23}, the $r =0$ term dominates the sums, after which the variables $y, x $ can be treated as continuous variables.

The variable $p$ at fixed $q$ has a discrete distribution
\begin{eqnarray}
    f_p(p\mid q) = \frac{\displaystyle \sum_{\substack{p=1 \\ (p,q)}}^{q-1} \delta(p-n) }{\varphi(q)}
\end{eqnarray}
As we shall see, rather than $p$, we would need an asymptotic distribution of a scaling variable
\begin{eqnarray}\label{Xvar}
    X(p,q) = \frac{1}{q^2} \cot^2\left(\frac{\pi p}{q}\right)
\end{eqnarray}
This distribution for $X(p,q)$ at fixed $q \to \infty$ can be found analytically, using newly established relations for the cotangent sums (see Appendix in \cite{migdal2023exact}, and \Mathematica{} notebook \cite{MB40}). Asymptotically, at large $q$, these relations read

    \begin{eqnarray}
   &&\VEV{X^n} \equiv \lim_{q \to \infty} \frac{\displaystyle\sum_{\substack{p=1 \\ (p,q)}}^{q-1} X(p,q)^n}{q} = \nonumber\\
    &&\delta_{n,0} + \frac{2 \pi ^{-2 n}  \zeta (2 n)}{(2 n+1) \zeta (2 n+1)}
\end{eqnarray}

This relation can be transformed further as

    \begin{eqnarray}
    &&\VEV{X^n}  = \nonumber\\
    &&
    \begin{cases}
        1 &\text{if } n=0\\
        \frac{\displaystyle 2  \sum_{k=1}^{\infty} \varphi(k)k^{-(2 n +1)}}{(2 n+1)\pi ^{2 n} } &\text{if } n >0
    \end{cases}
\end{eqnarray}

The Mellin transform of these moments leads to the following singular distribution

    \begin{eqnarray}\label{CotDist}
&& \VEV{X^n} = \int_0^\infty f_X(X)\,d X\, X^n;\\
\label{WprimeX}
   && f_X(X)= (1-\alpha)\delta(X) + \pi X\sqrt{X}\Phi\left(\floor*{\frac{1}{\pi \sqrt{X}}}\right);\\
   && \alpha = \pi \int_0^\infty X \sqrt{X} d X \Phi\left(\floor*{\frac{1}{\pi \sqrt{X}}}\right)=\nonumber\\
   && \frac{2\pi}{5} \sum_1^\infty \Phi(k) \left(\frac{1}{(\pi k)^5} - \frac{1}{(\pi (k+1))^5}\right) = \nonumber\\
    &&\frac{2}{5 \pi^4} \sum_1^\infty \frac{\varphi(k)}{k^5} = \frac{1}{225 \zeta (5)}
\end{eqnarray}

where $\Phi(n)$ is the totient summatory function
\begin{eqnarray}
    \label{PhiDef}
    \Phi(q) = \sum_{n=1}^q \varphi(n)
\end{eqnarray}

The distribution can also be rewritten as an infinite sum

   \begin{eqnarray}
    &&\int d x f_X(x) F(x) = (1-\alpha)F(0) + \nonumber\\
    &&\pi \sum_{n=1}^\infty \varphi(n) \int\displaylimits_{0}^{\frac{1}{\pi^2 n^2}} x^{\frac{3}{2}} \, d x F(x)
\end{eqnarray} 

The normalization of this distribution comes out $1$ as it should, with factor $1-\alpha$ in front of the delta function.

\pctPDF{PiPhi}{ The log-log-plot of the distribution $f_X(X)$.  It is equals $\pi\Phi(k) X\sqrt{X}$ at each internal $\frac{1}{\pi^2 (k+1)^2} < X < \frac{1}{\pi^2 k^2}$ with positive integer $k$. Asymptotically $f_X(X) \to \frac{3 \sqrt{X}}{\pi^3 }$ at $ X \to 0$.}

The upper limit of $X$
\begin{eqnarray}
    X_{max} = X\left(q -1, q\right) \to \frac{1}{\pi^2}
\end{eqnarray}
Our distribution \eqref{CotDist} is consistent with this upper limit, as the argument $\floor*{\frac{1}{\pi \sqrt{X}}}$ becomes zero at $X \pi^2 > 1 $.
It is plotted in Fig. \ref{fig::PiPhi}.

Once we are zooming into the tails of the $p,q$ distribution, we also must recall that
\begin{eqnarray}
  &&  \mathbb{P}(q < y N) = \frac{\Phi\left( \floor*{N y}\right)}{\Phi\left( N \right)}\to y^2;\\
  \label{WprimeY}
  && f_y(y) = \frac{\displaystyle\sum_{q=2}^{N} \delta\left( y - \frac{q}{N}\right) \varphi(q)}{\Phi\left( N \right)} 
\end{eqnarray}

\section{The velocity correlation in Fourier space}\label{VelCorr}
Let us outline an analytical solution. We shift the time variable by $t+t_0\Rightarrow t$ to simplify the formulas.

The correlation function reduces to the following average over the big Euler ensemble $\mathbb{E}$ of our random curves in complex space \cite{migdal2023exact}
    \begin{eqnarray}\label{CorrFunc}
   && \VEV{\vec\omega(\vec 0) \cdot \vec \omega(\vec{r})} =  \int_{O(3)} \frac{d \Omega}{4 t^2|O(3)|}\nonumber\\
   &&\sum_{0\le n<m< N}\VEV{\vec \omega_m \cdot \vec \omega_n e^{\imath\vec \rho\cdot \hat{\Omega} \cdot \left( \vec S_{n,m}- \vec S_{m,n} \right) }}_{\mathbb E};\\
   && \vec S_{n,m} = \frac{\sum_{k=n}^{m-1}  \vec{F}_k}{m  -n \pmod{N}};\\
   &&\vec \omega_k = \left\{0,0,\frac{\imath \sigma_k}{2} \cot\left(\frac{\beta}{2}\right)\right\};\\
   && \vec \rho = \frac{\vec{r}}{2\sqrt{\nu t}};\\
   && \VEV{X[\sigma...]}_{p,q,r}\equiv \frac{\sum_{\mathbb{E}} X[\sigma...]\delta[q r-\sum \sigma]}{\sum_{\mathbb{E}} \delta[q r-\sum \sigma]};
\end{eqnarray}
Integrating the global rotation matrix $O(3)$ is part of the ensemble averaging.

Let us apply our path integral to the expectation value over spins $\sigma = \pm 1$ in the big Euler ensemble, with the distribution of $q, X$ established in the previous section.
In the continuum limit, we replace summation with integration. We arrive at the following expression for the correlation function:
    \begin{eqnarray}
    &&\VEV{\vec\omega(\vec 0) \cdot \vec \omega(\vec{r})} \propto \nonumber\\
    &&\sum_{\text{even } q<N} \sum_{p;\, (p|q)}\; \frac{\cot^2(\pi p/q)}{(p/q)^2}\int\displaylimits_{0 < \xi_1 < \xi_2 < 1} d \xi_1 d \xi_2\int_{O(3)} d \Omega\nonumber\\
    && 
    \frac{\int [D \alpha]\alpha'(\xi_1) \alpha'(\xi_2)e^{\imath \frac{\vec{r} \cdot \hat{\Omega} \cdot \Im  \vec V(\xi_1,\xi_2)}{\sqrt{\nu  t}} }}
   { t^2 \Phi(N)|O(3)| \int [D\alpha]};\\
   && \vec V(\xi_1,\xi_2) = \nonumber\\
    && q \sqrt{X}\left\{ \imath, 1,0\right\}\left(S(\xi_1, \xi_2)-   S(\xi_2, 1+\xi_1) \right);\\
   && S(a,b)  =\frac{\int_{a}^{b} d \xi e^{\imath \alpha(\xi)}}{ b-a};
\end{eqnarray}

Here and in the following, we skip all positive constant factors, including powers of N. Ultimately, we restore the correct normalization of the vorticity correlation using its value at $\vec{r} =0$ computed in previous work \cite{migdal2023exact}. 

The computations significantly simplify in Fourier space.
    \begin{eqnarray}
\label{wwK}
    &&\VEV{\vec\omega(\vec 0) \cdot \vec \omega(\vec{k})} = \int d^3 \vec{r} \VEV{\vec\omega(\vec 0) \cdot \vec \omega(\vec{r})} \exp{- \imath \vec{k} \cdot \vec{r}}\propto\nonumber\\
    &&\sum_{\text{even } q<N} \sum_{p;\, (p|q)}\; \frac{\cot^2(\pi p/q)}\int_{O(3)} d \Omega{(p/q)^2}\int\displaylimits_{0 < \xi_1 < \xi_2 < 1} d \xi_1 d \xi_2\nonumber\\
    &&
    \frac{\displaystyle\int [D \alpha]\alpha'(\xi_1) \alpha'(\xi_2)\delta\left(\frac{\vec{r} \cdot \hat{\Omega} \cdot \Im  \vec V(\xi_1,\xi_2)}{\sqrt{\nu  t}} - \vec{k}\right)}{ t^2 \Phi(N)|O(3)| \int [D\alpha]}
\end{eqnarray}

The angular integration $\int d \Omega$ yields

 \begin{eqnarray}
   && \int_{O(3)} d \Omega\delta\left(\frac{ \hat{\Omega} \cdot \Im  \vec V(\xi_1,\xi_2)}{\sqrt{\nu  t}}- \vec{k}\right) \nonumber\\
    &&\propto \frac{\sqrt{\nu  t}}{\vec{k}^2} \delta\left(  \abs{\Im  \vec V(\xi_1,\xi_2)} - |\vec{k}|\sqrt{\nu  t}\right)
\end{eqnarray}

Now, using the Lagrange multiplier $\lambda$ for this condition, 
and shifting integration over $\lambda$ to the real axis, we have to minimize effective action
  \begin{eqnarray}
   && A[\alpha,\lambda] = \frac{\pi y^2 X}{2} \int_{\xi_1}^{1+\xi_1} (\alpha')^2 + \nonumber\\
    &&\lambda y \sqrt{X} \abs{\frac{\displaystyle\int_{\xi_1}^{\xi_2} d \xi\; e^{\imath \alpha}}{\xi_2-\xi_1} - 
   \frac{\displaystyle\int_{\xi_2}^{1+\xi_1} d \xi\; e^{\imath \alpha}}{1+\xi_1-\xi_2}};\\
   \label{J12Def}
   && \partial_\lambda A[\alpha,\lambda] = |\vec{k}|\sqrt{\nu  t};
\end{eqnarray}  
 
This variational problem reduces to two pendulum equations
    \begin{eqnarray}
   && \alpha_1'' + \frac{r}{\xi_2-\xi_1} \sin\alpha_1 =0 ; \forall{\xi_1 < \xi < \xi_2}\\
   &&\alpha_2'' + \frac{r}{\xi_2-\xi_1-1}\sin\alpha_2 =0 ; \forall{\xi_2 < \xi < 1+\xi_1}\\
   \label{lambdaR}
   &&r = \frac{\lambda}{\pi y \sqrt{X}  I(r)};\\
   && I(r) = \abs{\frac{\displaystyle\int_{\xi_1}^{\xi_2} d \xi\; e^{\imath \alpha_1}}{\xi_2-\xi_1} - 
   \frac{\displaystyle\int_{\xi_2}^{1+\xi_1} d \xi\; e^{\imath \alpha_2}}{1+\xi_1-\xi_2}};
\end{eqnarray}

The well-known solution is Jacobi amplitude $\text{am}(x\mid u)$, 

    \begin{subequations}
    \begin{eqnarray}
    &&\alpha_1(\xi)= \nonumber\\
    &&2 \text{am}\left(\frac{\xi-\alpha_2}{2} a_1 \mid\frac{r}{a_1^2 (\xi_2-\xi_1) }\right);\\
    &&\alpha_2(\xi)= \nonumber\\
    &&2 \text{am}\left(\frac{\xi-\alpha_2}{2} a_2 \mid-\frac{r}{a_2^2 (1-\xi_2+\xi_1 )}\right);
\end{eqnarray}
\end{subequations}

The free parameters $a_1, a_2, \alpha_1, \alpha_2$ satisfy four equations 
    \begin{subequations}
    \begin{eqnarray}
&&\alpha_1(\xi_2) = \alpha_2(\xi_2);\\
&&\alpha'_1(\xi_2) = \alpha'_2(\xi_2);\\
&&\alpha_1(\xi_1) = \alpha_2(1+\xi_1);\\
&&\alpha'_1(\xi_1) = \alpha'_2(1+\xi_1);
\end{eqnarray}

\end{subequations}
together with the constraint following from the variation of the Lagrange multiplier $\lambda$:
\begin{eqnarray}
&&I(r) = \frac{|\vec{k}|\sqrt{\nu  t} }{y \sqrt{X}}
\end{eqnarray}
\section{Turbulent viscosity and the local limit}\label{TurbVisc}

These five equations, in general, are quite complex, but there is one simplifying property.
In the local limit $N \to \infty$, the remaining effective action at the extremum
    \begin{eqnarray}
    &&N A[\alpha_c, \lambda_c] =\frac{\pi N y^2 X}{2} \nonumber\\
    &&\left(\int_{\xi_1}^{\xi_2} \, d \xi (\alpha'_1(\xi))^2 + \int_{\xi_2}^{1+\xi_1}\, d \xi (\alpha'_2(\xi))^2 \right)
\end{eqnarray}

grows as $N$, unless  both $\alpha_1(\xi) \sim \alpha_2(\xi) \sim N^{-\oh} \to 0$.
In this case, the above constraint can be expanded in $\alpha_1, \alpha_2$. As we show in \cite{MB41}, the leading constant and linear terms both cancel so that the quadratic term remains

    \begin{eqnarray}
\label{quadraticConstraint}
    &&\frac{2|\vec{k}|\sqrt{\nu  t} }{y \sqrt{X}} =\nonumber\\
    &&
    \abs{\int_{\xi_1}^{\xi_2} \frac{d \xi\alpha^2_1(\xi)} { \xi_2-\xi_1} -
\int_{\xi_2}^{1+\xi_1} \frac{d \xi\alpha^2_2(\xi) }{1+ \xi_1-\xi_2}} \nonumber\\
    &&\sim \frac{1}{N}
\end{eqnarray}

This estimate then requires vanishing viscosity in the local limit, at fixed turbulent viscosity
\begin{eqnarray}
   \tilde{\nu} =  \nu N^2   \to\const{}.
\end{eqnarray}

This phenomenon of renormalization of viscosity by a factor of $N^2$ was already observed in our first paper \cite{migdal2023exact}. Our Euler ensemble in the local limit $N\to \infty$ can only solve the inviscid limit of the \NS{} decaying turbulence, with finite $\tilde{\nu}$ acting as a turbulent viscosity.

The desired anomalous dissipation phenomenon takes place in this limit of our theory.
\section{Linearized classical trajectory}\label{ClasTraj}
Returning to the elliptic function solution, we rewrite it in the linearized case at $a_1\sim a_2 \to 0$. This linearization is equivalent to replacing $\sin(\alpha) \to \alpha$ in the differential equation and studying the resulting linear ODE as a boundary problem.
We choose different parametrizations in this linear case
    \begin{eqnarray}
    &&\alpha_1(\xi) = \nonumber\\
    &&a\left(\cos\left(K_1(\xi-\xi_2)\right) + \frac{b}{K_1} \sin\left(K_1(\xi-\xi_2)\right)\right);\\
    &&\alpha_2(\xi) = \nonumber\\
    &&a\left(\cos\left(K_2(\xi-\xi_2)\right) + \frac{b}{K_2} \sin\left(K_2(\xi-\xi_2)\right)\right);\\
    && K_1 = \sqrt{ \frac{r}{\Delta}};\\
    && K_2 = \sqrt{ \frac{r}{\Delta-1}};\\
    && \Delta = \xi_2 - \xi_1;
\end{eqnarray}

In the physical region $0<\Delta <1, r <0$, $K_2$ is real, and $K_1$ imaginary, but the solution stays real.
The matching conditions at $\alpha_1(\xi_2) = \alpha_2(\xi_2), \alpha'_1(\xi_2) = \alpha'_2(\xi_2)$ are identically satisfied with this Anzatz.
The derivative match  $\alpha'_1(\xi_1) = \alpha'_2(1+\xi_1)$ can be solved exactly for $b$
    \begin{eqnarray}
\label{bsol}
   && b = \frac{P}{Q};\\
    && P = \sqrt{\frac{r}{\Delta -1}} \sin \left((1-\Delta ) \sqrt{\frac{r}{\Delta -1}}\right)+\nonumber\\
    &&\sqrt{\frac{r}{\Delta }} \sin \left(\Delta  \sqrt{\frac{r}{\Delta }}\right);\\
    && Q =\cos \left((\Delta -1) \sqrt{\frac{r}{\Delta -1}}\right)-\cos \left(\Delta  \sqrt{\frac{r}{\Delta }}\right)
\end{eqnarray}
The remaining matching condition $\alpha_1(\xi_1) = \alpha_2(1+\xi_1)$ reduces to the root of the function
    \begin{eqnarray}
    &&g(r, \Delta) = \frac{(2 \Delta -1) \sin \left(\sqrt{(\Delta -1) r}\right) \sin \left(\sqrt{\Delta  r}\right)}{\sqrt{(\Delta -1) \Delta }}+\nonumber\\
    &&2 \cos \left(\sqrt{(\Delta -1) r}\right) \cos \left(\sqrt{\Delta  r}\right)-2
\end{eqnarray}

This function has multiple roots, but we are looking for the real root $r_0(\Delta)$ with minimal value of the action at given $\Delta$
\begin{eqnarray}
\label{action}
    A_c(r, \Delta) = \int_{\xi_1}^{\xi_2} d \xi\alpha'_1(\xi)^2+
\int_{\xi_2}^{1+\xi_1} d \xi\alpha'_2(\xi)^2 
\end{eqnarray}
This integral is elementary, but the expression is too long to be presented here. It can be found in the \Mathematica{} notebook\cite{MB41}, where it is used to select the roots $r_0(\Delta)$ of $g(r,\Delta)$,  minimizing $ A_c\left(r_0(\Delta), \Delta\right)$ for a given value of $\Delta \in (0,1)$.

    \pctPDF{RZeroPlot}{Log plot of $r_0(\Delta).$}
\pctPDF{SPlot}{Plot of $S\left(\Delta\right)$.}

\pctPDF{ActionPlot}{Plot of $A_c\left(r_0(\Delta), \Delta\right)$ for the three real solutions of the equation $g(r,\Delta) =0$.
At $\Delta = \Delta_1$ and $\Delta = \Delta_2$, the action curves intersect; at $\Delta = \oh$, there is a gap between the lowest action ($=0$) and the lowest of the other two. So, there are second-order phase transitions at $\Delta_1, \Delta_2$ and the first-order phase transition at $\Delta = \oh$.}

This lowest action root is plotted in Fig.\ref{fig::RZeroPlot}. The corresponding value of minimal action $L(\Delta) = A_c\left(r_0(\Delta), \Delta\right)$ is plotted in Fig.\ref{fig::LPlot}.

There are phase transitions at 
\begin{eqnarray}
  &&  \Delta = \left\{\Delta_1, \Delta_2, \Delta_3 \right\};\\
  && \Delta_1 = 0.157143;\\
  && \Delta_2 = 0.43015;\\
  && \Delta_3 = \oh;
\end{eqnarray} 

These branch points in $\Delta$ correspond to the switch of the lowest action solution.
At small positive $\Delta -\oh$
\begin{eqnarray}
    &&r_0(\Delta) \to -48 (\Delta-\oh) -\frac{1536 (\Delta-\oh)^3}{7} + \dots;\\
    && A_0(r_0(\Delta), \Delta) \to 256 (\Delta- \oh)^2 + \dots
\end{eqnarray}

At $\Delta \to 1$ all solutions go to $-\infty $ as
\begin{eqnarray}
    r_n(\Delta) \to -\frac{6}{1-\Delta} + O(1)
\end{eqnarray}
This behavior matches numerical computations in \Mathematica{} \cite{MB41}. 

The constraint \eqref{quadraticConstraint} is also reduced to elementary functions, too lengthy to quote here (see \cite{MB41}).

This constraint yields the quadratic relation for the last unknown parameter $a$ in our solution
\begin{eqnarray}
   && a^2  = \frac{|\vec{k}| \sqrt{\nu t}}{S(\Delta) y \sqrt{X}};
\end{eqnarray}
with universal function $S(\Delta)$ presented in \cite{MB41} and shown in Fig. \ref{fig::SPlot}).

The resulting integral (up to the pre-exponential factor $Q$) is equal to
\begin{eqnarray}
  &&  \int [D \alpha] \alpha'(\xi_1) \alpha'(\xi_2)\delta\left(  \abs{\Im  \vec V(\xi_1,\xi_2)} - |\vec{k}|\sqrt{\nu  t}\right) \propto \nonumber\\
  && Q \VEV{\alpha'(\xi_1) \alpha'(\xi_2)} \exp{ - y \sqrt{X} |\vec{k}| \sqrt{\tilde{\nu} t} \frac{ \pi L(\Delta)}{2|S(\Delta)|}}
\end{eqnarray}
The factor $\VEV{\alpha'(\xi_1)\alpha'(\xi_2)}$ contains two terms :
\begin{eqnarray}
    \VEV{\alpha'(\xi_1)\alpha'(\xi_2)} = \alpha_1'(\xi_1)\alpha_1'(\xi_2) + \VEV{\delta\alpha'(\xi_1)\delta\alpha'(\xi_2)}
\end{eqnarray}
The first term is the contribution of the classical solution we have just found, and the second term comes from Gaussian fluctuations $\delta \alpha(\xi)$ around this solution. 

The classical term is calculable (see \cite{MB41})
    \begin{eqnarray}
    &&\alpha_1'(\xi_1)\alpha_1'(\xi_2)= 
    \frac{|\vec{k}|\sqrt{\tilde{\nu} t}}{N y \sqrt{X}}\frac{J\left(\Delta\right)}{|S(\Delta)|};\\
    &&J\left(\Delta\right)=\left.\frac{r A(r,\Delta) B(r,\Delta)}{\Delta(\Delta -1)  C(r,\Delta)^2}\right|_{r = r_0(\Delta)};\\
    && A(r,\Delta) = \nonumber\\
    &&\Delta  \sin \left(\sqrt{(\Delta -1) r}\right)-\sqrt{(\Delta -1) \Delta } \sin \left(\sqrt{\Delta  r}\right);\\
    && B(r,\Delta) = \Delta  \sin \left(\sqrt{(\Delta -1) r}\right) \cos \left(\sqrt{\Delta  r}\right)-\nonumber\\
    &&\sqrt{(\Delta -1) \Delta } \sin \left(\sqrt{\Delta  r}\right) \cos \left(\sqrt{(\Delta -1) r}\right);\\
    && C(r,\Delta) = \cos \left(\sqrt{(\Delta -1) r}\right)-\cos \left(\sqrt{\Delta  r}\right);
\end{eqnarray}

(see Fig. \ref{fig::JPlot}).

\pctPDF{JPlot}{Plot of universal function $J(\Delta)$.the four corves correspond to four phases (solutions for $r_0(\Delta)$ ).}

The fluctuation term $\VEV{\delta\alpha'(\xi_1)\delta\alpha'(\xi_2)}$  is also proportional to $1/N$, therefore we must keep  this term as well.

As for the pre-exponential factor $Q$ in the saddle point integral, it is given by the functional determinant of the operator $\hat L$ corresponding to linearized effective action \eqref{action} in the vicinity of the saddle point $\lambda_c, \alpha_c(\xi)$.
    \begin{eqnarray}
    &&A[\alpha_c + \delta \alpha, \lambda_c + \delta \lambda] \nonumber\\
    &&\to A[\alpha_c , \lambda_c ] + \oh \tr\left(V^\dagger\mid \hat L \mid V\right);\\
    && V = \left\{\delta\lambda, \delta \alpha\right\};\\
    && Q(\Delta,\tau) = \exp{-\oh \tr \left( \frac{\log \hat L}{\hat L^\alpha}\right)}_{\alpha\to 0};\\
    && \tau = y \sqrt{X};
\end{eqnarray}
The fluctuation correction reduces to the inverse operator $\hat L$, which we compute in the next section.
\pctPDF{LPlot}{Log plot of $L(\Delta) =A_c\left(r_0(\Delta), \Delta\right)$. }

Now, we can  reduce multiple sum/integral in \eqref{wwK} to the following
    \begin{eqnarray}
\label{finalCorr}
    &&\VEV{\vec\omega(\vec 0) \cdot \vec \omega(\vec{k})} = \frac{\tilde{\nu}^{\sfrac{3}{2}}H\left(k \sqrt{\tilde{\nu} t}\right)}{\sqrt{t}};\\
    &&H(\kappa) = \frac{1}{\mathcal Z}\sum_{n=1}^\infty\varphi(n) \int_0^{1/n} d \tau \left(\tau^5/n^5- \tau^{10}\right)\nonumber\\
    &&\int_0^1 d \Delta (1-\Delta) G\left(\Delta,\tau,\kappa\right);\\
&&G\left(\Delta,\tau,\kappa\right) = \exp{ -  \frac{\tau \kappa L(\Delta)}{2\pi|S(\Delta)|}}\nonumber\\
&&Q\left(\Delta,\tau\right)\left(\frac{\kappa}{N \tau}\frac{J\left(\Delta\right)}{|S(\Delta)|}+ \VEV{\delta\alpha'(\xi_1)\delta\alpha'(\xi_2)}\right)
\end{eqnarray}
where $\mathcal Z$ is the normalization constant to be determined later.

\section{Functional determinant in the  path integral}\label{FuncDet}

As we have discussed in the previous section,
in the limit $a \to 0$ the classical solution $\alpha_{1,2}(\xi) \propto a \to 0$. 

This observation simplifies the linearized theory 
corresponding to this quadratic form  $\VEV{V\mid\hat L\mid V}$.
First, integrate the fluctuations $\delta \lambda$ of $\lambda $ around the saddle point solution.

The Lagrange multiplier at the saddle point vanishes, as we show in \cite{MB41}
\begin{eqnarray}
    \lambda_0 = \tau r_0(\Delta) I(r_0(\Delta))=0 
\end{eqnarray}
The quadratic term comes from the first derivatives $I_\lambda = \partial_\lambda I, I_r = \partial_r I, \lambda_r = \partial_r \lambda$ , which can be simplified by switching to $\lambda(r) = \tau t I(r)$
\begin{eqnarray}
    A_{\lambda \lambda} = \tau I_\lambda = \frac{\tau I_r}{\lambda_r} = \frac{\tau I_r}{\tau r I_r} = \frac{1}{r}
\end{eqnarray}

The bilinear term $\lambda \delta\alpha $ also simplifies
\begin{eqnarray}
    &&A_{\alpha \lambda}(\delta \alpha) = \imath \tau \delta \lambda F[\delta \alpha];\\
    &&F[\delta \alpha] = \int_{\xi_1}^{\xi_2} \frac{d \xi \delta \alpha(\xi)} { \xi_2-\xi_1} -
\int_{\xi_2}^{1+\xi_1} \frac{d \xi \delta \alpha(\xi) }{1+ \xi_1-\xi_2}
\end{eqnarray}

We can integrate out $\lambda$, producing the extra pre-exponential factor $ Q_\lambda = \sqrt{|r_0(\Delta)|} /\sqrt{N}$.

The bilinear term  in the exponential after $\lambda $ integration leads to the following effective quadratic Action for $\delta \alpha$
\begin{eqnarray}
\label{Aeff}
   && A_{eff}[\delta \alpha] =\frac{\tau^2}{2} \int_{\xi_1}^{1+ \xi_1}\delta \alpha'^2 + \frac{r_0(\Delta) \tau^2}{2} F[\delta\alpha]^2;
\end{eqnarray}
There is a zero-mode $\delta\alpha(\xi) = \const{}$, related to translational invariance of $A_{eff}[\delta \alpha]$. Naturally, this zero-mode must be eliminated from the spectrum when we compute the functional determinant and the resolvent below.

After discarding the zero-mode, this effective action becomes a positive definite functional of $\delta\alpha$ only in the region of $\Delta$ where $r_0(\Delta) >0$, i.e., for $\Delta_1 < \Delta < \Delta_2$.

As we shall see below, the spectrum of fluctuations is positive only in this region. Therefore, we restrict our integration to this region.

The $(\delta \alpha)^2$ term corresponds to the linear eigenvalue equation with $f_{1,2} = \delta \alpha_{1,2}$
\begin{eqnarray}
  && f_{1,2}''(\xi) - \mu_{1,2} F[f] =-\omega f_{1,2}(\xi) ;\\
  && \mu_{1,2} = \left\{ \frac{r}{\Delta}, \frac{r}{\Delta-1}\right\};\\
  && \epsilon = \omega \tau^2;\\
  && r = r_0(\Delta);
\end{eqnarray}
The solution matching with first derivative at $\xi = \xi_2, \xi = (\xi_1,1+ \xi_1)$ 
is built the same way as in \eqref{bsol}. Equations for $f_{1,2}$ being linear homogeneous, we can fix the normalization as $F[f] =1$,

    \begin{eqnarray}
    &&f_1(x)=a \sin (\sqrt{\omega} (\xi_-\xi_2))+ \nonumber\\
    &&B_1 \cos (\sqrt{\omega} (\xi_-\xi_2))+\frac{r}{ \omega \Delta };\\
    &&f_2(\xi)=a \sin (\sqrt{\omega} (\xi_-\xi_2))+ \nonumber\\
    &&B_2 \cos (\sqrt{\omega} (\xi_-\xi_2))+\frac{r}{\omega (\Delta -1)};
\end{eqnarray}

The spectrum $\omega = \omega_n$ is defined by the transcendental equation (the discriminant of this linear system of equations), which we found in \cite{MB41}
    \begin{eqnarray}
\label{spectrumEq}
    && f(\omega_n, \Delta) =0;\\
    && f(\omega, \Delta) = \nonumber\\
    &&(\Delta -1) \Delta  \sqrt{\omega } \sin \left(\frac{\sqrt{\omega }}{2}\right) ((\Delta -1) \Delta  \omega +r)+\nonumber\\
    &&r \cos \left(\frac{1}{2} (1-2 \Delta ) \sqrt{\omega }\right)-r \cos \left(\frac{\sqrt{\omega }}{2}\right);\\
    && r = r_0(\Delta);
\end{eqnarray}
The spectrum is positive in the  interval $\Delta_1 < \Delta < \Delta_2$  where $r_0(\Delta) >0$ so that the solution for $\omega_n(\Delta)$ is stable.
In the following, we only select the stable region with positive $r_0(\Delta)$

\pctPDF{SpectrumPlot}{The first levels of the spectrum satisfying equation $f(\omega, \Delta) =0$. The colored lines correspond to four phases. Red: $ 0 < \Delta < \Delta_1$, Green: $ \Delta_1 < \Delta < \Delta_2$, Blue: $ \Delta_2 < \Delta < \oh$,
Brown: $ \oh < \Delta < 1$. The green zone is left as stable, and others are eliminated because $r_0(\Delta) <0$ in these zones.
Naturally, we eliminate the zero-mode $\epsilon_0=0$ corresponding to translational invariance of the effective Action.}

The functional determinant, resulting from the WKB approximation to the $\alpha$ path integral, would be related to the infinite product of positive eigenvalues $\epsilon_n = \tau^2 \omega_n$, which can be written using  a contour integral 
    \begin{eqnarray}
\label{DetAlpha}
    &&Q_\alpha\left(\Delta,\tau\right) =\prod_{\omega_n > 0} (\tau^2\omega_n)^{-\oh} =\nonumber\\
    &&\left .\exp{ \oh \partial_x \Im\oint_\Gamma \frac{f'(\omega)}{f(\omega)}\frac{d \omega }{2 \pi (\omega\tau^2)^{x}}} \right|_{x\to 0};
 \end{eqnarray}
and the integration contour $\Gamma$ encircles anticlockwise the positive real poles of the meromorphic function $f'(\omega)/f(\omega)$. 
The integral converges at $x> \oh$ and should be analytically continued to $x =0$.

For this purpose, let us introduce another function
\begin{eqnarray}
    \Phi(\omega) =  \frac{f(\omega)}{\cos(\sqrt{\omega}/2)} \omega^{-3/2}
\end{eqnarray}
We show in \cite{MB41} that at large $\omega = \imath y$ this function reaches finite limits
\begin{eqnarray}
   &&\Phi(\imath y) \to  \imath \sign{y} (\Delta -1)^2 \Delta ^2+\frac{(\Delta -1) \Delta  r}{|y| }
\end{eqnarray}

The logarithmic derivative of the original function differs from $\frac{\Phi'(\omega)}{\Phi(\omega)} $ by the following meromorphic function 
\begin{eqnarray}
   && \frac{f'(\omega)}{f(\omega)} -\frac{\Phi'(\omega)}{\Phi(\omega)} = -\frac{\tan \left(\frac{\sqrt{\omega }}{2}\right)}{4 \sqrt{\omega }} + \frac{3}{2 \omega}
\end{eqnarray}
This difference produces a calculable contribution to our integral.
By summing residues of the poles of the tangent, we get
    \begin{eqnarray}
    &&\oint_\Gamma \left(-\frac{\tan \left(\frac{\sqrt{\omega }}{2}\right)}{4 \sqrt{\omega }} + \frac{3}{2 \omega}\right)
    \frac{d \omega }{2 \pi (\omega\tau^2)^{x}} = \nonumber\\
    &&\imath\left(1-2^{2 x }\right) (2 \pi \tau)^{-2 x } \zeta (2 x )
\end{eqnarray}
The derivative at $x =0$ yields a constant
\begin{eqnarray}
  \oh   \partial_x \Im \imath\left(1-2^{2 x }\right) (2 \pi \tau)^{-2 x } \zeta (2 x )\to \frac{\log (2)}{2}
\end{eqnarray}
leading to an irrelevant renormalization of $Q_\alpha(\Delta,\tau) $ by a factor $\sqrt{2}$.

The remaining integral with $ f(\omega) \Rightarrow \Phi(\omega) $ already converges at $\Re x  > -1$, so that we can set $x =0$ and rotate the integration contour $\Gamma$ parallel to the imaginary axis at $\Re \Gamma = \epsilon >0$:
\begin{eqnarray}
\label{RegDet}
&&Q_\alpha\left(\Delta,\tau\right)= \nonumber\\
    && \exp{\oh \Im\int_{\eps - \imath\infty}^{\eps + \imath\infty} \frac{\Phi'(\omega)}{\Phi(\omega)}  \frac{\log (\omega \tau^2) \,d \omega }{2 \pi }}
\end{eqnarray}
The remarkable property of this functional determinant is the factorization of the $\tau$ dependence
\begin{eqnarray}
    &&Q_\alpha(\Delta,\tau)= \tau^{\mu(\Delta)} Q_\alpha(\Delta,1);\\
    \label{muDelta}
    && \mu(\Delta) =  \Im\int_{\epsilon-\imath\infty}^{\epsilon +\imath\infty}  \frac{\Phi'(\omega)}{\Phi(\omega)}   \frac{ \,d \omega }{2 \pi }
\end{eqnarray}

The index $\mu(\Delta)$ has a topological origin and can be computed analytically.

\begin{eqnarray}
    \mu(\Delta) =  \frac{\arg{\Phi(\imath\infty)} - \arg{\Phi(-\imath\infty)}}{2 \pi} = \oh
\end{eqnarray}
Our result for the correlation function is given by \eqref{finalCorr} with
\begin{eqnarray}
    Q\left(\Delta,\tau\right) = Q_\alpha\left(\Delta,1\right) \tau^{\oh} \sqrt{r_0(\Delta)}
\end{eqnarray}
and $Q_\alpha(\Delta,1)$ given by \eqref{RegDet}. All the constant factors we have omitted here are absorbed by the normalization factor $\mathcal Z$, which we determine at the end of the next section.

\section{The fluctuation term in $\alpha'(\xi_1) \alpha'(\xi_2)$}\label{FlucTerm}

The last missing term is the fluctuation contribution to $\alpha'(\xi_1) \alpha'(\xi_2)$. In the Gaussian approximation, valid at $N \to \infty$, this term equals
\begin{eqnarray}
\label{alpaprimG}
    \VEV{\delta\alpha'(\xi_1)\delta\alpha'(\xi_2)} = \frac{-1}{N \tau^2} \left[\partial_\xi \partial_{\xi'} G(\xi,\xi')\right]_{\xi= \xi_1, \xi'= \xi_2}
\end{eqnarray}
where $G(\xi,\xi')$ is a resolvent for the effective quadratic Action \eqref{Aeff}. This resolvent satisfies the equation
    \begin{eqnarray}
   && \partial^2_\xi G(\xi,\xi') - \mu(\xi) F[G] = \delta(\xi-\xi');\\
   && G(\xi_1, \xi') = G(\xi_1+1, \xi') = 0;\\
   && F[G] = \int_{\xi_1}^{\xi_2} \frac{d \xi G(\xi,\xi')} { \xi_2-\xi_1} -\nonumber\\
    &&
\int_{\xi_2}^{1+\xi_1} \frac{d \xi G(\xi,\xi') }{1+ \xi_1-\xi_2};\\
    &&\mu( \xi) = 
    \begin{cases}
        \frac{r}{\Delta} & \text{ if } \xi_1 \le \xi < \xi_2\\
        \frac{r}{\Delta-1} & \text{ if } \xi_2 \le\xi < 1+\xi_1\\
    \end{cases}
\end{eqnarray}
The solution of this equation, matching with the first derivative at $\xi = \xi_2$ is 
    \begin{eqnarray}
    &&G(\xi, \xi')_{\xi_1 \le \xi < \xi_2} =
        A+\frac{| \xi-\xi'| }{2}+B (\xi-\xi_2)+\nonumber\\
    &&\frac{F[G] r (\xi-\xi_2)^2}{2\Delta}\\
    &&G(\xi, \xi')_{\xi_2 \le \xi < 1+\xi_1} =
        A+\frac{| \xi-\xi'| }{2}+B (\xi-\xi_2)+\nonumber\\
    &&\frac{F[G] r (\xi-\xi_2)^2}{2(\Delta-1)};
\end{eqnarray}
The linear functional $F[G]$ on this solution becomes a linear function of these unknown parameters $A, B$.
Two boundary conditions $G(\xi_1, \xi') = G(\xi_1+1, \xi') = 0$ fix these parameters as functions of $\xi_1, \xi_2, \xi'$.

The result derived in \cite{MB41} is too lengthy to present here. The desired quantity
\eqref{alpaprimG} is quite simple
\begin{eqnarray}
    \frac{1}{N \tau^2} \left[\partial_\xi \partial_{\xi'} G(\xi,\xi')\right]_{\xi= \xi_1, \xi'= \xi_2} = \frac{2 (r-6 )}{(r+12)N \tau^2}
\end{eqnarray}

Finally, we get the following correlation \eqref{finalCorr} (absorbing the constant factors in $\mathcal Z$)
\begin{subequations}\label{answer}
      \begin{eqnarray}
&&\nu\VEV{\vec\omega(\vec 0) \cdot \vec \omega(\vec{k})} = \frac{\tilde{\nu}^{\sfrac{3}{2}}H\left(k \sqrt{\tilde{\nu} t}\right)}{\sqrt{t}};\\
    &&H(\kappa) =\int_{\Delta_1}^{\Delta_2} d \Delta (1-\Delta)\displaystyle\sum_{n=1}^\infty\varphi(n) \nonumber\\
    &&\int_0^{1/n} \frac{d \tau }{\tau^{\frac{5}{2}}}\left(\tau^5/n^5- \tau^{10}\right) G\left(\Delta,\tau\kappa\right);\\
    &&G\left(\Delta,x\right) = \frac{Q_\alpha\left(\Delta,1\right)  \sqrt{r_0(\Delta)}}{\mathcal Z}\nonumber\\
    &&\left(x \frac{J\left(\Delta\right)}{S(\Delta)}+ \frac{2 (r_0(\Delta)-6)}{(12+r_0(\Delta))}\right)\exp{ -  \frac{x L(\Delta)}{2\pi S(\Delta) }}
\end{eqnarray}
\end{subequations}

\section{Mellin integral for the energy spectrum and energy decay}
\label{RiemanZeros}
The spectral function $H(\kappa)$ can be computed as follows.
Let us represent the theta function as inverse Mellin transform (at $n>0, \tau >0$)
\begin{eqnarray}
    \theta(1-n \tau) = 
    \int\displaylimits_{+0 -\imath\8}^{+0 +\imath \8} \frac{d p}{2 \pi \I p} n^{-p} \tau^{-p}
\end{eqnarray}
Substituting this representation into our integral in \eqref{answer}, and interchanging summations/integrations, we find

\begin{eqnarray}
    &&H(\kappa) =\int_{\Delta_1}^{\Delta_2} d \Delta (1-\Delta)\int\displaylimits_{+0 -\imath\8}^{+0 +\imath \8}  \frac{d p}{2 \pi \I p} \sum_{n=1}^\infty \frac{\varphi(n)}{n^p}\nonumber\\
&&\int_0^{\infty }\, d t \left(\frac{t^{5/2}}{n^5}-t^{15/2}\right) t^{-p} (A+B t \kappa ) e^{-C t \kappa }   ;\\
    && A =\frac{Q_\alpha\left(\Delta,1\right)  \sqrt{r_0(\Delta)}}{\mathcal Z}\frac{2 (r_0(\Delta)-6)}{(12+r_0(\Delta))};\\
    && B = \frac{Q_\alpha\left(\Delta,1\right)  \sqrt{r_0(\Delta)}}{\mathcal Z} \frac{J\left(\Delta\right)}{S(\Delta)};\\
    && C = \frac{ L(\Delta)}{2\pi S(\Delta) };
\end{eqnarray}

The last integral reduces to Gamma functions and powers of $n$, after which the sum over $n$ reduces to the ratio of two zeta functions

\begin{eqnarray}
 && H(\kappa)=\int_{\Delta_1}^{\Delta_2} d \Delta (1-\Delta)\int_{-2-\I \infty}^{-2 + \I \infty} \frac{d z\, \kappa^z}{2 \pi \I}\nonumber\\
&&\frac{20 C^{z-1} (A C-B z)\zeta \left(z+\frac{15}{2}\right) \Gamma (-z)}{(2 z+7) (2 z+17) \zeta \left(z+\frac{17}{2}\right)}
\end{eqnarray}

According to the Riemann hypothesis, the function $\zeta(w), w= z+\frac{17}{2}$ in the denominator has trivial zeros at negative even points $z =-2,-4,\dots$ and nontrivial zeros at the critical line $\Re w = \oh$.

There is also a contribution from the poles at $z+ \frac{15}{2} =1, (2 z+7)=0,  (2 z+17)  =0$.

We are not going to sum the residues in these poles, but rather integrate along the line $ z = -2 + \I y$, where the integrand exponentially decreases as $ \exp{- \frac{\pi |y|}{2}}$ in both directions.

\pctPDF{DissipationPlot}{The log-log plot energy dissipation $ t^2 \mathcal E(t)$ as a function of $t$ has a regime change due to quantum effects. It starts as constant at small t and asymptotically decays as $t^2 \mathcal E \propto t^{-\sfrac{1}{4}}$.}
There are oscillations on top of this exponential decrease. The integrals can be computed with arbitrary accuracy in \Mathematica{} using the "DoubleExponentialOscillatory" integration method.
Each integration takes a fraction of a second. The remaining integral over $\Delta$ was performed by computing the table of (positive) values of integrand $I(\Delta)$ on a grid with step $\delta \Delta  = \frac{\Delta_2 - \Delta_1}{100}$, and interpolating $\log(I(\Delta))$ between these values by fifth order polynomial $P_5(\Delta)$.
Integral of the exponential of this polynomial $\int d \Delta(1-\Delta) \exp{P_5(\Delta)}$
yields good enough accuracy ( floating precision $10^{-8}$).

This process was repeated for all M =1000 values of $\kappa = 0.1 (1000/0.1)^{n/M}, n =0, \dots, M $, corresponding to equidistant log grid.
This table was computed in parallel in \Mathematica{} in \cite{MB44}, which took just a few minutes.
Here is the resulting function  $ H(\kappa) \kappa^{\sfrac{5}{3}}$ (see Fig.\ref{fig::Hk53}). It shows strong deviations from K41 spectrum, in agreement with the DNS at Fig.\ref{fig::DecayingSpectrum}.

The asymptotic behavior of this function is
\begin{eqnarray}
    H(\kappa) \propto \kappa^{-\sfrac{7}{2}}
\end{eqnarray}
The effective index $ \mu(\kappa) =  \dbyd{\log H(\kappa)}{\log\kappa}$ slowly decreases reaching $\mu(\infty) = -\sfrac{7}{2}$, as shown on Fig.\ref{fig::MuIndex}.

The function $F(\kappa)= \int_{\kappa}^\infty x^2 H(x) \, d x$ was obtained by numerical integration of exponential of interpolated function as before, with asymptotic tail $ H(\kappa) \to const{} \,\kappa^{-\sfrac{7}{2}} $ integrated analytically.

The log-log plot of the energy dissipation as a function of time is shown in Fig. \ref{fig::DissipationPlot}, and it is curved in the log-log scale due to the quantum effects (complex poles of Mellin transform at Riemann $\zeta$ function zeros).

\end{document}